\tikzstyle{hyb}=[rectangle,draw,minimum size=3mm]
\tikzstyle{tre}=[circle,draw,minimum size=3.5mm]
\newcommand{\etq}[1]{%
\draw (#1) node {\footnotesize $#1$};
}
\newtheorem{theorem}{Theorem}
\newtheorem{lemma}{Lemma}
\newtheorem{example}{Example}
\newtheorem{corollary}{Corollary}
\newtheorem{proposition}{Proposition}
\renewcommand{\leq}{\leqslant}
\renewcommand{\geq}{\geqslant}
\newcommand{\NN}{\mathbb{N}}
\newcommand{\TT}{\mathcal{T}}
\newcommand{\WTT}{\mathcal{WT}}
\newcommand{\UTT}{\mathcal{UT}}
\newcommand{\BTT}{\mathcal{BT}}
\newcommand{\BT}{\mathcal{BT}}
\newcommand{\RRp}{\mathbb{R}_{>0}}
\newcommand{\RR}{\mathbb{R}}
\def\fT{\varphi_T}
\def\fTp{\varphi_{T'}}
\newenvironment{bmcformat}{\baselineskip20pt\sloppy\setboolean{publ}{false}}{\baselineskip20pt\sloppy}
\begin{document}
\begin{bmcformat}

\title{Cophenetic metrics for phylogenetic trees, after Sokal and Rohlf}

\author{Gabriel Cardona, \email{Gabriel Cardona - gabriel.cardona@uib.es}
Arnau Mir, \email{Arnau Mir - arnau.mir@uib.es}
Francesc Rossell\'o\correspondingauthor,   \email{Francesc Rossell\'o - cesc.rossello@uib.es}
Luc\'\i a Rotger, \email{Luc\'\i a Rotger - lucia.rotger@uib.es}
 and
David S\'anchez \email{David S\'anchez - dscharles@gmail.com}%
}

\address{Department of Mathematics and Computer Science, University of the Balearic Islands, E-07122 Palma de Mallorca, Spain} 

\maketitle

\begin{abstract}
\paragraph*{Background:}  Phylogenetic tree comparison metrics are an important tool in the study of evolution, and hence the definition of such metrics is an interesting problem in phylogenetics.  In a paper in Taxon fifty years ago, Sokal and Rohlf proposed to measure quantitatively the difference between a pair of phylogenetic trees by first encoding them by means of their half-matrices of cophenetic values, and then comparing these matrices. This idea has been used several times since then to define dissimilarity measures between phylogenetic trees but, to our knowledge, no proper metric on weighted phylogenetic trees with nested taxa 
based on this idea has been formally  defined and studied yet. Actually, the cophenetic values of pairs of different taxa alone are not enough to single out phylogenetic trees with weighted arcs or nested taxa.
\paragraph*{Results:}  For every (rooted) phylogenetic tree $T$, let its \emph{cophenetic vector} $\varphi(T)$ consist of all pairs of cophenetic values between pairs of taxa in $T$ and all depths of taxa in $T$. It turns out that these cophenetic vectors  single out weighted phylogenetic trees with nested taxa. We then define a family of cophenetic metrics $d_{\varphi,p}$ by comparing these cophenetic vectors by means of $L^p$ norms, and we study, either analytically or numerically, some of their basic properties: neighbors, diameter, distribution, and their rank correlation with each other and with other metrics. 
\paragraph*{Conclusions:}  The cophenetic metrics can be safely used on weighted phylogenetic trees with nested taxa and no restriction on degrees, and they can be computed in $O(n^2)$ time, where $n$ stands for the number of taxa. The metrics $d_{\varphi,1}$ and $d_{\varphi,2}$ have positive skewed distributions, and they show a low rank correlation with the Robinson-Foulds metric and the nodal metrics, and a very high correlation with each other and with the splitted nodal metrics. The diameter of $d_{\varphi,p}$, for $p\geq 1$, is in $O(n^{(p+2)/p})$, and thus for low $p$ they are more discriminative, having a wider range of values.

\end{abstract}

\ifthenelse{\boolean{publ}}{\begin{multicols}{2}}{}

\section*{Background}
Many phylogenetic trees published in the literature or included in phylogenetic databases  are actually  alternative phylogenies for the same sets of organisms, obtained from different datasets or using different evolutionary models or different phylogenetic reconstruction algorithms  \cite{hoef:05}.   This variety of phylogenetic trees makes it necessary to develop methods for measuring their differences  \cite[Chapter~30]{fel:04}.
 The comparison of phylogenetic trees is also used to compare phylogenetic trees obtained through numerical algorithms with other types of hierarchical classifications \cite{RS:81,sokal.roth:62}, to assess the stability of reconstruction methods \cite{willcliff:taxon71}, 
and in the comparative analysis of dendrograms and other hierarchical cluster structures~\cite{handl.ea 2005,restrepo.ea:2007}. Hence, and since the safest way to quantify the  differences between a pair of trees
is through a metric,  ``tree comparison metrics are an important tool in the study of evolution'' \cite{steelpenny:sb93}.

Many metrics for the comparison of phylogenetic trees  have been proposed so far \cite[Chapter~30]{fel:04}. Some of these metrics are edit distances that count how many  operations of a given type are necessary to transform one tree into the other. These metrics include the nearest-neighbor interchange metric~\cite{waterman.smith:1978} and the subtree prune-and-regrafting distance~\cite{allen.steel:2001}. Other metrics compare a pair of phylogenetic trees through some consensus subtree. This is the case for instance of the MAST distances defined in \cite{MAST1,MAST2,MAST3}. Finally, many metrics for phylogenetic trees are based on the comparison of encodings of the phylogenetic trees, like for instance  the Robinson-Foulds metric~\cite{robinson.foulds:1979,robinson.foulds:mb81} (which can also be understood as an edit distance),  the triples metric \cite{critchlow.ea:1996}, the classical nodal metrics for binary phylogenetic trees \cite{farris:sz69,farris:sz73,phipps:sz71,steelpenny:sb93,willcliff:taxon71}, and the splitted nodal metrics for arbitrary phylogenetic trees \cite{cardona.ea:08a}. The advantage of this last kind of metrics is that, unlike the edit and the consensus distances, they are usually computed in low polynomial time.

In an already fifty years old paper \cite{sokal.roth:62}, Sokal and Rohlf proposed a technique to compare dendrograms (which, in their paper, were equivalent to weighted phylogenetic trees without nested taxa)
on the same set of taxa, by encoding them by means  of  their half-matrices of cophenetic values, and then comparing these structures. Their method runs as follows. To begin with, they divide the range of depths of internal nodes in the tree into a suitable number of equal intervals and number increasingly these intervals. Then, for each pair of taxa $i,j$ in the tree, they compute their \emph{cophenetic value} as the class mark of the interval where the depth of their lowest common ancestor lies. Then, to compare two phylogenetic trees, they compare their corresponding half-matrices of cophenetic values. In that paper, they do it specifically by calculating a correlation coefficient between their entries. Sokal and Rohlf's paper \cite{sokal.roth:62} is quite cited (612 cites according to Google Scholar on July 1, 2012) and their method has been often used to compare hierarchical classifications (see, for instance, \cite{appl1,appl2,appl3}). 

Since Sokal and Rohlf's paper, other papers have compared the half-matrices of cophenetic values to define dissimilarity measures between phylogenetic trees (see, for instance, \cite{Hart67,RS:81}), and such half-matrices  have also been used in the so-called ``comparative method'', the statistical methods used to make inferences on the evolution of a trait among species from the distribution of other traits: see \cite{HP91,P99} and \cite[Chapter~25]{fel:04}.
But, to our knowledge, no proper metric  for phylogenetic trees based on cophenetic values has been formally  defined and studied in the literature. In this paper we define a new family of metrics for weighted phylogenetic trees with nested taxa based on Sokal and Rohlf's idea and we study some of their basic properties: neighbors, diameter, distribution, and their rank correlation with each other and with other metrics. 

Our approach differs in some minors points with Sokal and Rohlf's. For instance, 
we use as the cophenetic value $\varphi(i,j)$ of a pair of taxa $i,j$ the  actual depth of the lowest common ancestor of $i$ and $j$, instead of class marks, which was done by Sokal and Rohlf because of practical limitations. Moreover, instead of using a correlation coefficient, we define metrics by using  $L^p$ norms. Finally, we do not restrict ourselves to dendrograms, without internal labeled nodes, but we also allow nested taxa. 

There is, however, a main difference between our approach and  Sokal and Rohlf's. We do not only consider the cophenetic values of pairs of taxa, but also the depths of the taxa. We must do so because we want to define a metric, where zero distance  means isomorphism, and  the cophenetic values of pairs of different taxa alone do not single out even the dendrograms considered by Sokal and Rohlf. That is, two non isomorphic weighted phylogenetic trees without nested taxa on the same set of taxa can have the same vectors of cophenetic values; see Fig.~\ref{fig:nonuniq}. 

It turns out that the \emph{cophenetic vector} consisting of all cophenetic values of pairs of taxa and the depths of all taxa characterizes a weighted phylogenetic tree with nested taxa. This fact comes from the well known relationship between cophenetic values and patristic distances.  If we denote by $\delta(i)$ the depth of a taxon $i$, by $\varphi(i,j)$ the cophenetic value of a pair of taxa $i,j$ and by $d(i,j)$ the distance  between $i$ and $j$, then  \cite{FKE}
$$
d(i,j)=\delta(i)+\delta(j)-2\varphi(i,j).
$$
So, if the depths of the taxa are known, the knowledge of the cophenetic values of pairs of taxa is equivalent to the knowledge of the additive distance defined by the tree. On their turn, the depths and the additive distance single out the unrooted semi-labelled weighted tree associated to the phylogenetic tree with the former root labeled with a specific label ``root'', and hence the phylogenetic tree itself: cf. Theorem \ref{th:char}.  

The fact that cophenetic vectors single out weighted phylogenetic trees with nested taxa can also be deduced from their relationship with splitted path lengths \cite{cardona.ea:08a}. Recall that the splitted path length $\ell(i,j)$ is the distance from the lowest common ancestor of $i$ and $j$ to $i$. It is known \cite[Thm. 10]{cardona.ea:08a} that the matrix $\big(\ell(i,j)\big)_{i,j}$ characterizes a   weighted phylogenetic tree with nested taxa. Since, obviously,
$$
\ell(i,j)=\delta(i)-\varphi(i,j),
$$
the cophenetic vector uniquely determines the matrix of splitted path lengths, and hence the tree.\footnote{There are some details to be filled here, because for technical reasons we shall allow the root of our phylogenetic trees to have out-degree 1 without being labeled, and this case is not covered by  \cite[Thm. 10]{cardona.ea:08a}, but it is not difficult to modify the argument given above to cover also this case.}

The vector of cophenetic values of pairs of different taxa is also related to the notion of ultrametric \cite{John67,SS:73}. Indeed,  notice that $-\varphi$ satisfies the three-point condition of ultrametrics: for every taxa $i,j,k$,
$$
-\varphi(i,j)\leq \mathrm{min}\{-\varphi(i,k),-\varphi(j,k)\}.
$$
But $-\varphi$ is not an ultrametric, as $\varphi(i,i)=\delta(i)\neq 0$. Actually, $\varphi$ can only be used to define an ultrametric precisely on ultrametric trees, where the depths of all leaves are the same, say $\Delta$. In this case, $\Delta-\varphi$ is the ultrametric defined by the tree. In particular, ultrametric trees can be compared by comparing their vectors of cophenetic values of pairs of different taxa. A similar idea is used in \cite{XAF} to induce an average genetic distance between populations from the average coancestry coefficient.

We would like to dedicate this paper to the memory of Robert R. Sokal, father of the field of numerical taxonomy and who passed away last April. His ideas  permeate biostatistics and computational phylogenetics.

\subsection*{Notations}

A \emph{rooted tree} is a directed finite graph that contains a distinguished node, called the \emph{root}, from which every node can be reached through exactly one path.  A \emph{weighted rooted tree} is a pair $(T, \omega)$ consisting of a rooted tree $T=(V,E)$ and a \emph{weight function} $\omega: E\to \RRp$ that associates to every arc $e\in E$ a non-negative real number $\omega(e)>0$.  We identify every \emph{unweighted} (that is, where no weight function has been explicitly defined) rooted tree $T$ with the weighted rooted tree $(T,\omega)$ with $\omega$ the weight 1 constant function.

Let $T=(V,E)$ be a rooted tree.  Whenever $(u,v)\in E$, we say that $v$ is a \emph{child} of $u$ and that $u$ is the \emph{parent} of $v$. Two nodes with the same parent are \emph{siblings}. The nodes without children are the \emph{leaves} of the tree, and the other nodes (including the root) are called \emph{internal}. A \emph{pendant arc} is an arc ending in a leaf.  The nodes with exactly one child are called \emph{elementary}. A tree is \emph{binary}, or \emph{fully resolved}, when every internal node has exactly two children.

Whenever there exists  a path from a node $u$ to a node $v$, we shall say that $v$ is a  \emph{descendant} of $u$ and also that $u$ is an  \emph{ancestor} of $v$, and we shall denote it by $v\preceq u$; if, moreover, $u\neq v$, we shall write $v\prec u$.  The \emph{lowest common ancestor} (LCA) of a pair of nodes $u,v$ of a rooted tree $T$, in symbols $[u,v]_T$, is the unique common ancestor of them that is a descendant of every other common ancestor of them.    Given a node $v$ of a rooted tree $T$, the \emph{subtree of $T$ rooted at $v$} is the subgraph of $T$ induced on the set of descendants of $v$ (including $v$ itself).  A rooted subtree is a \emph{cherry} when it has 2 leaves,   
a \emph{triplet}, when it has 3 leaves, and   a \emph{quartet},  when it has 4 leaves.

The \emph{distance} from a node $u$ to a descendant $v$ of it  in a weighted rooted tree $T$ is the sum of the
weights of the arcs in the unique  path from $u$ to $v$.  In an unweighted rooted tree, this distance is simply the number of arcs in this path. The \emph{depth}  of a node $v$,  in symbols $\delta_T(v)$, is the distance from the root to $v$.

Let $S$ be a non-empty finite set of  \emph{labels}, or \emph{taxa}.  A (\emph{weighted}) \emph{phylogenetic tree} on $S$ is a (weighted) rooted tree with some of its nodes bijectively labeled in the set $S$, including all its leaves and all its elementary nodes except possibly the root (which can be elementary but unlabeled).  The reasons why we allow  unlabeled elementary roots are that our results are still valid for phylogenetic trees containing them,  and  that even if we forbid them, we would need  in some proofs to use that  Theorem \ref{th:char} below is true for phylogenetic trees containing them. Moreover, it is not uncommon to add an unlabeled elementary root to a phylogenetic tree in some contexts: see, for instance,  the phylogenetic trees depicted in Wikipedia's entry  ``Phylogenetic tree''.\footnote{\url{http://en.wikipedia.org/wiki/Phylogenetic_tree}}

In  a phylogenetic  tree, we shall always identify  a labeled node with its taxon. The internal labeled nodes of a phylogenetic tree are called \emph{nested taxa}. Notice in particular that a phylogenetic tree without nested taxa cannot have elementary nodes other than the root. Although in practice $S$ may be any set of taxa, to fix ideas  we shall usually take $S=\{1,\ldots,n\}$, with $n$ the number of labeled nodes of the tree, and we shall use the term \emph{phylogenetic tree with $n$ taxa} to refer to a phylogenetic tree on this set. In general, we shall denote by $L(T)$ the set of taxa of a phylogenetic tree $T$.

Given a set $S$ of taxa, we shall consider the following spaces of phylogenetic trees:
\begin{itemize}
\item $\WTT(S)$, of all weighted phylogenetic trees on $S$
\item $\UTT(S)$, of all unweighted phylogenetic trees on $S$
\item $\TT(S)$, of all unweighted phylogenetic trees on $S$ without nested taxa
\item $\BT(S)$, of all binary unweighted phylogenetic trees  on $S$ without nested taxa
\end{itemize}
When $S=\{1,\ldots,n\}$, we shall simply write $\WTT_n$, $\UTT_n$,  $\TT_n$, and  $\BT_n$, respectively.

Two phylogenetic trees $T$ and $T'$ on the same set $S$ of taxa are \emph{isomorphic} when they are isomorphic as directed graphs and the isomorphism sends each labeled node of $T$ to the labeled node with the same label in $T'$. An isomorphism of weighted phylogenetic trees is also required to preserve arc weights.  We shall make the abuse of notation of saying that two isomorphic trees are actually the same, and hence of denoting that two trees $T,T'$ are isomorphic by simply writing $T=T'$.

\section*{Methods}

\subsection*{Cophenetic vectors}

Let $S$ be henceforth a non-empty set of taxa with $|S|=n$, which without any loss of generality we identify with $\{1,\ldots,n\}$. Let $T\in \WTT_n$ be a weighted phylogenetic tree on $S$.  For every pair of different taxa $i,j$ in $T$, their \emph{cophenetic value}   is the depth of their LCA:
$$
\varphi_T(i,j)=\delta_T([i,j]_T).
$$ 
To simplify the notations, we shall often write $\varphi_T(i,i)$ to denote the depth $\delta_T(i)$ of a taxon $i$. 

The \emph{cophenetic vector} of $T$ is
$$
\varphi(T)=\big(\varphi_T(i,j)\big)_{1\leq i\leq j\leq n}\in \RR^{n(n+1)/2},
$$
with its elements lexicographically ordered in $(i,j)$. 

\begin{example}
If $T$ is the unweighted phylogenetic tree  in Fig. \ref{fig:ex1}, then $\varphi(T)$ is the vector obtained by
alphabetically ordering in $(i,j)$ the elements of Table \ref{table:ex}.
\end{example}

\begin{center}
\begin{figure}[htb]
\begin{center}
\begin{tikzpicture}[thick,>=stealth,scale=0.3]
\draw(0,0) node [tre] (1) {};  \etq 1
\draw(4,0) node [tre] (2) {};  \etq 2
\draw(6,0) node [tre] (3) {};  \etq 3
\draw(8,0) node [tre] (4) {};  \etq 4  
\draw(10,0) node [tre] (5) {};  \etq 5  
\draw(12,0) node [tre] (6) {};  \etq 6
\draw(1.5,1.5) node [tre] (7) {};  \etq 7
\draw(3,3) node [tre] (a) {};   
\draw(5,5) node [tre] (b) {};   
\draw(7,3) node [tre] (c) {};   
\draw(7,7) node [tre] (r) {};   
\draw(11,3) node [tre] (d) {};   
\draw (r)--(b);
\draw (r)--(d);
\draw (d)--(5);
\draw (d)--(6);
\draw (b)--(c);
\draw (b)--(a);
\draw (c)--(3);
\draw (c)--(4);
\draw (a)--(7);
\draw (a)--(2);
\draw (7)--(1);
\draw(6,-1.5) node {$T$};
\end{tikzpicture}
\end{center}
\caption{\label{fig:ex1} 
An unweighted phylogenetic tree on 7 taxa.}
\end{figure}
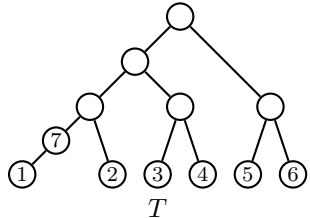
\end{center}
\begin{center}
\begin{table}[htb]
\begin{center}
\begin{tabular}{l|ccccccc}
  ${}_{\textstyle i} \backslash^{\textstyle j}\hspace*{-1ex}$\vphantom{$\int$}& 1 & 2 & 3 & 4 & 5 & 6 & 7\\ \hline
 1 & 4 & 2 & 1 & 1 & 0 & 0 & 3\\
 2 & & 3 & 1 & 1 & 0 & 0 & 2\\
 3 &  & & 3 & 2 & 0 & 0 & 1\\
 4 & & &  & 3 & 0 & 0 & 1\\
 5 & & &  & & 2 & 1 & 0\\
 6 & & &  & &    & 2 & 0\\
 7 & & & &  &    &    & 3
 \end{tabular}
\end{center}
 \caption{\label{table:ex} Cophenetic values of the pairs of taxa in the phylogenetic tree $T$  in Fig. \ref{fig:ex1}.}
\end{table}
\end{center}

The cophenetic vectors single out weighted phylogenetic trees with nested taxa.

\begin{theorem}\label{th:char}
For every  $T,T'\in\WTT(S)$, if $\varphi(T)=\varphi(T')$, then $T= T'$.
\end{theorem}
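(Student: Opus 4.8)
The plan is to prove the statement by showing that $T$ can be \emph{explicitly reconstructed} from its cophenetic vector $\varphi(T)$: I would define a reconstruction procedure $R$, prove that $R(\varphi(T))=T$ for every $T\in\WTT(S)$, and then conclude immediately, since $\varphi(T)=\varphi(T')$ would give $T=R(\varphi(T))=R(\varphi(T'))=T'$. The correctness of $R$ I would establish by induction on the number of nodes of $T$.

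First I would read the depths off the diagonal, $\delta_T(i)=\varphi_T(i,i)$. Since all arc weights are strictly positive, the root is the unique node of depth $0$; hence the root is labeled precisely when there is a (necessarily unique) taxon $i$ with $\varphi_T(i,i)=0$, and this is detectable from $\varphi(T)$. For the remaining taxa I would introduce the relation $i\approx j\iff \varphi_T(i,j)>0$, declared reflexive. Using positivity of the weights again, $\varphi_T(i,j)>0$ holds exactly when $[i,j]_T$ lies strictly below the root, i.e.\ when $i$ and $j$ lie in the same subtree hanging from a child of the root; this shows that $\approx$ is an equivalence relation whose classes are exactly the child-subtrees of the root (with the root taxon, if present, forming a singleton class). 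All of this information is determined by $\varphi(T)$.

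Next, for the class corresponding to a child $a$ of the root I would recover the depth $\delta_T(a)$ and the cophenetic vector of the subtree $T_a$ rooted at $a$. Here $\delta_T(a)=\min_{i\ne j \text{ in the class}}\varphi_T(i,j)$ when the class has at least two taxa: the minimum is attained by any pair whose LCA is $a$, and such a pair exists because a child of the root is either branching or a labeled elementary node (only the root may be unlabeled and elementary), and because every subtree contains a leaf, hence a taxon. For a singleton class $a$ is a leaf and $\delta_T(a)=\varphi_T(a,a)$. Subtracting $\delta_T(a)$ from every entry indexed by the class yields exactly the cophenetic vector of $T_a$ viewed as a phylogenetic tree rooted at $a$, so by the inductive hypothesis each $T_a$ is determined; reattaching the reconstructed subtrees to the root by arcs of weight $\delta_T(a)$ then rebuilds $T$ uniquely.

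The step needing most care, and the main obstacle, is the well-foundedness of the induction in the presence of the unlabeled elementary root permitted by our definition. If the root is unlabeled with a single child $c$, the partition has a single class containing all $n$ taxa, so the number of taxa does not drop; I would avoid this by inducting on the number of \emph{nodes} rather than taxa, noting that $T_a$ always has strictly fewer nodes than $T$, and that after such a stall the new root $c$ is necessarily labeled or branching, so the procedure terminates. A shorter but less self-contained alternative would be to recover the patristic distances via $d_T(i,j)=\varphi_T(i,i)+\varphi_T(j,j)-2\varphi_T(i,j)$, adjoin a distinguished root label at depth $0$, and invoke the reconstruction of a semi-labelled weighted tree from the distances among its labeled nodes; that route, however, requires separately treating the unlabeled-elementary-root case, exactly the subtlety flagged in the footnote above.
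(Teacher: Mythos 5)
Your reconstruction argument is correct, but it follows a genuinely different route from the paper's proof. The paper proceeds by reduction to a known result: it turns $T$ into an undirected weighted $X$-tree $T^*$ with $X=S\cup\{r\}$ by adding a new label $r$ to the root, observes that the cophenetic vector determines all leaf-to-leaf and leaf-to-root distances via $d_{T^*}(i,r)=\delta_T(i)$ and $d_{T^*}(i,j)=\delta_T(i)+\delta_T(j)-2\varphi_T(i,j)$, and then invokes Semple--Steel's theorem that a weighted $X$-tree is singled out by its distance function on $X$; rooting at the node labeled $r$ recovers $T$. You instead give a self-contained, constructive top-down induction: the root taxon (if any) is the unique zero diagonal entry, the relation $\varphi_T(i,j)>0$ partitions the remaining taxa into the child-subtrees of the root, the weight of the arc to a child $a$ is recovered as the minimum cophenetic value within its class (your justification that this minimum is attained --- a non-root child is a leaf, a labeled elementary node, or branching, and every subtree contains a taxon --- is exactly the point where the paper's conventions on nested taxa and elementary nodes matter), and induction on the number of \emph{nodes} correctly absorbs the stall caused by an unlabeled elementary root, where the number of taxa does not decrease. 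What each approach buys: the paper's proof is a few lines modulo the cited theorem, while yours needs no external machinery and doubles as an explicit reconstruction algorithm. One small correction to your closing remark: the patristic-distance route does \emph{not} require separate treatment of the unlabeled elementary root --- adjoining the label $r$ to the root makes it a labeled node of the $X$-tree whatever its degree, so the cited uniqueness theorem applies verbatim (this is precisely the paper's proof); the subtlety flagged in the paper's footnote concerns the alternative derivation through splitted path lengths, not this one.
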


\begin{proof}
Let $r$ be a symbol not belonging to $S$ and let $X=S\cup\{r\}$. Recall that a  \emph{weighted $X$-tree} is an undirected weighted tree $T$ with set of nodes $V$ endowed with a (non necessarily injective) node-labeling mapping $f:X\to V$ such that $f(X)$ contains all  the leaves and all the degree-2 nodes in $T$ \cite{SS}.

For every $T\in\WTT(S)$, let $T^*$ be the weighted $X$-tree obtained by considering $T$ as undirected and  adding to its former root the label $r$. Then, the distance $d_{T^*}$ on $T^*$  between pairs of labels in $X$ is uniquely determined by $\varphi(T)$ in the following way:
$$
\begin{array}{l}
d_{T^*}(i,r)=\delta_T(i)\qquad \mbox{for every $i\in S$}\\
d_{T^*}(i,j)=\delta_T(i)+\delta_T(j)-2\varphi_T(i,j) \qquad \mbox{for every $i,j\in S$}
\end{array}
$$
Now, $T^*$ is singled out by $d_{T^*}$ \cite[Thm. 7.1.8]{SS}. Since $T$ is uniquely determined from $T^*$ and the knowledge of the root (that is the node labeled with $r$), we deduce that $\varphi(T)$ singles out $T$.
\end{proof}

This result implies that the vectors of cophenetic values of \emph{pairs of different taxa} single out  unweighted phylogenetic trees without nested taxa. 

\begin{corollary}\label{cor:char}
For every $T\in \TT_n$, let 
$\widetilde{\varphi}(T)=\big(\varphi_T(i,j)\big)_{1\leq i< j\leq n}\in \RR^{n(n-1)/2}$,
with its elements lexicographically ordered in $(i,j)$.  Then, for every  $T,T'\in\TT_n$, if $\widetilde{\varphi}(T)=\widetilde{\varphi}(T')$, then $T= T'$. 
\end{corollary}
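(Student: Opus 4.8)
The plan is to reduce the statement to Theorem~\ref{th:char} by showing that, for trees in $\TT_n$, the reduced vector $\widetilde\varphi(T)$ already determines the \emph{full} cophenetic vector $\varphi(T)$. Since $\varphi(T)$ differs from $\widetilde\varphi(T)$ only in the diagonal entries $\varphi_T(i,i)=\delta_T(i)$, it suffices to recover each depth $\delta_T(i)$ from the off-diagonal values $\{\varphi_T(i,j):j\neq i\}$. Once this is done, from $\widetilde\varphi(T)=\widetilde\varphi(T')$ I would obtain $\delta_T(i)=\delta_{T'}(i)$ for every $i$, hence $\varphi(T)=\varphi(T')$, and then Theorem~\ref{th:char} (applied to $\TT_n\subseteq\WTT_n$) yields $T=T'$.

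The core observation is the formula
$$
\delta_T(i)=1+\max_{j\neq i}\varphi_T(i,j),
$$
valid for every taxon $i$ as soon as $n\geq 2$. First I would note that, since $T$ has no nested taxa and no elementary nodes other than possibly the root, every taxon $i$ is a leaf and every non-root internal node has at least two children. Fix a leaf $i$ and let $p$ be its parent. Because the tree is unweighted, depths are integers that increase by exactly $1$ along each arc, so for any $j\neq i$ the LCA $[i,j]_T$ is a \emph{proper} ancestor of $i$ and therefore $\varphi_T(i,j)=\delta_T([i,j]_T)\leq \delta_T(i)-1$. To see that this upper bound is attained, I would exhibit a taxon $j$ with $[i,j]_T=p$: it is enough to show that $p$ has a child $c\neq i$, since any leaf $j$ of the subtree rooted at $c$ is a taxon satisfying $[i,j]_T=p$, so that $\varphi_T(i,j)=\delta_T(p)=\delta_T(i)-1$.

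The step that requires care, and which I expect to be the main obstacle, is precisely the existence of such a sibling of $i$, due to the concession that the root may be elementary. If $p$ is a non-root internal node it has at least two children and we are done; if $p$ is the root and the root is not elementary, it again has at least two children. The only remaining case is that $p$ is an elementary root whose unique child is $i$; but then $i$ is a leaf that is the root's sole descendant, forcing $n=1$, contrary to $n\geq 2$. Hence for $n\geq 2$ the sibling always exists and the displayed formula holds, while the case $n=1$ is degenerate (there are no pairs $i<j$) and is handled separately. Assembling these observations recovers every depth from $\widetilde\varphi(T)$ and completes the reduction to Theorem~\ref{th:char}.
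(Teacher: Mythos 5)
Your proposal is correct and follows essentially the same route as the paper: the paper's proof is exactly the reduction to Theorem~\ref{th:char} via the identity $\delta_T(i)=1+\max\{\varphi_T(i,j)\mid j\neq i\}$, which it states without further justification. Your additional verification that the maximum is attained (in particular ruling out the elementary-root obstruction when $n\geq 2$) simply fills in details the paper leaves implicit, so the two arguments coincide in substance.
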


\begin{proof}
If $T$ is unweighted and without nested taxa, then, for every taxon $i$,
$$
\delta_T(i)=1+\max\{\varphi_T(i,j)\mid 1\leq  j\leq n,\ j\neq i\}
$$
and therefore, in this case, $\varphi(T)$ is uniquely determined by $\widetilde{\varphi}(T)$.
\end{proof} 

But in order to single out phylogenetic trees with non constant weights in the arcs or with nested taxa, it is necessary to take into account also the depths of the leaves. Actually, for example, there is no way to reconstruct from $\widetilde{\varphi}(T)$ the weights of the pendant arcs: the depths of the leaves are needed. Or, without being able to compare depths with cophenetic values, there is no way to say whether a taxon is nested or not. More specifically, for instance, the three trees in Fig.~\ref{fig:nonuniq} have the same value of $\varphi(1,2)$, and hence the same vector  $\widetilde{\varphi}(T)$, but they are not isomorphic as weighted phylogenetic trees.

\begin{figure}[htb]
\begin{center}
\begin{tikzpicture}[thick,>=stealth,scale=0.5]
\draw(0,0) node [tre] (1) {};  \etq 1
\draw(2,0) node [tre] (2) {};  \etq 2
\draw(1,2) node[tre] (r) {};
\draw[->] (r)--node[near end, above] {\footnotesize $4$} (1);
\draw[->] (r)--node[near end, above] {\footnotesize $5$} (2);
\end{tikzpicture}
\quad
\begin{tikzpicture}[thick,>=stealth,scale=0.5]
\draw(0,0) node [tre] (1) {};  \etq 1
\draw(2,0) node [tre] (2) {};  \etq 2
\draw(1,2) node[tre] (r) {};
\draw[->] (r)--node[near end, above] {\footnotesize 1} (1);
\draw[->] (r)--node[near end, above] {\footnotesize 1} (2);
\end{tikzpicture}
\quad
\begin{tikzpicture}[thick,>=stealth,scale=0.5]
\draw(0,0) node [tre] (1) {};  \etq 1
\draw(0,2) node [tre] (2) {};  \etq 2
\draw[->] (2)--node[right] {\footnotesize 1} (1);
\end{tikzpicture}
\end{center}
\caption{\label{fig:nonuniq} 
Three non-isomorphic trees with the same vector $\widetilde{\varphi}(T)$.}
\end{figure}
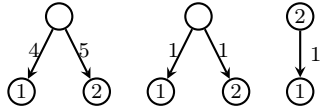

The cophenetic vector $\varphi(T)$ of a weighted phylogenetic tree $T\in\WTT_n$ can be computed in optimal $O(n^2)$ time (assuming a constant cost for the  addition of real numbers) by computing for each internal node $v$,  its depth $\delta_T(v)$ through a preorder traversal of $T$, and the pairs of taxa of which $v$ is the LCA through a postorder traversal of the tree. Both preorder and postorder traversals are performed in linear time on the usual tree data structures.

\subsection*{Cophenetic metrics}

As we have seen in Theorem \ref{th:char},  the mapping
$$
\varphi: \WTT_n \longrightarrow \RR^{n(n+1)/2}
$$
that sends each $T\in \WTT_n$ to its  cophenetic vector $\varphi(T)$, is  injective up to isomorphism. As it is well known, this allows to induce metrics on  $\WTT_n$ from metrics defined on powers of $\RR$. In particular, every $L^p$ norm $\|\, \cdot\, \|_p$ on $\RR^{n(n+1)/2}$, $p\geq 1$, induces a \emph{cophenetic metric} $d_{\varphi,p}$ on $\WTT_n$ by means of
$$
d_{\varphi,p}(T_1,T_2)=\|\varphi(T_1)-\varphi(T_2)\|_p,\quad T_1,T_2\in \WTT_n.
$$
Recall that
$$
\|(x_1,\ldots,x_m)\|_p=\sqrt[p]{|x_1|^p+\cdots+|x_m|^p},
$$
and so, for instance,
$$
\begin{array}{l}
d_{\varphi,1}(T_1,T_2)=\displaystyle \sum_{1\leq i\leq j\leq
n}|\varphi_{T_1}(i,j)-\varphi_{T_2}(i,j)|\\[3ex]
d_{\varphi,2}(T_1,T_2)= \displaystyle\sqrt{\sum_{1\leq i\leq j\leq
n}(\varphi_{T_1}(i,j)-\varphi_{T_2}(i,j))^2}
\end{array}
$$
are the cophenetic metrics on $\WTT_n$ induced by the Manhattan $L^1$ and the euclidean $L^2$ norms.
One can also use  Donoho's $L^0$ ``norm'' (which, actually, is not a proper norm)
$$
\|(x_1,\ldots,x_m)\|_0=\mbox{number of entries $x_i$ that are $\neq 0$}
$$
to induce a metric $d_{\varphi,0}(T_1,T_2)$ on $\WTT_n$, which turns out to be simply the Hamming distance between $\varphi(T_1)$ and $\varphi(T_2)$.

As we have seen in the previous subsection, the cophenetic vector of a phylogenetic tree in $\WTT_n$ can be computed in $O(n^2)$ time. For every $T_1,T_2\in \WTT_n$, and assuming a constant cost for the  addition and product of real numbers, the cost of computing $d_{\varphi,0}(T_1,T_2)$ (as the number of non-zero entries of  $\varphi(T_1)-\varphi(T_2)$) is $O(n^2)$, and the cost of computing $d_{\varphi,p}(T_1,T_2)^p$, for $p\geq 1$ (as the sum of the $p$-th powers of the entries of the difference $\varphi(T_1)-\varphi(T_2)$) is $O(n^2+\log_2(p)n^2)$, which is again $O(n^2)$ if we understand  $\log(p)$ as part of the constant factor.  Finally, the cost of computing $d_{\varphi,p}(T_1,T_2)$,  $p\geq 1$, as the $p$-th root of $d_{\varphi,p}(T_1,T_2)^p$ will depend on $p$ and on the accuracy with which this root is computed. Assuming a constant cost for the computation of $p$-th roots with a given accuracy (notice that, in practice, for low $p$ and accuracy,  this step will be dominated by the computation of 
$d_{\varphi,p}(T_1,T_2)^p$),  the total cost of computing $d_{\varphi,p}(T_1,T_2)$ is  $O(n^2)$.

Next examples show some features of these cophenetic metrics.

\begin{example}\label{ex:contract}
Let $T\in\UTT_n$, let $(u,v)$ be an arc of $T$ with $u$ or $v$ unlabeled, and let $T'$ be the phylogenetic tree in $\UTT_n$ obtained by \emph{contracting} $(u,v)$: that is, by removing the node $v$ and the arc $(u,v)$, labeling $u$ with the label of $v$ if it was labeled, and  replacing every arc $(v,x)$ in $T$ by an arc $(u,x)$.  Notice that, in the passage from $T$ to $T'$, for every $i,j\in S$:
\begin{itemize}
\item If both $i,j$ are descendants of $v$ in $T$, then $\varphi_{T'}(i,j)=\varphi_T(i,j)-1$.

\item In any other case, $\varphi_{T'}(i,j)=\varphi_T(i,j)$.
\end{itemize}
As a consequence, 
$$
\varphi_{T}(i,j)-\varphi_{T'}(i,j)=\left\{
\begin{array}{ll}
1 & \mbox{ if $i,j\preceq v$}\\
0 & \mbox{otherwise}
\end{array}\right.
$$
and therefore,
if  $n_v$ is the number of descendant taxa of $v$,
$$
d_{\varphi,0}(T,T')= {\binom{n_v+1}{2}},\quad d_{\varphi,p}(T,T')=\sqrt[p]{\binom{n_v+1}{2}}\mbox{ if $p\geq 1$}.
$$
So the contraction of an arc  in an tree $T$ (which is Robinson-Foulds' $\alpha$-operation \cite{robinson.foulds:mb81}) yields a new tree $T'$ at a cophenetic distance from $T$ that depends increasingly on the number of descendant taxa of the head of the contracted arc. \qed
 \end{example}
 
 \begin{example}
 Let $T_0,T_0'\in \WTT_m$, for some $m<n$, let $T\in \WTT_n$ be such that its subtree rooted at some node $z$ is $T_0$, and let $T'\in \WTT_n$ be the tree obtained by replacing in $T$ this subtree $T_0$ by $T_0'$.
  
Notice that, for every $i,j\in \{1,\ldots,n\}$,
$\varphi_{T}(i,j)=\delta_T(z)+\varphi_{T_0}(i,j)$ if $i,j\leq m$, and 
$\varphi_{T}(i,j)=\varphi_T(z,j)$ if $i\leq m$ and $j>m$,
and the same holds in $T'$, replacing $T$ and $T_0$ by $T'$ and $T_0'$, respectively. Since, moreover,
$\delta_T(z)=\delta_{T'}(z)$, $\varphi_T(z,j)=\varphi_{T'}(z,j)$ for every $j>m$, and 
$\varphi_T(i,j)=\varphi_{T'}(i,j)$ for every $i,j>m$,
we conclude that
$$
\varphi(T)-\varphi(T')=\varphi(T_0)-\varphi(T_0')
$$
and hence
$$
d_{\varphi,p}(T,T')=d_{\varphi,p}(T_0,T_0').
$$
So, the cophenetic metrics are local, as other popular metrics like 
the Robinson Foulds or the triples metrics, but unlike other popular metrics, like for instance the nodal metrics.\qed
\end{example}

\section*{Results}

\subsection*{Minimum and maximum values for cophenetic metrics}

Our first goal is to find the smallest non-negative value of $d_{\varphi,p}$ on several spaces of phylogenetic trees, and the pairs of trees at which it is reached. These pairs of  trees at minimum distance can be understood as `adjacent' in the corresponding metric space, and their characterization yields a first step towards  understanding how cophenetic metrics measure the difference between two trees.

Notice that this problem makes no sense for weighted phylogenetic trees. For instance, if we add or subtract an $\varepsilon>0$ to the weight of a pendant arc in a tree $T$, without changing its topology,  the distance between $T$ and the resulting tree will be $\varepsilon$, which can be as small as desired. So, we only consider this problem on $\UTT_n$, $\TT_n$, and $\BTT_n$. 

In order to simplify the statements, set
$$
D_p(T_1,T_2)=\left\{\begin{array}{ll}
d_{\varphi,0}(T_1,T_2) & \mbox{ if $p=0$}\\
d_{\varphi,p}(T_1,T_2)^p & \mbox{ if $p\geq 1$}\
\end{array}\right.
$$

The following easy result, which is a direct consequence of the fact that $D_p(T_1,T_2)\geq D_0(T_1,T_2)$ for every $p\geq 1$ and $T_1,T_2\in \UTT_n$, will be used in the proof of the next propositions.

\begin{lemma}\label{lem:pvs0}
Assume that, for every pair of different trees $T_1,T_2$ in~$\UTT_n$, $\TT_n$ or $\BTT_n$ such that $D_0(T_1,T_2)$ is minimum on this space, we have that $D_p(T_1,T_2)= D_0(T_1,T_2)$. Then, the minimum non-zero value of $D_p$ on this space of trees is equal to the minimum non-zero value of $D_0$ on it, and it is reached at exactly the same pairs of trees. \qed
\end{lemma}

The least non-negative values of $D_p$, for $p\in \{0\}\cup[1,\infty[$, on $\UTT_n$, $\TT_n$, and $\BTT_n$, together with an explicit description of  the pairs of trees where these minimum values are reached,  are given by the next three propositions. We give their proofs in the Appendix.

\begin{proposition}\label{prop:minutt}
The minimum non-negative value of $D_p$ on $\UTT_n$, for $p\in \{0\}\cup[1,\infty[$ and $n\geq 2$, is 1. And for every $T,T'\in \UTT_n$, $D_p(T,T')=1$ if, and only if, one of them is obtained from the other by contracting a pendant arc. \qed
\end{proposition}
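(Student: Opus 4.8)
The plan is to reduce everything to the case $p=0$ by means of Lemma~\ref{lem:pvs0}, and then to characterize the pairs of trees at Hamming distance one. For the value itself, I would first observe that on $\UTT_n$ all entries of $\varphi(T)$ are integers, so each nonzero coordinate of $\varphi(T)-\varphi(T')$ contributes at least $1$ to $D_p$; hence $D_p(T,T')\geq D_0(T,T')$ for $p\geq 1$, while $D_0(T,T')\geq 1$ whenever $T\neq T'$ because $\varphi$ is injective by Theorem~\ref{th:char}. That the value $1$ is attained is immediate from Example~\ref{ex:contract}: contracting a pendant arc $(u,v)$, whose head $v$ is a leaf carrying a taxon, has $n_v=1$, changes $\varphi$ only in the diagonal coordinate $(v,v)$, and by exactly $1$, so $D_p(T,T')=\binom{2}{2}=1$ for every $p$. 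This settles the value and the ``if'' direction of the characterization simultaneously.

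The substance is the ``only if'' direction. If $D_p(T,T')=1$ then, by integrality, exactly one coordinate $(i_0,j_0)$ of $\varphi(T)-\varphi(T')$ is nonzero and it equals $\pm 1$ (for $p=0$ I will recover the value $\pm1$ from the structure below). I would split into the cases $i_0\neq j_0$ and $i_0=j_0$, expecting the off-diagonal case to be the main obstacle and ruling it out with the three-point condition, which says that in any tree the two smallest among $\varphi(i,j),\varphi(i,k),\varphi(j,k)$ coincide. Here all depths and all other cophenetic values agree, so for every third taxon $k$ the values $a=\varphi(i_0,k)$ and $b=\varphi(j_0,k)$ are the same in $T$ and $T'$, whereas $c=\varphi(i_0,j_0)$ is not. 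If $a\neq b$, the three-point condition forces $c=\min\{a,b\}$ in each tree, contradicting $c_T\neq c_{T'}$; hence $\varphi(i_0,k)=\varphi(j_0,k)$ for all $k$. This means $i_0$ and $j_0$ split at a single node $m=[i_0,j_0]_T$ in symmetric position, and only the depth of $m$ differs between $T$ and $T'$, while $\delta(i_0)$, $\delta(j_0)$ and all branch points above $m$ are fixed. Shifting $\delta(m)$ by an integer with everything else unchanged would require inserting or deleting a node on the arc entering $m$; such a node is elementary and non-root, hence labeled by the definition of $\UTT_n$, so it can be neither inserted nor deleted without changing the taxon set or some other coordinate. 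Thus the off-diagonal case cannot occur.

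In the diagonal case $i_0=j_0=a$, only $\delta(a)=\varphi(a,a)$ differs. First, $a$ must be a leaf, since a nested taxon $a$ has a descendant taxon $k$ with $[a,k]=a$, whence $\varphi(a,k)=\delta(a)$ would change as well. Let $u$ be the parent of $a$ in the tree of larger depth, say $T$. Because the taxon set and every other depth and cophenetic value are preserved, the depth of $a$ can only be lowered by deleting unlabeled nodes on the pendant segment below the deepest branching or labeled ancestor of $a$; but every such interior node is elementary and non-root, hence labeled, and so none exists. This forces that segment to consist of the single arc $(u,a)$ with $u$ unlabeled, and $\delta_T(a)=\delta_{T'}(a)+1$. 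Then $(u,a)$ is a contractible pendant arc and, by injectivity of $\varphi$, $T'$ is exactly its contraction.

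Finally I would close the loop with Lemma~\ref{lem:pvs0}. The pairs at which $D_0$ attains its minimum $1$ are, by the above, precisely the pendant-arc contractions, and at each of them the unique nonzero coordinate of $\varphi(T)-\varphi(T')$ equals $\pm1$, so $D_p(T,T')=1^p=1=D_0(T,T')$. The hypothesis of the Lemma therefore holds, and it transfers both the minimum value $1$ and the characterization from $D_0$ to $D_p$ for every $p\geq 1$, completing the proof.
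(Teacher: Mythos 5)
Your overall architecture is the same as the paper's: reduce to a single differing coordinate, split into the off-diagonal and diagonal cases, finish with the injectivity of $\varphi$ (Theorem \ref{th:char}), and transfer to $p\geq 1$ via Lemma \ref{lem:pvs0}. The value $1$, the ``if'' direction, and your use of the three-point condition to deduce $\varphi(i_0,k)=\varphi(j_0,k)$ for every third taxon $k$ are all correct (the last point is in fact a clean alternative to the paper's case analysis on $\varphi_{T'}(i,j)$ versus $\delta_{T'}(i)$). But there is a genuine gap in how you close both cases: you argue as if $T'$ were already known to be obtained from $T$ by inserting or deleting nodes, which is precisely what must be proved --- the two trees are given abstractly, related only through their cophenetic vectors, so no ``edit'' between them is available. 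Concretely, in the off-diagonal case the sentence ``shifting $\delta(m)$ \ldots{} would require inserting or deleting a node on the arc entering $m$'' is both unjustified and mislocalized: subdividing the arc \emph{above} $m$ would change $\delta(i_0)$ and $\delta(j_0)$ as well, so the scenario your argument would actually have to exclude is a change of the path lengths \emph{below} $m$, which you never address. The correct finish uses only what you already derived: $\varphi(i_0,k)=\varphi(j_0,k)$ for all $k$, together with the ban on unlabeled elementary non-root nodes, forces in each tree separately that either $i_0$ and $j_0$ are both children of $[i_0,j_0]$, or one of them is the parent of the other; the preserved depths $\delta(i_0),\delta(j_0)$ then pin down which configuration occurs and determine $\varphi(i_0,j_0)$ from them (e.g.\ as $\delta(i_0)-1$ in the cherry configuration), contradicting $\varphi_T(i_0,j_0)\neq\varphi_{T'}(i_0,j_0)$.

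The same unproved edit picture carries your diagonal case: the two facts you extract from it --- that the parent $u$ of the leaf $a$ is unlabeled, and that $\delta_T(a)=\delta_{T'}(a)+1$ --- are exactly what the final injectivity step needs. Without the first, contracting $(u,a)$ is not even a legal operation (contraction requires an unlabeled endpoint); without the second, the contraction of $(u,a)$ in $T$ has a cophenetic vector different from $\varphi(T')$, and for $p=0$ integrality gives you no control on the magnitude of the change. Both facts must be proved from preservation of the other coordinates, as the paper does. If $u$ were labeled $l$, then $\varphi_{T'}(a,l)=\varphi_T(a,l)=\delta_T(a)-1$ and $\delta_{T'}(l)=\delta_T(a)-1$, and $\varphi_{T'}(a,l)\leq\delta_{T'}(a)$ forces $\delta_T(a)-\delta_{T'}(a)=1$, whence $\varphi_{T'}(a,l)=\delta_{T'}(a)=\delta_{T'}(l)$ gives $[a,l]_{T'}=a=l$, a contradiction. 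Once $u$ is known to be unlabeled, it has at least two children (it cannot be an unlabeled elementary non-root node, and if it is the root it has another child because $n\geq 2$), so some taxon $k$ satisfies $\varphi_T(a,k)=\delta_T(a)-1$; preservation and $\varphi_{T'}(a,k)\leq\delta_{T'}(a)$ then give $\delta_T(a)-\delta_{T'}(a)\leq 1$, which is the missing magnitude bound. With these repairs --- all of the paper's type, i.e.\ inequalities among preserved coordinates rather than tree edits --- your proof goes through.
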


So, not every tree in $\UTT_n$ has  neighbors at cophenetic distance 1: only those trees with some leaf whose parent is unlabeled. Now, it is not difficult to check that a tree $T\in \UTT_n$ such that all its leaves have labeled parents has some tree $T'$ such that $D_p(T,T')=2$, which is the minimum value of $D_{p}$ on $\UTT_n$ greater than 1. One such $T'$ is obtained by choosing a pendant arc in $T$ and interchanging the labels of its source and its target nodes.

\begin{proposition}\label{prop:mintt}
The minimum non-negative value of $D_p$ on $\TT_n$, for $p\in \{0\}\cup[1,\infty[$ and $n\geq 3$,  is $3$. And for every $T,T'\in \TT_n$, $D_p(T,T')=3$ if, and only if, one of them is obtained from the other by means of one of the following two operations:
\begin{enumerate}[(a)]
\item Contracting an arc ending in the parent of a cherry (see Fig. \ref{fig:prop21})
\item Pruning and regrafting a leaf that is a  sibling of the root of a cherry, to make it a sibling of the leaves in the cherry  (see Fig.  \ref{fig:prop22}) \qed
\end{enumerate}
\end{proposition}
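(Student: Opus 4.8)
The plan is to reduce to the Hamming distance $D_0$ via Lemma~\ref{lem:pvs0} and to prove the sharper claim that distinct $T,T'\in\TT_n$ always satisfy $D_0(T,T')\geq 3$, with equality exactly in cases (a) and (b). Write $A=\{(i,j):i<j,\ \varphi_T(i,j)\neq\varphi_{T'}(i,j)\}$ and $B=\{i:\delta_T(i)\neq\delta_{T'}(i)\}$, so that $D_0(T,T')=|A|+|B|$; by Corollary~\ref{cor:char} the off-diagonal entries already single out the tree, hence $A\neq\emptyset$. The two facts I will lean on are the depth formula $\delta_T(i)=1+\max_{k\neq i}\varphi_T(i,k)$ from the proof of Corollary~\ref{cor:char}, valid for every tree in $\TT_n$, and the elementary LCA fact (the ultrametric three-point condition for $-\varphi_T$ noted in the Background): among any three taxa, the two smallest of $\varphi_T(i,j),\varphi_T(i,k),\varphi_T(j,k)$ coincide.

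For the sufficiency direction I would just compute. Operation (a) is the contraction of the arc $(p,v)$ entering the parent $v$ of a cherry $\{x,y\}$, i.e. the contraction of Example~\ref{ex:contract} with $n_v=2$; it changes exactly the three coordinates $(x,x),(y,y),(x,y)$, each by $1$, so $D_0=\binom{3}{2}=3$ and $D_p=3$ for all $p\geq 1$. For operation (b), regrafting the leaf $z$ from being a sibling of the cherry's parent $c$ to being a child of $c$ leaves every lowest common ancestor unchanged except those involving $z$; a direct check shows that only $\delta(z)$, $\varphi(z,x)$ and $\varphi(z,y)$ change, each by $1$. Hence both operations give $D_0=3$ with every altered coordinate changing by $\pm 1$, so $D_p=D_0=3$ on these pairs. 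This already shows $\min D_0\leq 3$ and verifies the hypothesis of Lemma~\ref{lem:pvs0} for the minimizing pairs.

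The heart of the argument is the converse. I would pick $(i,j)\in A$ maximising $M:=\varphi_T(i,j)$ over all coordinates of $A$ in both trees, arranging $M=\varphi_T(i,j)$ by swapping $T,T'$ if needed, and set $v=[i,j]_T$. The key lemma is: if some taxon $k$ satisfies $(i,k),(j,k)\notin A$ and $\max(\varphi_T(i,k),\varphi_T(j,k))\geq M$, then the three-point condition applied to $(i,j,k)$ in $T'$ is violated, since there $\varphi_{T'}(i,j)<M$ whereas $\varphi_{T'}(i,k)$ and $\varphi_{T'}(j,k)$ retain their $T$-values, which the three-point condition in $T$ already forces to both be $\geq M$, so the two smallest cannot coincide. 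Thus every such $k$ must be ``protected'' by belonging to a pair of $A$. When $|A|=1$ there is no protection, so no leaf outside $\{i,j\}$ has cophenetic value $\geq M$ with $i$ or $j$; equivalently $\{i,j\}$ is a cherry of $T$ with parent $v$. The depth formula then gives $\delta_{T'}(i)<\delta_T(i)$ and $\delta_{T'}(j)<\delta_T(j)$, so $|B|\geq 2$ and $D_0\geq 3$; applying the three-point condition once more with a taxon lying below the parent $p$ of $v$ but outside the cherry pins $\varphi_{T'}(i,j)=M-1$ (the case where $v$ is the only child of the root being immediate), and by Theorem~\ref{th:char} the resulting cophenetic vector identifies $T'$ as the contraction of $(p,v)$, namely case (a).

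The remaining cases refine this. When $|A|=2$ the single ``protecting'' pair must have cophenetic value exactly $M$ in $T$; unwinding the three-point and depth constraints shows that the three taxa involved sit as three children of one node in $T$ and as a cherry with a lifted sibling in $T'$, which is exactly operation (b), and forces $|B|\geq 1$, so $D_0\geq 3$ with equality giving case (b). The delicate point is $|A|\geq 3$: here I must exclude a depth-preserving configuration with exactly three off-diagonal changes, i.e. $|A|=3,|B|=0$, which would give $D_0=3$ outside cases (a),(b). This is the main obstacle, and it is where one genuinely needs that $T'$ is a bona fide tree: the three-point condition on \emph{all} triples, together with the absence of elementary non-root nodes in $\TT_n$, discards every such configuration (the label-swap $((1,2),(3,4))\mapsto((1,3),(2,4))$ shows that the cheapest depth-preserving change already costs $4$), so $|A|=3$ forces $|B|\geq 1$ and $D_0\geq 4$. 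Combining the three cases gives $\min D_0=3$ attained exactly at pairs related by (a) or (b), and Lemma~\ref{lem:pvs0} transfers the statement to $D_p$ for all $p\geq 1$.
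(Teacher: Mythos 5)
Your reduction to $D_0$ via Lemma \ref{lem:pvs0}, the sufficiency computations for operations (a) and (b), the ``protection'' lemma (which is correct: the three-point condition in $T$ forces both $\varphi_T(i,k)$ and $\varphi_T(j,k)$ to be $\geq M$, after which the same condition fails in $T'$), and the case $|A|=1$ are all sound; indeed the $|A|=1$ analysis, combining the protection lemma with the depth formula and Theorem \ref{th:char}, is a clean alternative to the route the paper takes through Lemma \ref{lem:mtt1}, Lemma \ref{lem:mtt2} and Lemma \ref{lem:mtt6}. The problem is that the two remaining cases are asserted rather than proved, and they are exactly where the content of the proposition lies. For $|A|=2$ you say that ``unwinding the three-point and depth constraints'' shows the three taxa sit as three children of one node in $T$ and as a cherry with a lifted sibling in $T'$; that unwinding is a genuine argument, not a remark: one must rule out the sub-case in which the protecting taxon $k$ satisfies $\varphi_T(i,k)>M$ (this needs an argument about intermediate nodes on the path from $[i,j]_{T'}$ to $[i,k]_{T'}$ having no available taxon to branch off, using that $\TT_n$ has no non-root elementary nodes), and also the sub-case in which no taxon besides $i,j$ descends from $[i,j]_T$ and the second changed pair does not involve the cherry at all. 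None of this appears in your text.

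More seriously, for $|A|\geq 3$ you must exclude the configuration $|A|=3$, $|B|=0$, which you yourself identify as ``the main obstacle,'' and your entire justification is the parenthetical remark that one particular depth-preserving modification (the label swap on a balanced quartet) costs $4$. An example is not an exclusion proof: you need to show that \emph{no} pair of trees in $\TT_n$ can differ in exactly three off-diagonal cophenetic values and in no depth, and this requires a case analysis of the same length and delicacy as the rest of the argument. This is precisely where the paper works hardest: Lemma \ref{lem:mtt2} (excluding jumps $m\geq 2$), and above all the proof of Lemma \ref{lem:mtt4}, whose exclusion of the symmetric configuration $\delta_{T'}(i)=\delta_{T'}(j)=\varphi_{T'}(i,j)+2$ runs to a page and hinges on producing auxiliary leaves $h$ whose cophenetic values cannot be matched in both trees, forcing $D_0\geq 5$. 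Your toolkit (protection lemma, depth formula, three-point condition, absence of non-root elementary nodes, Theorem \ref{th:char}) does appear adequate to finish these cases, but as written the proof has a genuine gap at its hardest point, and filling it would require roughly the amount of work the paper's Lemmas \ref{lem:mtt2}--\ref{lem:mtt5} actually carry out.
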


\begin{center}
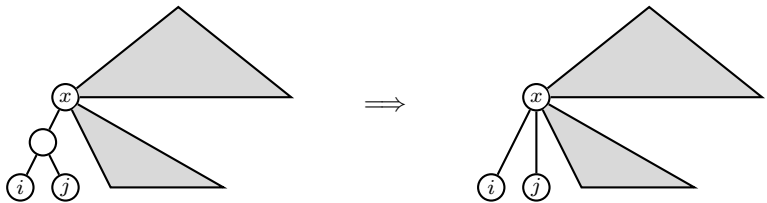
\begin{figure}[htb]
\begin{center}
\begin{tikzpicture}[thick,>=stealth,scale=0.3]
\draw(0,0) node [tre] (i) {};  \etq i
\draw(2,0) node [tre] (j) {};  \etq j 
\draw(1,2) node [tre] (a) {};   
\draw(2,4) node [tre] (x) {};  \etq x 
\draw (x)--(a);
\draw (a)--(j);
\draw (a)--(i);
\draw[fill= black!15] (x)--(4,0)--(9,0)--(x);
\draw[fill= black!15] (x)--(12,4)--(7,8)--(x);
\end{tikzpicture}
\qquad
\begin{tikzpicture}[thick,>=stealth,scale=0.3]
\draw(0,0) node  {};   
\draw(0,8) node  {}; 
\draw(1,4) node   {$\Longrightarrow$};   
\end{tikzpicture}
\qquad
\begin{tikzpicture}[thick,>=stealth,scale=0.3]
\draw(0,0) node [tre] (i) {};  \etq i
\draw(2,0) node [tre] (j) {};  \etq j 
\draw(2,4) node [tre] (x) {};  \etq x 
\draw (x)--(j);
\draw (x)--(i);
\draw[fill= black!15] (x)--(4,0)--(9,0)--(x);
\draw[fill= black!15] (x)--(12,4)--(7,8)--(x);
\end{tikzpicture}
\end{center}
\caption{\label{fig:prop21} Contraction of an arc ending in the parent of a cherry.}
\end{figure}
\end{center}

\begin{center}
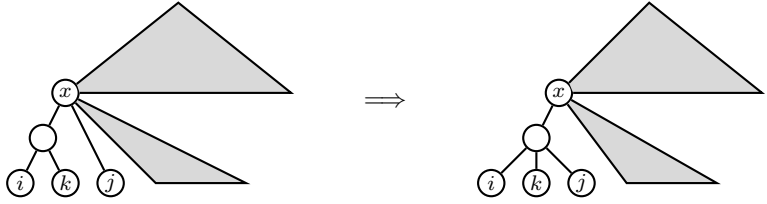
\begin{figure}[htb]
\begin{center}
\begin{tikzpicture}[thick,>=stealth,scale=0.3]
\draw(0,0) node [tre] (i) {};  \etq i
\draw(2,0) node [tre] (k) {};  \etq k
\draw(4,0) node [tre] (j) {};  \etq j 
\draw(1,2) node [tre] (a) {};   
\draw(2,4) node [tre] (x) {};  \etq x 
\draw (x)--(a);
\draw (x)--(j);
\draw (a)--(i);
\draw (a)--(k);
\draw[fill= black!15] (x)--(6,0)--(10,0)--(x);
\draw[fill= black!15] (x)--(12,4)--(7,8)--(x);
\end{tikzpicture}
\qquad
\begin{tikzpicture}[thick,>=stealth,scale=0.3]
\draw(0,0) node  {};   
\draw(0,8) node  {}; 
\draw(1,4) node   {$\Longrightarrow$};   
\end{tikzpicture}
\qquad
\begin{tikzpicture}[thick,>=stealth,scale=0.3]
\draw(0,0) node [tre] (i) {};  \etq i
\draw(2,0) node [tre] (k) {};  \etq k
\draw(4,0) node [tre] (j) {};  \etq j 
\draw(2,2) node [tre] (a) {};   
\draw(3,4) node [tre] (x) {};  \etq x 
\draw (x)--(a);
\draw (a)--(j);
\draw (a)--(i);
\draw (a)--(k);
\draw[fill= black!15] (x)--(6,0)--(10,0)--(x);
\draw[fill= black!15] (x)--(12,4)--(7,8)--(x);
\end{tikzpicture}
\end{center}
\caption{\label{fig:prop22} 
Pruning and regrafting an uncle of a cherry to make it a sibling of them.}
\end{figure}
\end{center}

So, every tree $T\in \TT_n$ has neighbors $T'$ such that $D_p(T,T')=3$. Indeed, take an internal node $v$ in $T$ of largest depth, so that all its children are leaves. If $v$ has exactly two children, one such neighbor of $T$ is obtained by contracting the arc ending in $v$. If $v$ has more than two children, one such neighbor of $T$ is obtained by replacing any two children of $v$ by a cherry (that is, taking two children $i,j$ of $v$, removing the arcs $(v,i)$ and $(v,j)$, and then adding a new node $v_0$ and arcs $(v,v_0)$, $(v_0,i)$, and $(v_0,j)$).

\begin{proposition}\label{prop:minbtt}
The minimum non-negative value of $D_p$ on $\BTT_n$, for $p\in \{0\}\cup[1,\infty[$ and $n\geq 3$,  is $4$. And for every $T,T'\in \BTT_n$, $D_p(T,T')=4$ if, and only if, one of them is obtained from the other by means of one of the following  operations:
\begin{enumerate}[(a)]
\item Reorganizing a triplet (see Fig. \ref{fig:dist4tri-2})
\item Reorganizing a completely branched quartet  (see Fig.  \ref{fig:dist4tfour-2}) \qed
\end{enumerate}
\end{proposition}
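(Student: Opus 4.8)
The plan is to prove everything for the purely combinatorial quantity $D_0$ and then transfer it to all $p\geq 1$ through Lemma~\ref{lem:pvs0}. Thus I would show that the minimum non-zero value of $D_0$ on $\BTT_n$ is $4$, attained exactly at the pairs described by operations (a) and (b); since in both operations every coordinate of the cophenetic vector that changes does so by exactly one unit, each $D_0$-minimizing pair satisfies $D_p(T,T')=D_0(T,T')=4$, so the hypothesis of Lemma~\ref{lem:pvs0} holds and the statement for general $p$ is immediate.

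For the upper bound I would perform the two local computations. Writing $\delta_0$ for the depth of the root $x$ of the reorganized subtree, in operation (a) the passage from $((i,j),k)$ to $((i,k),j)$ changes only $\varphi(i,j)$ and $\varphi(i,k)$ (swapping $\delta_0$ and $\delta_0+1$) and the depths $\varphi(j,j),\varphi(k,k)$ (swapping $\delta_0+1$ and $\delta_0+2$); every pair involving a leaf $\ell$ outside $\{i,j,k\}$ is untouched because $[\ell,i]_T=[\ell,j]_T=[\ell,k]_T=[\ell,x]_T$ is the same node in both trees. In operation (b), the passage from the balanced quartet $((a,b),(c,d))$ to $((a,c),(b,d))$ leaves all four depths fixed and changes exactly $\varphi(a,b),\varphi(c,d)$ (down by one) and $\varphi(a,c),\varphi(b,d)$ (up by one). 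In each case exactly four coordinates move, each by one unit, giving $D_0=D_p=4$.

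The core of the argument is the lower bound together with the converse. If $T\neq T'$, then by Theorem~\ref{th:char} their cophenetic vectors differ, and since a binary tree without nested taxa is determined by its system of clusters, the cluster systems of $T$ and $T'$ differ as well. I would fix a cluster $A$ of $T$ of \emph{maximal depth} that is not a cluster of $T'$, and split it as $A=A_1\sqcup A_2$ into the two child clusters of the corresponding node; by maximality, $A_1$ and $A_2$ are clusters of both trees, with identical internal shape, but they are not siblings in $T'$. Localizing the discrepancy to the region spanned by $A_1$, $A_2$ and the leaves that separate them in $T'$, I would read off directly from $[\,\cdot\,,\cdot\,]_T$ versus $[\,\cdot\,,\cdot\,]_{T'}$ which depth coordinates and which cophenetic coordinates are forced to change, and count them. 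The natural split is into the case $|A_1|=|A_2|=1$ (so $A$ is a cherry of $T$ whose two leaves are separated in $T'$), which according to whether a depth is also forced to move should produce exactly the triplet configuration (a) or the balanced-quartet configuration (b), and the case $\max(|A_1|,|A_2|)\geq 2$, in which a whole nontrivial clade is displaced and should force strictly more than four coordinates to change.

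The hard part is this counting, namely verifying that no reorganization moves only one, two, or three coordinates and that exactly four is realized only by (a) and (b). The delicate points are ruling out an isolated depth change, which is incompatible with the three-point condition satisfied by $-\varphi$ (a change in $\varphi(i,i)=\delta_T(i)$ drags along the cophenetic value of $i$ with its sibling), and showing that separating $A_1$ and $A_2$ by more than the minimal amount, or displacing a clade of size $\geq 2$, always disturbs at least five coordinates. I expect the cleanest bookkeeping is to track, for each affected leaf, its change of depth (a diagonal coordinate) together with the pairs whose LCA crosses the boundary of the rearranged region, rather than to enumerate tree shapes one by one.
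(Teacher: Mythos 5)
Your reduction to $D_0$ via Lemma~\ref{lem:pvs0} and your verification that the two operations change exactly four coordinates of the cophenetic vector, each by one unit, are correct and coincide with what the paper does. The problem is that the rest of your text is a plan rather than a proof: the necessity direction --- that $D_0(T,T')\in\{1,2,3\}$ is impossible on $\BTT_n$, and that $D_0(T,T')=4$ forces one of the two configurations \emph{together with} equality of the trees everywhere else --- is precisely the substance of the proposition, and you leave it at ``the hard part is this counting''. The paper devotes six lemmas to exactly this point (Lemmas~\ref{lem:mtt2bin}, \ref{lem:mtt4-bis}, \ref{lem:sibl}, \ref{lem:nosibl}, \ref{lem:tri} and \ref{lem:four}), each of which is a genuine case analysis on cophenetic values; none of it is routine bookkeeping that can be waved through.

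Concretely, three claims in your sketch are unproven and each needs real work. First, the claim that displacing a clade $A_1$ with $|A_1|\geq 2$ ``should force strictly more than four coordinates to change'' is plausible (each leaf of the clade picks up its own changed pairs) but is not a one-line consequence of your localization; the paper obtains the corresponding control by first forcing all coordinate differences to be $1$ (Lemma~\ref{lem:mtt2bin}), then bounding depths (Lemma~\ref{lem:mtt4-bis}), and then establishing the sibling/non-sibling dichotomy (Lemmas~\ref{lem:sibl} and \ref{lem:nosibl}). Second, in your cherry case you must rule out that $i$ and $j$ are separated in $T'$ by more than the minimal amount; your remark about the three-point condition gestures at this but does not do it. Third, even after the local configuration is identified, you must prove that $T$ and $T'$ coincide outside the reorganized region: counting four changed coordinates inside the region does not by itself exclude compensating differences elsewhere. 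The paper closes this by applying Theorem~\ref{th:char} to the trees obtained by collapsing the local subtrees to a single leaf (end of the proofs of Lemmas~\ref{lem:tri} and \ref{lem:four}), and your sketch has no analogue of that step. Your cluster-localization scaffolding (take a deepest cluster of $T$ that is not a cluster of $T'$; its two child clusters are then shared) is a reasonable alternative organization and could perhaps be pushed through, but as written the proposal establishes only the easy inequality $\min D_0\leq 4$ and the sufficiency of operations (a) and (b), not their necessity.
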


\begin{center}
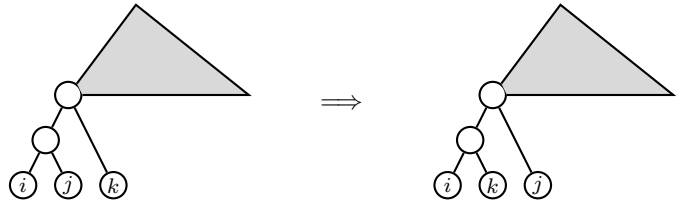
\begin{figure}[htb]
\begin{center}
\begin{tikzpicture}[thick,>=stealth,scale=0.3]
\draw(0,0) node [tre] (i) {};  \etq i
\draw(2,0) node [tre] (j) {};  \etq j
\draw(4,0) node [tre] (k) {};  \etq k
\draw(1,2) node [tre] (a) {};   
\draw(2,4) node [tre] (x) {};  
\draw[fill=black!15] (x)--(10,4)--(5,8)--(x);
\draw (x)--(a);
\draw (a)--(i);
\draw (a)--(j);
\draw (x)--(k);
\end{tikzpicture}
\qquad
\begin{tikzpicture}[thick,>=stealth,scale=0.3]
\draw(0,0) node {};
\draw(0,8) node {};
\draw(0,4) node {$\Longrightarrow$};
\end{tikzpicture}
\qquad
\begin{tikzpicture}[thick,>=stealth,scale=0.3]
\draw(0,0) node [tre] (i) {};  \etq i
\draw(2,0) node [tre] (k) {};  \etq k
\draw(4,0) node [tre] (j) {};  \etq j
\draw(1,2) node [tre] (a) {};   
\draw(2,4) node [tre] (x) {};  
\draw[fill=black!15] (x)--(10,4)--(5,8)--(x);
\draw (x)--(a);
\draw (a)--(i);
\draw (a)--(k);
\draw (x)--(j);
\end{tikzpicture}
\end{center}
\caption{\label{fig:dist4tri-2} 
Reorganizing a triplet.}
\end{figure}
\end{center}

\begin{center}
\begin{figure}[htb]
\begin{center}
\begin{tikzpicture}[thick,>=stealth,scale=0.3]
\draw(0,0) node [tre] (i) {};  \etq i
\draw(2,0) node [tre] (j) {};  \etq j
\draw(4,0) node [tre] (k) {};  \etq k
\draw(6,0) node [tre] (l) {};  \etq l
\draw(1,2) node [tre] (a) {};   
\draw(5,2) node [tre] (b) {};   
\draw(3,4) node [tre] (x) {};  
\draw[fill=black!15] (x)--(12,4)--(7,8)--(x);
\draw (x)--(a);
\draw (x)--(b);
\draw (a)--(i);
\draw (a)--(j);
\draw (b)--(l);
\draw (b)--(k);
\end{tikzpicture}
\qquad
\begin{tikzpicture}[thick,>=stealth,scale=0.3]
\draw(0,0) node {};
\draw(0,8) node {};
\draw(0,4) node {$\Longrightarrow$};
\end{tikzpicture}
\qquad
\begin{tikzpicture}[thick,>=stealth,scale=0.3]
\draw(0,0) node [tre] (i) {};  \etq i
\draw(2,0) node [tre] (k) {};  \etq k
\draw(4,0) node [tre] (j) {};  \etq j
\draw(6,0) node [tre] (l) {};  \etq l
\draw(1,2) node [tre] (a) {};   
\draw(5,2) node [tre] (b) {};   
\draw(3,4) node [tre] (x) {};  
\draw[fill=black!15] (x)--(12,4)--(7,8)--(x);
\draw (x)--(a);
\draw (x)--(b);
\draw (a)--(i);
\draw (a)--(k);
\draw (b)--(l);
\draw (b)--(j);
\end{tikzpicture}
\end{center}
\caption{\label{fig:dist4tfour-2} 
Reorganizing a completely branched quartet.}
\end{figure}
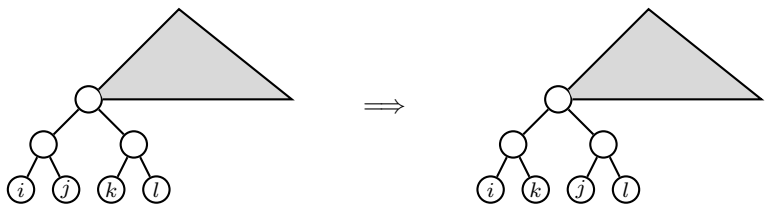
\end{center}

So again, every tree $T\in \BTT_n$ has neighbors $T'$ such that $D_p(T,T')=4$. Indeed, take an internal node $v$ in $T$ of largest depth, so that its two children are leaves. Let $w$ be the parent of $v$. Then, either the other child of $w$ is a leaf, in which case $w$ is the root of a triple  and reorganizing its taxa we obtain a neighbor of $T$, or the other child of $w$ is the parent of a cherry (it will have the same, maximum, depth as $v$), in which case $w$ is the root of a completely branched quartet and reorganizing its taxa we obtain a neighbor of $T$.

We focus now on the diameter, that is, the largest value of $d_{\varphi,p}$ on the spaces of unweighted phylogenetic trees (as in the case of the minimum non-zero value, and for the same reasons,  the  problem of finding the diameter makes no sense for weighted trees). Unfortunately, we have not been able to find  exact formulas for it, but we have obtained its order, which we give in the next proposition. We also give its proof  in the Appendix.

\begin{proposition}\label{prop:diam}
The diameter of $d_{\varphi,p}$  on $\UTT_n$, $\TT_n$, and $\BTT_n$ is in $\Theta(n^2)$ if $p=0$ and in $\Theta(n^{(p+2)/p})$ if $p\geq 1$.
\end{proposition}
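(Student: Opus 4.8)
The plan is to prove matching upper and lower bounds, treating the cases $p=0$ and $p\geq 1$ in parallel. For the upper bound, I would first observe that in any unweighted tree on $n$ taxa each cophenetic value $\varphi_T(i,j)=\delta_T([i,j]_T)$ is the depth of a node, and that the maximum depth of a node in a tree of $\UTT_n$, $\TT_n$ or $\BTT_n$ is $O(n)$: a root-to-leaf path contains at most $n-1$ branching nodes and, since elementary non-root nodes must be labeled, at most $n$ elementary nodes, so its length is $O(n)$. Since the cophenetic vector has $n(n+1)/2=O(n^2)$ entries, for $p\geq 1$ we get $D_p(T_1,T_2)=\sum_{i\leq j}|\varphi_{T_1}(i,j)-\varphi_{T_2}(i,j)|^p\leq O(n^2)\cdot O(n)^p=O(n^{p+2})$, whence $d_{\varphi,p}(T_1,T_2)=O(n^{(p+2)/p})$; for $p=0$ the number of nonzero entries is trivially at most $n(n+1)/2=O(n^2)$.

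For the lower bound I would exhibit an explicit pair of trees realizing the order. The key object is the binary caterpillar $K_n$, with internal nodes $v_0,\dots,v_{n-2}$ forming a path from the root $v_0$ and a leaf $a_k$ hanging from each $v_k$; here $\varphi_{K_n}(a_k,a_l)=k$ for the leaves $a_k,a_l$ with $k<l$. I would then estimate $\|\varphi(K_n)\|_p$: restricting to the $\lceil n/2\rceil$ deepest leaves yields $\binom{n/2}{2}=\Theta(n^2)$ pairs each with cophenetic value $\geq n/2$, so $\|\varphi(K_n)\|_p^p\geq \Theta(n^2)\cdot (n/2)^p=\Theta(n^{p+2})$, i.e.\ $\|\varphi(K_n)\|_p=\Omega(n^{(p+2)/p})$. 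The other tree is chosen so that its cophenetic vector is negligible in $L^p$ norm: for $\UTT_n$ and $\TT_n$ I would take the star $S_n$ (all distinct pairs have cophenetic value $0$, all depths $1$), with $\|\varphi(S_n)\|_p=n^{1/p}$; for $\BTT_n$, where the star is not available, I would take a (nearly) balanced binary tree $B_n$, in which every node has depth $O(\log n)$, so $\|\varphi(B_n)\|_p=O(n^{2/p}\log n)$. In both cases the comparison tree's norm is $o(n^{(p+2)/p})$ because $(p+2)/p=1+2/p$, and the reverse triangle inequality gives $d_{\varphi,p}(K_n,S_n)\geq \|\varphi(K_n)\|_p-\|\varphi(S_n)\|_p=\Omega(n^{(p+2)/p})$, and likewise with $B_n$ in place of $S_n$ for $\BTT_n$. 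For $p=0$ I would argue directly: the $\Theta(n^2)$ deep pairs of $K_n$ have cophenetic value $\geq n/2$, which exceeds both the value $0$ in $S_n$ and the value $O(\log n)$ in $B_n$, so these positions of the cophenetic vectors differ, giving $d_{\varphi,0}(K_n,S_n)=\Omega(n^2)$ (and similarly for $B_n$).

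Combining the two bounds yields the stated $\Theta$ estimates in all three spaces. The steps that require the most care are: (i) verifying that the maximum node depth really is $O(n)$ in $\UTT_n$, where nested taxa and an elementary root could a priori lengthen paths, and (ii) the lower bound for $\BTT_n$, where one must replace the star by a balanced tree and check that its $L^p$ norm $O(n^{2/p}\log n)$ is genuinely of smaller order than $n^{(p+2)/p}=n\cdot n^{2/p}$, the extra factor $n$ absorbing the logarithm. The main obstacle I anticipate is organizing the $p=0$ argument uniformly with $p\geq 1$: the reverse triangle inequality is the clean tool for $p\geq 1$, but for the Hamming distance one must instead count differing coordinates explicitly, which is why isolating a block of $\Theta(n^2)$ pairs whose caterpillar cophenetic values provably exceed those of the comparison tree is the crucial technical device.
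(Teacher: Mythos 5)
Your proposal is correct, and it takes a genuinely different --- and substantially shorter --- route than the paper's. For the upper bound, the paper works hard: it introduces $\varphi^{(p)}(T)=\sum_{i\leq j}\varphi_T(i,j)^p$, proves via a lengthy exchange argument that its maximum over each space is attained exactly at rooted caterpillars, and then evaluates $\varphi^{(p)}(K_n)=\sum_{k=1}^{n-1}(n-k)k^p+(n-1)^p=\Theta(n^{p+2})$ using power-sum estimates; you instead observe that every entry of a cophenetic vector is a node depth, hence $O(n)$ (your accounting of branching versus elementary nodes on a root-to-leaf path in $\UTT_n$ is right), and that there are $O(n^2)$ entries, which already gives $d_{\varphi,p}=O(n^{(p+2)/p})$. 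For the lower bound with $p\geq 1$, the paper first settles $p=1$ by writing $\|\varphi(T)\|_1=S(T)+\Phi(T)$ and importing known extremal values of the Sackin index and the total cophenetic index from the literature, then transfers to general $p$ via the inequality $\|x\|_1\leq m^{1-1/p}\|x\|_p$; you instead estimate $\|\varphi(K_n)\|_p$ directly from the $\Theta(n^2)$ deep pairs of the caterpillar and apply the reverse triangle inequality against a tree of negligible norm (the star in $\UTT_n$, $\TT_n$; a balanced binary tree in $\BTT_n$, where the $\log n$ factor is indeed absorbed since $n^{2/p}\log n=o(n^{1+2/p})$). For $p=0$ the paper compares two oppositely labeled caterpillars and counts coincidences $n-j=i-1$, while you count the deep caterpillar pairs whose values exceed anything in the star or balanced tree; both work. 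What the paper's route buys is structural information of independent interest --- caterpillars are the exact maximizers of $\varphi^{(p)}$, and the diameter problem is tied to known tree-shape indices --- whereas your route buys self-containedness and economy: it proves exactly the order-of-growth statement of the proposition with elementary estimates and no appeal to external extremal results.
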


In particular, the diameter of $d_{\varphi,1}$ on these spaces  is in $\Theta(n^{3})$, and  the diameter of $d_{\varphi,2}$   is in $\Theta(n^{2})$.

\subsection*{Numerical experiments}

We have performed several numerical experiments concerning the distributions of $d_{\varphi,1}$ and $d_{\varphi,2}$, and the correlation of these metrics with other phylogenetic tree comparison metrics. The results of all these experiments can be found in the Supplementary Material web page \url{http://bioinfo.uib.es/~recerca/phylotrees/cophidist/}. In this section we report only on some  significant results obtained through these experiments.

As a first experiment, we have generated all trees in $\BTT_n$ and $\TT_n$, for $n=3,4,5,6$, and for all pairs of them we have computed:
\begin{itemize}
\item The cophenetic distances $d_{\varphi,1}$ and $d_{\varphi,2}$ on $\BTT_n$ and $\TT_n$.
  
\item The Robinson-Foulds distance $d_{\mathrm{RF}}$  on $\BTT_n$ and $\TT_n$ \cite{robinson.foulds:mb81}.

\item The classical  nodal distances $d_{\mathrm{nodal},1}$  and $d_{\mathrm{nodal},2}$ on $\BTT_n$, which compare the vectors of  distances between pairs of taxa by means of the Manhattan and the Euclidean norms, respectively; see  \cite{willcliff:taxon71} and \cite{farris:sz73}, respectively, as well as \cite{cardona.ea:08a}.

\item The splitted nodal distances $d_{\mathrm{nodal},1}^{\mathrm{sp}}$  and
  $d_{\mathrm{nodal},2}^{\mathrm{sp}}$  on $\TT_n$,  which compare the matrices of splitted path lengths between pairs of taxa by means of the Manhattan and the Euclidean norms, respectively; see \cite{cardona.ea:08a}.
\end{itemize}

In order to analyze this data, we have plotted 2D-histograms for all pairs of metrics and
we have computed their Spearman's rank correlation
coefficient. 
On the one hand, the 2D-histograms for $\BTT_6$ and $\TT_6$ (the most significative case) are given in
Figures~\ref{fig:2dhisto-petit-bin} and \ref{fig:2dhisto-petit-nobin}, respectively. For each pair of distances, we have
divided the range of values that each of the distances 
gets into $25$ subranges, and computed how many pairs of trees fall
into each of the $25\times25$ different possibilities. Each of these
possibilities is represented by a rectangle in a grid, whose
darkness level is proportional of the number of trees.
On the other hand, the Spearman's rank correlation
coefficient between the aforementioned distances in the most
significative case of $n=6$ are given in
Tables~\ref{tab:spearman_petit-bin} and \ref{tab:spearman_petit-nobin}.

 \begin{center}
\begin{figure}[htb]
\begin{center}
\includegraphics[width=0.8\linewidth]{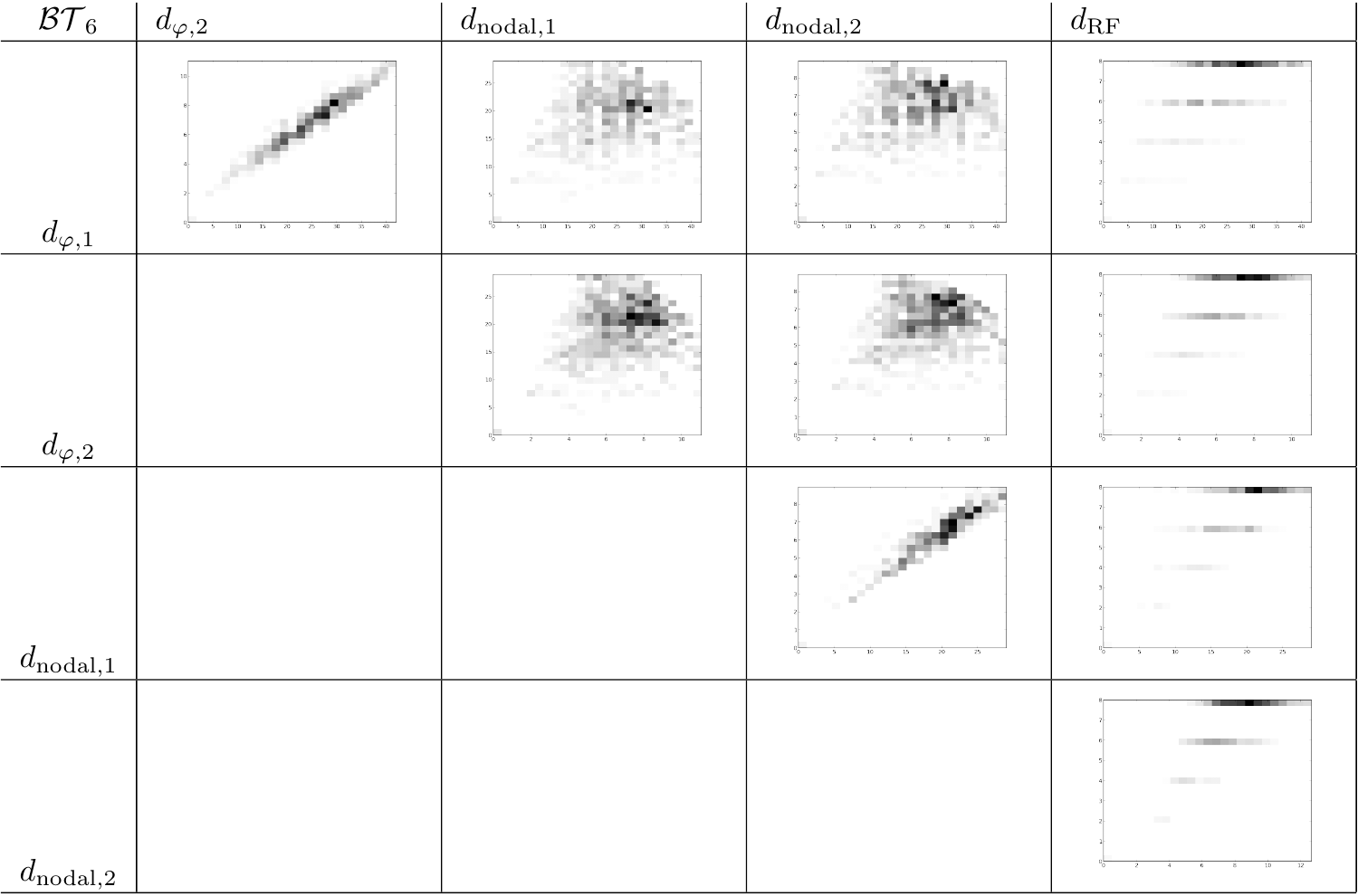}
\end{center}
\caption{\label{fig:2dhisto-petit-bin} 2D-histograms showing the relationship between different distances on $\BTT_6$.}
\end{figure}
\end{center}

 \begin{center}
\begin{figure}[htb]
\begin{center}
\includegraphics[width=0.8\linewidth]{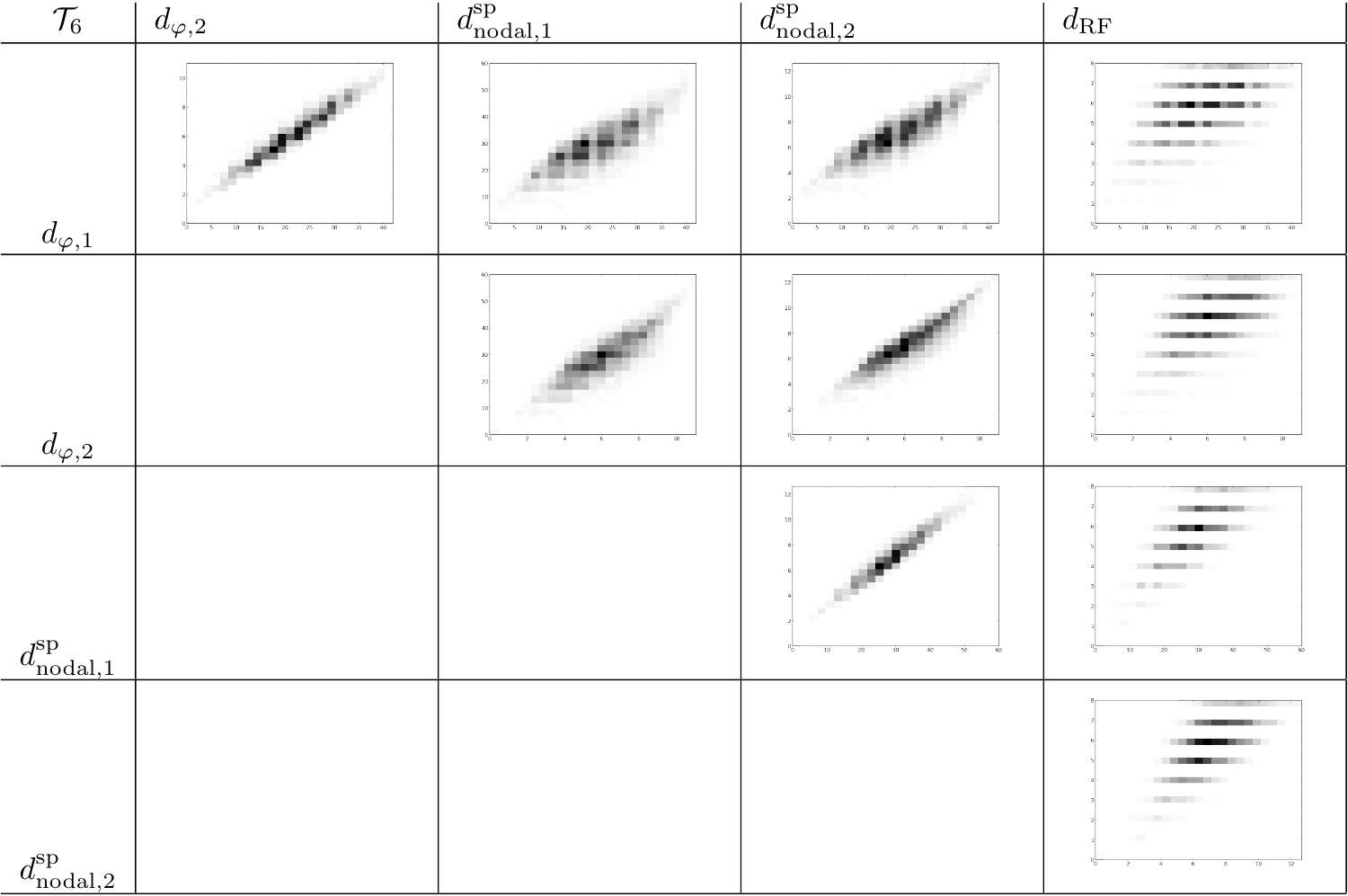}
\end{center}
\caption{\label{fig:2dhisto-petit-nobin} 2D-histograms showing the relationship between different distances on $\TT_6$.}
\end{figure}
\end{center}

 \begin{center}
\begin{table}[htb]
 \begin{center}
     \begin{tabular}{c|l|l|l|l|}
    $\BTT_6$ & $d_{\varphi,2}$ & $d_{\rm{nodal},1}$ & $d_{\rm{nodal},2}$ & $d_{\rm{RF}}$\\\hline
    $d_{\varphi,1}$ & 0.966309 & 0.066217 & 0.057751 & 0.473775\\\hline
    $d_{\varphi,2}$ & &0.093708 & 0.100914& 0.501130\\\hline
    $d_{\rm{nodal},1}$ & & &0.928421 & 0.585127\\\hline
    $d_{\rm{nodal},2}$ & & & & 0.623644\\\hline
 \end{tabular}
\end{center}
 \caption{\label{tab:spearman_petit-bin} Spearman's rank correlation coefficient between different distances on $\BTT_6$.}
\end{table}
\end{center}

\begin{center}
\begin{table}[htb]
 \begin{center}
    \begin{tabular}{c|l|l|l|l|}
    $\TT_6$ & $d_{\varphi,2}$ & $d_{\rm{nodal},1}^{\rm{sp}}$ & $d_{\rm{nodal},2}^{\rm{sp}}$ & $d_{\rm{RF}}$\\\hline
    $d_{\varphi,1}$ & 0.965115 & 0.803159 & 0.864113 & 0.505631 \\\hline
    $d_{\varphi,2}$ & & 0.831387 & 0.902573 & 0.529837\\\hline
    $d_{\rm{nodal},1}^{\rm{sp}}$ & & & 0.957057 & 0.665752 \\\hline
    $d_{\rm{nodal},2}^{\rm{sp}}$ & & & & 0.642203\\\hline
 \end{tabular}
\end{center}
 \caption{\label{tab:spearman_petit-nobin} Spearman's rank correlation coefficient between different
    distances on $\TT_6$.}
\end{table}
\end{center}

These histograms and tables show that $d_{\varphi,1}$ and $d_{\varphi,2}$ are highly correlated, and that each $d_{\varphi,i}$, $i=1,2$, is highly correlated with the corresponding $d_{\mathrm{nodal},i}^{\rm{sp}}$ on $\TT_6$. This is not a surprise, because both types of metrics are based on encodings of phylogenetic trees related to the position in the tree of the LCA of every pair of leaves: remember the relationship between depths, cophenetic values and splitted path lengths recalled in the Background section. More surprising to us is the low correlation between  each $d_{\varphi,i}$,  and the corresponding $d_{\mathrm{nodal},i}$ on $\BTT_6$, because of the  relationship between depths, cophenetic values and patristic distances also recalled in the Background section. The very low correlation between the cophenetic metrics and the Robinson-Foulds metric simply shows that these metrics measure different notions of similarity.

Our second experiment is for values of $n$ greater than $6$.
The numbers of trees in each of the
spaces $\TT_n$ and $\BTT_n$ make it unfeasible to compute the
distances between all pairs of trees. Hence, we have randomly and uniformly
generated pairs of trees in each of these spaces for
$n=10,20,\dots,100$ until the approximated value of the Spearman's
rank correlations of all pairs of distances converge up to 3
significant digits.
The corresponding  2D-histograms and Spearman's rank correlation
coefficient  tables for the most significative case of $n=100$ are shown
in Figures~\ref{fig:2dhisto-bin} and \ref{fig:2dhisto-nobin} and
Tables~\ref{tab:spearman-bin} and \ref{tab:spearman-nobin}. These diagrams and tables confirm the very high correlation between
$d_{\varphi,1}$ and $d_{\varphi,2}$, and very low correlation of these metrics and the nodal and Robinson-Foulds metrics. The correlation between each $d_{\varphi,i}$, $i=1,2$, and  the corresponding $d_{\mathrm{nodal},i}^{\rm{sp}}$ is still significant, but it decreases as $n$ increases.

 \begin{center}
\begin{figure}[htb]
\begin{center}
\includegraphics[width=0.8\linewidth]{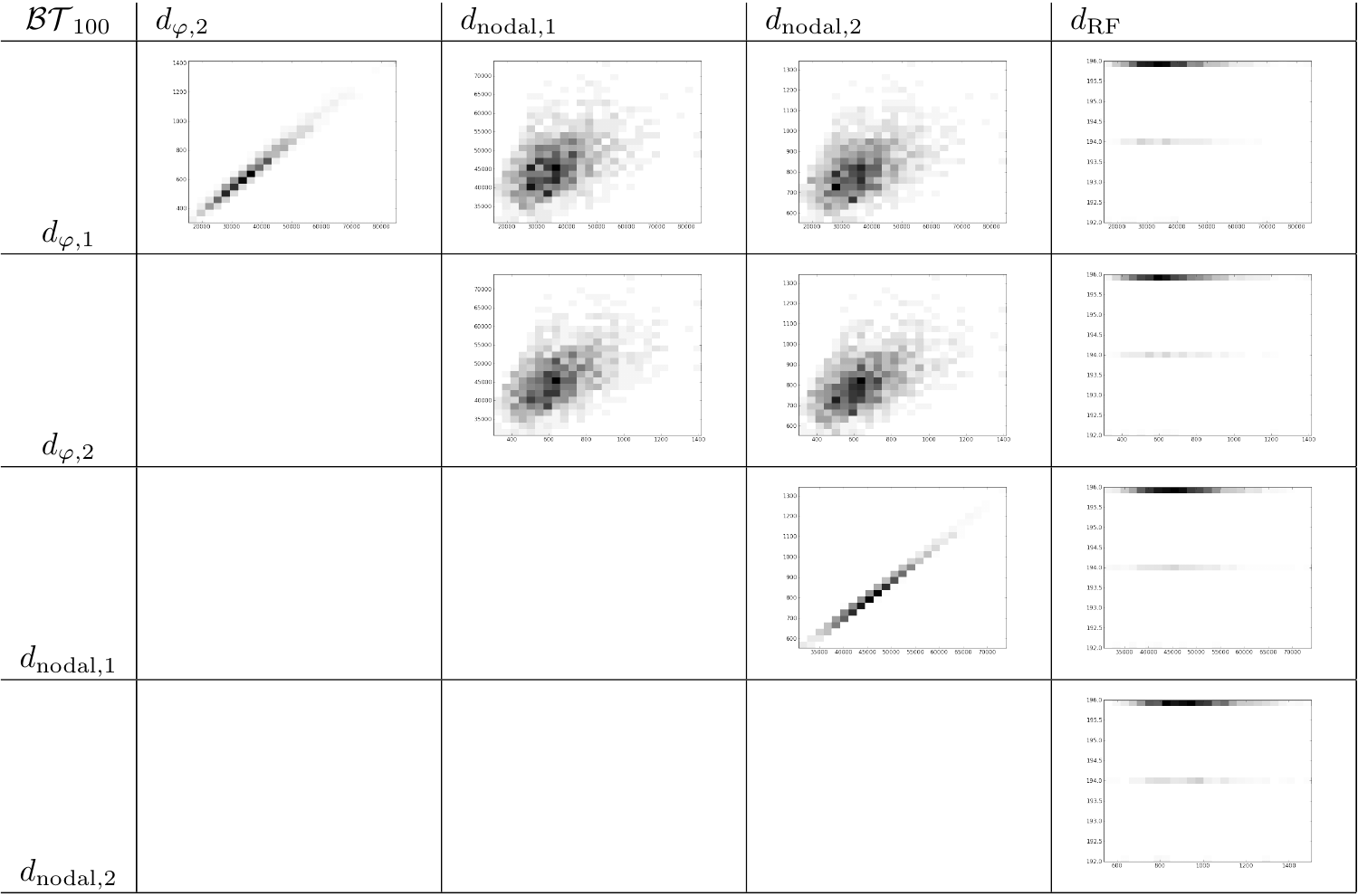}
\end{center}
\caption{\label{fig:2dhisto-bin} 2D-histograms showing the relationship between different distances on $\BTT_6$.}
\end{figure}
\end{center}

 \begin{center}
\begin{figure}[htb]
\begin{center}
\includegraphics[width=0.8\linewidth]{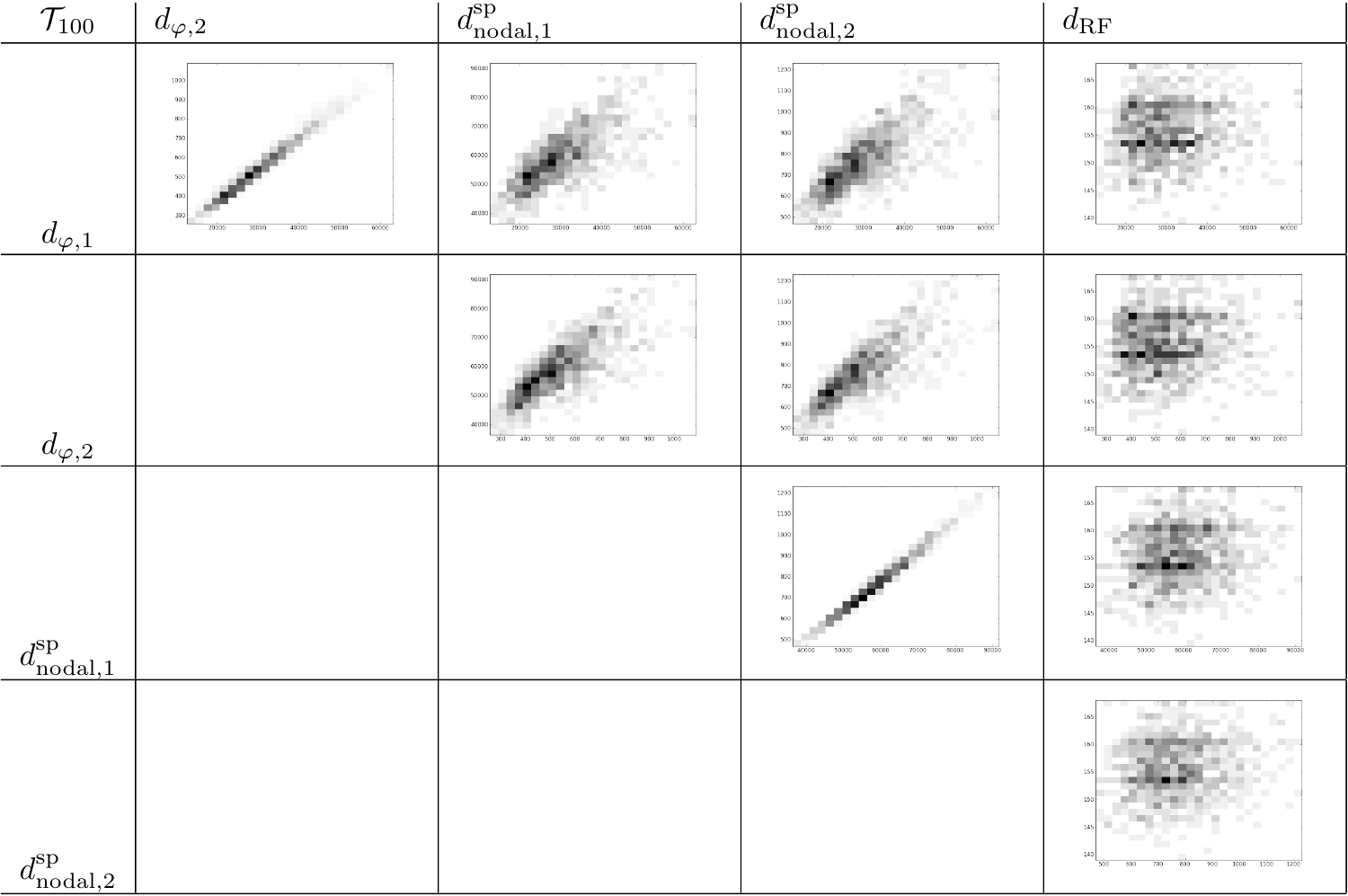}
\end{center}
\caption{\label{fig:2dhisto-nobin} 2D-histograms showing the relationship between different distances on $\TT_6$.}
\end{figure}
\end{center}

\begin{center}
\begin{table}[htb]
 \begin{center}
      \begin{tabular}{c|l|l|l|l|}
    $\BTT_{100}$ & $d_{\varphi,2}$ & $d_{\rm{nodal},1}$ & $d_{\rm{nodal},2}$ & $d_{\rm{RF}}$\\\hline
    $d_{\varphi,1}$ &0.986933 & 0.447140&0.448265 & -0.00080\\\hline
    $d_{\varphi,2}$ & &0.513306 &0.514363 &0.003281 \\\hline
    $d_{\rm{nodal},1}$ & & &0.998478 &0.012643\\\hline
    $d_{\rm{nodal},2}$ & & & & 0.012391\\\hline
  \end{tabular}
\end{center}
 \caption{\label{tab:spearman-bin} 2D-histograms showing the relationship between different
    distances on $\BTT_{100}$.}
\end{table}
\end{center}

\begin{center}
\begin{table}[htb]
 \begin{center}
    \begin{tabular}{c|l|l|l|l|}
    $\TT_{100}$ & $d_{\varphi,2}$ & $d_{\rm{nodal},1}^{\rm{sp}}$ & $d_{\rm{nodal},2}^{\rm{sp}}$ & $d_{\rm{RF}}$\\\hline
    $d_{\varphi,1}$ &0.987184 & 0.731755& 0.753918& 0.091556\\\hline
    $d_{\varphi,2}$ & &0.780030 &0.803423 &0.088390 \\\hline
    $d_{\rm{nodal},1}^{\rm{sp}}$ & & &0.990944 &0.132030 \\\hline
    $d_{\rm{nodal},2}^{\rm{sp}}$ & & & & 0.118336\\\hline
  \end{tabular}
\end{center}
 \caption{\label{tab:spearman-nobin} 2D-histograms showing the relationship between different
    distances on $\TT_{100}$.}
\end{table}
\end{center}

Finally, in Figure~\ref{fig:histograms} we have plotted the histograms of the
distributions of $d_{\varphi,1}$ and $d_{\varphi,2}$ on $\BTT_n$ and
$\TT_n$ for $n=10,20,\dots,100$. As it can be seen, they are positive skewed, like the splitted nodal metrics \cite[Fig. 5]{cardona.ea:08a}, but unlike other metrics like the Robinson-Foulds \cite{Steel88} or the transposition distance \cite[Fig. 2]{transdist}, which are negative skewed, or the  triples metric \cite{critchlow.ea:1996}, which is approximately normal.

 \begin{center}
\begin{figure}[p]
\begin{center}
\includegraphics[width=0.7\linewidth]{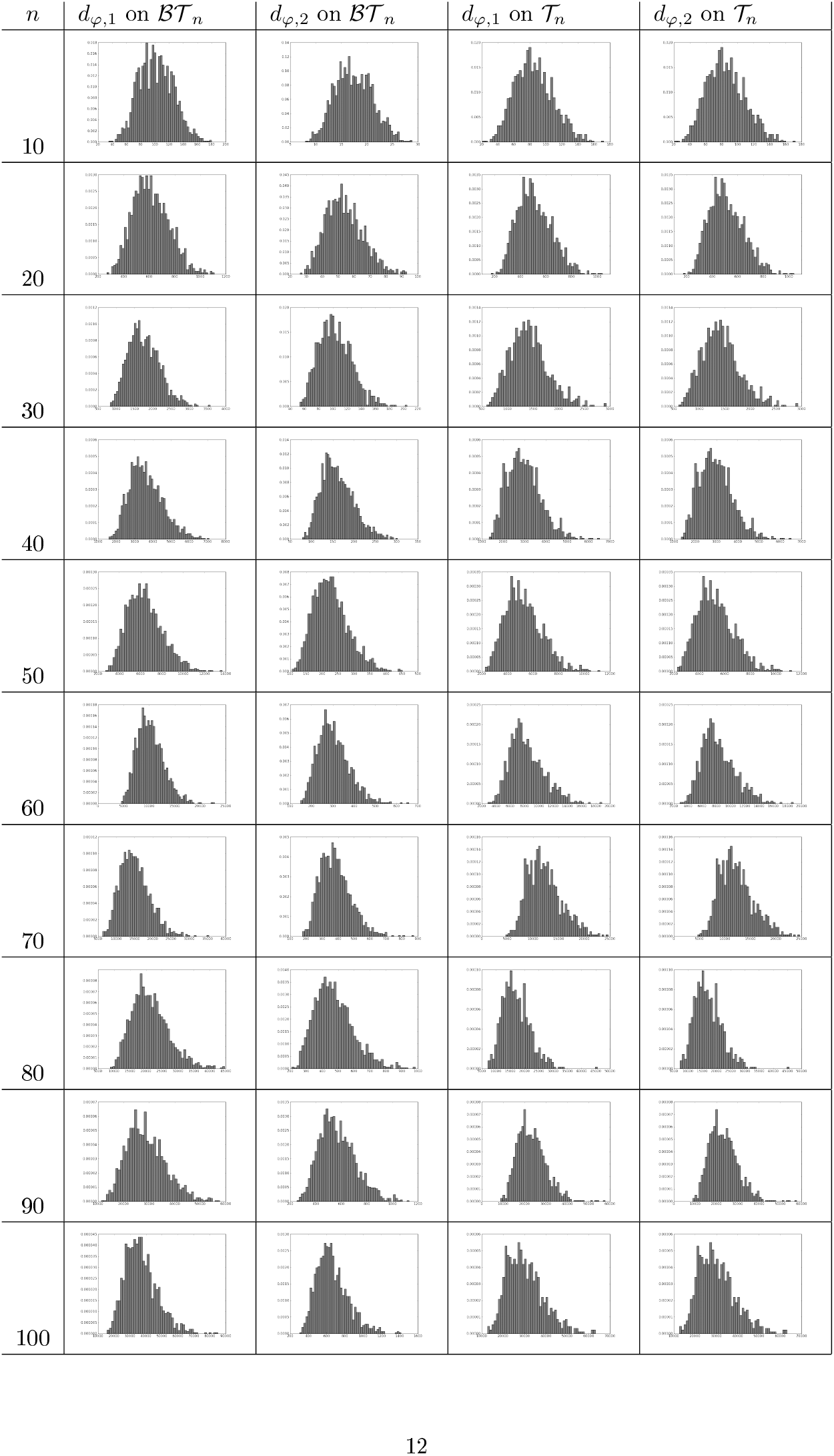}
\end{center}
\caption{\label{fig:histograms} Histograms of the
distributions of $d_{\varphi,1}$ and $d_{\varphi,2}$ on $\TT_n$ and
$\BTT_n$ for $n=10,20,\dots,100$.}
\end{figure}
\end{center}

\section*{Conclusions}

Following a fifty years old idea of Sokal and Rohlf \cite{sokal.roth:62}, we have encoded a weighted phylogenetic tree with nested taxa by means of its vector of cophenetic values of pairs of taxa, adding moreover to this vector the depths of single taxa. These positive real-valued vectors single out weighted phylogenetic trees with nested taxa, and therefore they can be used to define metrics to compare such trees. We have defined a family of metrics $d_{\varphi,p}$, for $p\in \{0\}\cup[1,\infty[$, by comparing these vectors through the $L^p$ norm.
 
We cannot advocate the use of any cophenetic metric $d_{\varphi,p}$ over
the other ones except, perhaps, warning against the use of the Hamming distance
$d_{\varphi,0}$ because it is too uninformative.
Since the most popular norms on $\RR^m$ are the Manhattan $L^1$ and the
Euclidean $L^2$, it seems natural to use $d_{\varphi,1}$ or $d_{\varphi,2}$. And since these two metrics are very highly correlated,  the comparison of trees using one or the other will not differ greatly. Each one of these metrics has its own advantages.
 
On the one hand, the computation of
$d_{\varphi,1}$ does not involve roots, and therefore it can be
computed exactly. Moreover, it takes integer values on unweighted trees and in this case its range of values is greater, thus being more discriminative.  
Actually, since 
$\|x\|_p\leq \|x\|_1$ for every $x\in \RR^m$ and $p\geq 1$,
we have that
$$
d_{\varphi,p}(T_1,T_2)\leq d_{\varphi,1}(T_1,T_2) \quad\mbox{ for every $T_1,T_2\in \WTT_n$}.
$$
On the other hand, the comparison of cophenetic vectors by
means of the Euclidean norm enables the use of many geometric
and clustering methods that are not available otherwise.  In particular, it is possible to compute the mean value of the square of $d_{\varphi,2}$ under different evolutionary models. We shall report on this elsewhere.

As a rule of thumb, and as we already advised in the context of splitted nodal metrics \cite{cardona.ea:08a}, we suggest using  $d_{\varphi,1}$ when the trees are unweighted, because these
trees can be seen as discrete objects and thus their comparison through a discrete
tool as the Manhattan norm seems appropriate.  When the trees have
arbitrary positive real weights, they should be understood as
belonging to a continuous space \cite{billera.ea:01}, and then the Euclidean norm is more
appropriate.

Future work will include a deeper study of the distribution of $d_{\varphi,1}$ and $d_{\varphi,2}$
on different spaces of unweighted phylogenetic trees.

\section*{Competing interests}
The authors declare that they have no competing interests.

\section*{Authors' contributions}
AM and FR developed the theoretical part of the paper. GC, LR and DS implemented the algorithms and performed the numerical experiments. GC and DS prepared the Supplementary Material web page. FR prepared the first version of the manuscript. All authors revised, discussed, and amended the manuscript. All authors read and approved the final manuscript. 

\section*{Acknowledgements}
The research reported in this paper has been partially supported by the Spanish government and the UE FEDER program, through project MTM2009-07165. We thank the comments and suggestions of the reviewers, which have led to a substantial improvement of this paper.

%
%:Bibliography

\section*{Appendix: Proofs of Propositions \ref{prop:minutt}--\ref{prop:diam}}
\label{app:min}

\subsection*{Proof of Proposition \ref{prop:minutt}}
By Lemma \ref{lem:pvs0}, it is enough to prove that the minimum non-zero value of $D_0$ is 1, and that all pairs $T,T'\in \UTT_n$ such that $D_0(T,T')=1$ also satisfy that $D_p(T,T')=1$ for every $p\geq 1$.

As we have seen in Example \ref{ex:contract}, if we contract a pendant arc in a tree $T$, we obtain a new tree $T'$ such that $D_p(T,T')=1$, for every $p\in \{0\}\cup[1,\infty[$, and this is of course the smallest possible non-negative value of $D_p$ on $\UTT_n$. It remains to prove that this is the only way we can obtain a pair of trees such that $D_0(T,T')=1$. 

So, let $T,T'\in \UTT_n$ be such that  $\varphi(T)=\varphi(T')+m\cdot e_{i,j}$ for some $m\geq 1$ and $1\leq i,j\leq n$ (where $e_{i,j}$ stands for the vector of length $n(n+1)/2$ with all entries 0 except an 1 in the entry corresponding to the pair $(i,j)$); that is, $T$ and $T'$ are such that $\fT(i,j)=\fTp(i,j)+m$,  for some $m\geq 1$, and  $\fT(x,y)=\fTp(x,y)$ for every $(x,y)\neq (i,j)$. Let us prove first of all that $m=1$. So, assume that $m\geq 2$ and let us reach a contradiction.

Since $\fT(i,j)>0$, there exists some taxon $k\neq i,j$ that is a descendant in $T$ of the parent of $[i,j]_T$. In other words, such that $[i,k]_T=[j,k]_T$ is the parent of $[i,j]_T$. But then
$$
\begin{array}{l}
\fTp(i,k)=\fT(i,k)=\fT(i,j)-1=\fTp(i,j)+(m-1)>\fTp(i,j)\\
\fTp(j,k)=\fT(j,k)=\fT(i,j)-1=\fTp(i,j)+(m-1)>\fTp(i,j)
\end{array}
$$
which cannot hold simultaneously: if $\fTp(i,k)>\fTp(i,j)$, then $\fTp(j,k)=\fTp(i,j)$. This shows that $m=1$, and thus $\varphi(T)=\varphi(T')+ e_{i,j}$.

Let us  prove now that it cannot happen that $i\neq j$. Indeed, assume that $i\neq j$. If $\varphi_{T'}(i,j)=\delta_{T'}(i)$, then 
$$
\varphi_T(i,j)=\fTp(i,j)+1=\delta_{T'}(i)+1=\delta_T(i)+1,
$$
 which is impossible. This implies that $\varphi_{T'}(i,j)<\delta_{T'}(i), \delta_{T'}(j)$. If, now, $\varphi_{T'}(i,j)<\delta_{T'}(i)-1$, then there will exist some leaf $k$ such that
$[i,k]_{T'}$ is the child of $[i,j]_{T'}$ in the path from $[i,j]_{T'}$ to $i$. Then
$\varphi_{T'}(i,k)=\varphi_{T'}(i,j)+1$ and $\varphi_{T'}(j,k)=\varphi_{T'}(i,j)$, which entail that
$$
\varphi_{T}(i,k)=\fTp(i,k)=\varphi_{T'}(i,j)+1=\varphi_{T}(i,j)>\varphi_{T'}(i,j)=\varphi_{T'}(j,k)=\varphi_{T}(j,k),
$$
 which is also impossible.
So, if $i\neq j$, the only possibility is that $\varphi_{T'}(i,j)=\delta_{T'}(i)-1=\delta_{T'}(j)-1$, but then
it would imply that $\varphi_{T}(i,j)=\fTp(i,j)+1=\delta_{T}(i)=\delta_{T}(j)$ and hence that $[i,j]_T=i=j$, which is again impossible.  

So, if $\varphi(T)=\varphi(T')+e_{i,j}$ then it must happen that $i=j$. In this case, moreover, $i$ must be a leaf in $T$ with unlabeled parent. Indeed, if $i$ is not a leaf, then there is some leaf $k$ such that $i=[i,k]_T$ and hence $\delta_T(i)=\varphi_T(i,k)$. Then,
$\delta_{T'}(i)=\delta_T(i)-1=\varphi_T(i,k)-1=\varphi_{T'}(i,k)-1$, which is impossible. So, $i$ is a leaf in $T$. And if its parent  is labeled, say with $l$, then
$\delta_T(i)=\delta_T(l)+1$ and $\delta_T(l)=\varphi_T(i,l)$. Thus, in $T'$,
$\delta_{T'}(i)=\delta_T(i)-1=\delta_T(l)=\delta_{T'}(l)$ and 
$\delta_{T'}(i)=\delta_T(l)=\varphi_T(i,l)=\varphi_{T'}(i,l)$, which is also impossible, since it would imply that $[i,l]_{T'}=i=l$.

So, finally,  it must happen that $i$ is a leaf in $T$ and its parent is not labeled. Let $T_0$ be the phylogenetic tree obtained from $T$ by contracting the pendant arc ending in $i$. Then $\varphi(T_0)=\varphi(T)-e_{i,i}=\varphi(T')$, and this implies, by Theorem \ref{th:char},  that $T_0= T'$.

This finishes the proof that the only pairs $T,T'\in \WTT_n$ such that $D_0(T,T')=1$ are those where one of them is obtained from the other by the contraction of a pendant arc. Since these pairs of trees also satisfy that $D_p(T,T')=1$ for every $p\geq 1$, this completes the proof of the proposition.
\qed

%%%%%%%%%%%%%%%%%%%%%%%%%%%%%%%%%%%%%%%%%%%%%
%%%%%%%%%%%%%%%%%%%%%%%%%%%%%%%%%%%%%%%%%%%%%
\subsection*{Proof of Proposition \ref{prop:mintt}}
To ease the task of the reader, we split this proof into several lemmas. To begin with, notice that there are pairs of trees $T,T'\in \TT_n$ such that $D_p(T,T') =3$ for every $p\in \{0\}\cup[1,\infty[$: for instance, by Example \ref{ex:contract}, when $T'$ is obtained from $T$ by contracting an arc ending in the root of a cherry. So, the minimum non-zero value of $D_p(T,T')$ on $\TT_n$ is at most $3$.

\begin{lemma}\label{lem:mtt0}
If $T,T'\in \TT_n$ are such that $D_0(T,T')>0$, then there exists a pair of different taxa $i\neq j$ such that
$\varphi_T(i,j)\neq \varphi_{T'}(i,j)$.
\end{lemma}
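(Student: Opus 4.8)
The plan is to argue by contraposition. I would assume that $\varphi_T(i,j)=\varphi_{T'}(i,j)$ for every pair of \emph{different} taxa $i\neq j$, and show that this forces $D_0(T,T')=0$, contradicting the hypothesis. Since $D_0$ is the Hamming distance between the cophenetic vectors $\varphi(T)$ and $\varphi(T')$, whose coordinates are indexed by all pairs $1\le i\le j\le n$, under this assumption the only coordinates that could still disagree are the diagonal ones $\varphi_T(i,i)=\delta_T(i)$, i.e.\ the depths of the taxa. So the whole argument reduces to recovering these depths from the off-diagonal values.

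The key point is that for trees in $\TT_n$ this recovery is exactly what the proof of Corollary~\ref{cor:char} provides: for every taxon $i$,
$$
\delta_T(i)=1+\max\{\varphi_T(i,j)\mid 1\le j\le n,\ j\neq i\}.
$$
The reason is that in an unweighted tree without nested taxa every taxon is a leaf and its parent $p$ is internal, unlabeled, and therefore has at least two children (for $n\ge 2$); hence some other taxon $j$ satisfies $[i,j]_T=p$, so that $\varphi_T(i,j)=\delta_T(i)-1$, while for every other $j'$ the node $[i,j']_T$ is an ancestor of $p$ and thus $\varphi_T(i,j')\le\delta_T(i)-1$.

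Applying this formula to both $T$ and $T'$ and using that all off-diagonal cophenetic values coincide, I would conclude that the two maxima agree, whence $\delta_T(i)=\delta_{T'}(i)$ for every $i$. Thus the diagonal entries of the cophenetic vectors also coincide, so $\varphi(T)=\varphi(T')$ and $D_0(T,T')=0$, contrary to assumption. There is essentially no obstacle here: the entire lemma follows from the depth-recovery formula of Corollary~\ref{cor:char}, the only caveat being that one needs $n\ge 2$ so that the maximum is taken over a non-empty index set, which certainly holds in the intended range $n\ge 3$.
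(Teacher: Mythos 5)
Your proof is correct, and it is close in spirit to the paper's, but the two differ in what they actually invoke, and the difference is worth noting. The paper's proof is one line: if $\varphi_T(i,j)=\varphi_{T'}(i,j)$ for every $i\neq j$, then Corollary~\ref{cor:char} gives $T=T'$, and hence $D_0(T,T')=0$. You instead inline the depth-recovery formula that underlies the \emph{proof} of Corollary~\ref{cor:char}, namely $\delta_T(i)=1+\max\{\varphi_T(i,j)\mid 1\leq j\leq n,\ j\neq i\}$, and conclude directly that the diagonal entries, and hence the whole cophenetic vectors, coincide, so that $D_0(T,T')=0$ by the very definition of the Hamming distance. This buys you a lighter logical footprint: since $D_0=0$ only requires equality of vectors, you never need the isomorphism conclusion, and in particular you bypass Theorem~\ref{th:char} (and the Semple--Steel $X$-tree result behind it), on which Corollary~\ref{cor:char} rests. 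Your justification of the formula is also sound: in a tree of $\TT_n$ every taxon is a leaf whose parent is unlabeled, and an unlabeled elementary node must be the root, which for a parent of a leaf would force $n=1$; so for $n\geq 2$ the parent has a second child, some taxon $j$ below it realizes $\varphi_T(i,j)=\delta_T(i)-1$, and no taxon can exceed this value since $[i,j']_T$ is always a (weak) ancestor of the parent. The caveat you flag is real: for $n=1$ the statement actually fails (the single-node tree and the tree consisting of an unlabeled root above the leaf have $D_0=1$ but admit no pair of distinct taxa), which is harmless since the lemma is only applied for $n\geq 3$.
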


\begin{proof}
If $\varphi_T(i,j)= \varphi_{T'}(i,j)$ for every $i\neq j$, then, by Corollary \ref{cor:char}, $T= T'$ and therefore $D_0(T,T')=0$.
\end{proof}

So, every pair of  phylogenetic trees in $\TT_n$ at non-zero $D_0$ distance must have a pair of different leaves with different cophenetic values.

\begin{lemma}\label{lem:mtt1}
Let $T,T'\in \TT_n$ be such that $\varphi_T(i,j)=\varphi_{T'}(i,j)+m$, for some $1\leq i< j\leq n$ and some $m\geq 1$. Let $k\neq i,j$ be a leaf such that
there exists a path from $[i,j]_{T'}$ to $[i,k]_{T'}$ of length $l$,  for some $l\geq 1$.  Then:
\begin{enumerate}[(a)]
\item If $\varphi_T(i,k)=\varphi_{T'}(i,k)$, then $\varphi_T(j,k)\geq \varphi_{T'}(j,k)+\min\{m,l\}$
\item If $\varphi_T(j,k)=\varphi_{T'}(j,k)$, then $\varphi_T(i,k)= \varphi_{T'}(i,k)-l$
\end{enumerate}
\end{lemma}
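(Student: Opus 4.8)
The plan is to first read off every cophenetic value we need in $T'$ directly from the structural hypothesis, and then to reduce both assertions to the ultrametric three-point condition for $-\varphi_T$ recalled in the Background, namely that for any three taxa the smallest of their three pairwise cophenetic values is attained at least twice.

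First I would fix notation by writing $d=\varphi_{T'}(i,j)$, so that the standing hypothesis reads $\varphi_T(i,j)=d+m$. The existence of a path of length $l\geq 1$ from $[i,j]_{T'}$ to $[i,k]_{T'}$ says that $[i,k]_{T'}$ is a proper descendant of $[i,j]_{T'}$; since it is also an ancestor of $i$, it lies on the path from $[i,j]_{T'}$ down to $i$, whence $\varphi_{T'}(i,k)=\delta_{T'}([i,k]_{T'})=d+l$. Moreover $k\preceq[i,k]_{T'}$, so $k$ lies in the same child subtree of $[i,j]_{T'}$ as $i$, while $j$ lies in a different one; therefore $[j,k]_{T'}=[i,j]_{T'}$ and $\varphi_{T'}(j,k)=d$. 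This pins down the three reference values $\varphi_{T'}(i,j)=d$, $\varphi_{T'}(i,k)=d+l$, $\varphi_{T'}(j,k)=d$ in terms of which both conclusions are phrased.

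The key step is then to apply the three-point condition in $T$ to the triple $i,j,k$, whose values are $\varphi_T(i,j)=d+m$, $\varphi_T(i,k)$ and $\varphi_T(j,k)$. For part (a), assuming $\varphi_T(i,k)=\varphi_{T'}(i,k)=d+l$, both $\varphi_T(i,j)=d+m$ and $\varphi_T(i,k)=d+l$ exceed $d$; were $\varphi_T(j,k)<\min\{d+m,d+l\}$, the minimum of the triple would be uniquely attained at the pair $(j,k)$, contradicting the three-point condition, so $\varphi_T(j,k)\geq d+\min\{m,l\}=\varphi_{T'}(j,k)+\min\{m,l\}$. For part (b), assuming $\varphi_T(j,k)=\varphi_{T'}(j,k)=d$, the value $\varphi_T(i,j)=d+m$ is strictly larger than $d$ because $m\geq 1$, so in order to attain the minimum value $d$ a second time we are forced to have $\varphi_T(i,k)=d=(d+l)-l=\varphi_{T'}(i,k)-l$, as claimed.

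The only genuinely delicate part is the bookkeeping inside $T'$, namely locating $k$ in the $i$-side subtree of $[i,j]_{T'}$ so as to justify $\varphi_{T'}(j,k)=\varphi_{T'}(i,j)$ and $\varphi_{T'}(i,k)=\varphi_{T'}(i,j)+l$; once these three reference values are correct, both conclusions are immediate consequences of the three-point condition together with a one-line case analysis. I expect no further obstacle, since beyond the single hypothesis $\varphi_T(i,j)=\varphi_{T'}(i,j)+m$ the argument never uses any relation between $T$ and $T'$, treating each tree on its own through the ultrametric property.
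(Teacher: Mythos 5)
Your proof is correct and takes essentially the same route as the paper's: both first translate the structural hypothesis into the values $\varphi_{T'}(i,k)=\varphi_{T'}(i,j)+l$ and $\varphi_{T'}(j,k)=\varphi_{T'}(i,j)$, and then constrain the triple $(i,j,k)$ inside $T$. The only difference is presentational: where the paper unfolds the cases $m>l$, $m=l$, $m<l$ through explicit comparisons of $[i,j]_T$, $[i,k]_T$, $[j,k]_T$, you invoke once the equivalent fact that the minimum cophenetic value of a triple is attained at least twice, which compresses part (a) to a single step.
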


\begin{proof}
From the assumptions we have that  $\varphi_{T'}(i,k)=\varphi_{T'}(i,j)+l=\fTp(j,k)+l$. Now:
\begin{enumerate}[(a)]
\item Assume that $\varphi_T(i,k)=\varphi_{T'}(i,k)$. Then,
$$
\fT(i,k)=\fTp(i,k)=\fTp(i,j)+l=\fT(i,j)-(m-l),
$$
and then
\begin{itemize}
\item If $m>l$, then $\fT(i,k)<\fT(i,j)$, that is, $[i,j]_T\prec[i,k]_T$, and thus
$$
\fT(j,k)=\fT(i,k)=\fTp(i,k)=\fTp(j,k)+l
$$
\item If $m=l$, then $\fT(i,k)=\fT(i,j)$, that is, $[i,k]_T=[i,j]_T$, and thus
$$
\fT(j,k)\geq \fT(i,j)=\fTp(i,j)+m=\fTp(j,k)+m
$$
\item If $m<l$, then $\fT(i,k)>\fT(i,j)$, that is, $[i,k]_T\prec [i,j]_T$, and thus
$$
\fT(j,k)= \fT(i,j)=\fTp(i,j)+m=\fTp(j,k)+m
$$
\end{itemize}

\item Assume that $\varphi_T(j,k)=\varphi_{T'}(j,k)$. Then
$$
\fT(j,k)=\fTp(j,k)=\fTp(i,j)=\fT(i,j)-m,
$$
so that $[i,j]_T\prec [j,k]_T$, and thus
$$
\fT(i,k)=\fT(j,k)=\fTp(j,k)=\fTp(i,j)=\fTp(i,k)-l
$$
\end{enumerate}
\end{proof}

As a direct consequence of this lemma we obtain the following result.

\begin{corollary}\label{cor:mtt1}
Let $T,T'\in \TT_n$ be such that $\varphi_T(i,j)=\varphi_{T'}(i,j)+m$, for some $1\leq i< j\leq n$ and some $m\geq 1$. Let $N$ be the number of leaves $k$ such that $k\neq i,j$ and either
$[i,k]_{T'}\prec [i,j]_{T'}$ or $[j,k]_{T'}\prec [i,j]_{T'}$. Then,
$$
D_0(T,T')\geq N+1.
$$
\ \qed
\end{corollary}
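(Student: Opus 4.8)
The plan is to read $D_0(T,T')$ as the number of non-zero entries of $\varphi(T)-\varphi(T')$, so that it suffices to exhibit $N+1$ pairwise distinct entries of the cophenetic vectors at which $T$ and $T'$ disagree. One of these will be the pair $(i,j)$ itself, and the remaining $N$ will be produced, one per leaf counted in the statement, via Lemma \ref{lem:mtt1}. Since any collection of distinct off-diagonal differing entries is counted by $D_0(T,T')$, such a construction immediately yields the bound.

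First I would note that $(i,j)$ is a differing entry, because by hypothesis $\varphi_T(i,j)=\varphi_{T'}(i,j)+m$ with $m\ge 1$. Next, fix a leaf $k\neq i,j$ of the kind counted by $N$, say one with $[i,k]_{T'}\prec[i,j]_{T'}$; as the trees are unweighted, the path from $[i,j]_{T'}$ to $[i,k]_{T'}$ has some length $l\ge 1$, so Lemma \ref{lem:mtt1} applies to this $k$. The key step is a short case analysis ruling out that both entries indexed by $\{i,k\}$ and $\{j,k\}$ agree: if $\varphi_T(i,k)=\varphi_{T'}(i,k)$, then part (a) gives $\varphi_T(j,k)\ge\varphi_{T'}(j,k)+\min\{m,l\}$, and since $\min\{m,l\}\ge 1$ this forces $\varphi_T(j,k)\neq\varphi_{T'}(j,k)$; dually, if $\varphi_T(j,k)=\varphi_{T'}(j,k)$, then part (b) gives $\varphi_T(i,k)=\varphi_{T'}(i,k)-l\neq\varphi_{T'}(i,k)$. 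Hence at least one of the two pairs $\{i,k\}$ and $\{j,k\}$ indexes a differing entry. For a leaf $k$ with $[j,k]_{T'}\prec[i,j]_{T'}$ the same conclusion follows by exchanging the roles of $i$ and $j$, which is legitimate because the hypothesis $\varphi_T(i,j)=\varphi_{T'}(i,j)+m$ is symmetric in $i$ and $j$.

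Finally, choosing for each such $k$ one witnessing pair $\{i,k\}$ or $\{j,k\}$, I would observe that these choices are automatically distinct and distinct from $(i,j)$: each chosen pair contains the leaf $k$, and distinct $k$ (on either side) give pairs containing distinct leaves outside $\{i,j\}$. Together with $(i,j)$ this produces $N+1$ pairwise distinct entries at which $\varphi(T)$ and $\varphi(T')$ differ, whence $D_0(T,T')\ge N+1$. I do not expect a genuine obstacle, as the result is a bookkeeping consequence of Lemma \ref{lem:mtt1}; the only point demanding care is this disjointness argument, namely that a single entry such as $\{i,k\}$ is never reused for two different leaves, which holds precisely because every witnessing pair contains its own $k$.
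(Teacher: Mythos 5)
Your proof is correct and follows essentially the same route as the paper, which states this corollary without proof as a ``direct consequence'' of Lemma \ref{lem:mtt1}: the entry $(i,j)$ differs by hypothesis, each counted leaf $k$ forces (via parts (a) and (b) of that lemma, using symmetry of $\varphi$ in $i$ and $j$ for the leaves below the path toward $j$) at least one differing entry among $\{i,k\}$ and $\{j,k\}$, and these witnesses are pairwise distinct because each contains its own $k\notin\{i,j\}$. You have simply made explicit the bookkeeping the paper leaves implicit.
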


\begin{lemma}\label{lem:mtt2}
Let $T,T'\in \TT_n$ be such that $D_0(T,T')\leq 3$. If $\varphi_T(i,j)=\varphi_{T'}(i,j)+m$, for some $1\leq i< j\leq n$ and some $m\geq 1$, then $m=1$.
\end{lemma}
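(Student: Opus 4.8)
The plan is to argue by contradiction: assume $m\ge 2$ and deduce that $D_0(T,T')\ge 4$, contradicting the hypothesis $D_0(T,T')\le 3$. Throughout, write $a=[i,j]_{T'}$ and $d'=\delta_{T'}(a)=\fTp(i,j)$, and $b=[i,j]_T$, so that $\delta_T(b)=\fT(i,j)=d'+m\ge d'+2$. Since $i$ and $j$ are leaves lying strictly below $b$ in $T$, we have $\delta_T(i),\delta_T(j)\ge d'+m+1\ge d'+3$; this forced depth in $T$ is what will generate the extra discrepant coordinates.

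First I would collect discrepancies coming from Corollary \ref{cor:mtt1}: letting $N$ be the number of leaves $k\neq i,j$ with $[i,k]_{T'}\prec a$ or $[j,k]_{T'}\prec a$, we already get $D_0(T,T')\ge N+1$. I split $N=N_i+N_j$ according to whether $k$ is on the $i$-side or the $j$-side, and bound each part below using that $T'\in\TT_n$ has no elementary non-root nodes: every internal node strictly between $a$ and $i$ has a second child, hence a side leaf $k$ with $[i,k]_{T'}\prec a$ at a distinct depth, so $N_i\ge\delta_{T'}(i)-d'-1$, and symmetrically $N_j\ge\delta_{T'}(j)-d'-1$. Next I add the two diagonal coordinates: let $\alpha$ be $1$ if $\delta_T(i)\neq\delta_{T'}(i)$ and $0$ otherwise, and define $\beta$ analogously for $j$. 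The pairs $(i,i)$ and $(j,j)$ are distinct from the $N+1$ pairs produced by Corollary \ref{cor:mtt1} (which are $(i,j)$ together with, for each counted $k$, one of $(i,k),(j,k)$), so these counts are additive:
$$
D_0(T,T')\ \ge\ 1+N_i+N_j+\alpha+\beta.
$$

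The crux is to show $N_i+\alpha\ge 1$ and $N_j+\beta\ge 1$, with equality only in a rigid configuration. Using $\delta_T(i)\ge d'+3$: if $\delta_{T'}(i)=d'+1$ then $\alpha=1$ (the depths differ) and $N_i\ge 0$; if $\delta_{T'}(i)=d'+2$ then $N_i\ge 1$ and still $\alpha=1$; and if $\delta_{T'}(i)\ge d'+3$ then $N_i\ge 2$. Hence $N_i+\alpha\ge 1$, with equality forcing $\delta_{T'}(i)=d'+1$, i.e. $i$ is a child of $a$ in $T'$; the same holds for $j$. Therefore, unless \emph{both} $i$ and $j$ are children of $a$ in $T'$, one of the two sums is $\ge 2$ and the display already gives $D_0(T,T')\ge 1+2+1=4$. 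The single tight case left is when $i$ and $j$ are both children of $a$ in $T'$, where the count above yields only the three discrepancies $(i,j),(i,i),(j,j)$. To produce a fourth I turn to $T$: the parent $b'$ of $b=[i,j]_T$ has depth $d'+m-1\ge d'+1>0$, so it is a non-root internal node of $T\in\TT_n$ and has a second child, giving a leaf $k\neq i,j$ with $[i,k]_T=[j,k]_T=b'$ and $\fT(i,k)=d'+m-1\ge d'+1$; whereas, since $i$ is a child of $a$ in $T'$, every common ancestor of $i$ and $k$ has depth $\le d'$, so $\fTp(i,k)\le d'<\fT(i,k)$. Thus $(i,k)$ is a fourth discrepant coordinate, distinct from $(i,j),(i,i),(j,j)$, and again $D_0(T,T')\ge 4$.

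The main obstacle I anticipate is exactly this boundary case together with its bookkeeping: the depth/side-leaf count only delivers three discrepancies when $i$ and $j$ sit directly under their common ancestor in $T'$, and one must inspect $T$ rather than $T'$ to extract the fourth. Care is also needed to verify that the coordinates counted via Corollary \ref{cor:mtt1}, the diagonal pairs $(i,i),(j,j)$, and the pair $(i,k)$ are genuinely distinct so that their counts add, and to confirm that every internal node invoked is non-root — hence of out-degree $\ge 2$ in $\TT_n$ — so that the promised side leaves actually exist.
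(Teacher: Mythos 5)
Your proof is correct, and it reaches the conclusion by a genuinely different organization of the same toolbox. The paper's proof splits into cases according to whether the depths of $i$ and $j$ are preserved: if, say, $\delta_T(i)=\delta_{T'}(i)$, it extracts at least $m$ side leaves below $[i,j]_{T'}$ on the $i$-side and gets $D_0(T,T')\geq m+2\geq 4$ (or $\geq 2m+1$ if both depths are preserved); if both depths change, the budget $D_0\leq 3$ forces every remaining coordinate to agree, and then the leaf $k$ hanging off the parent of $[i,j]_T$ contradicts Lemma~\ref{lem:mtt1}. You instead prove one uniform inequality $D_0(T,T')\geq 1+(N_i+\alpha)+(N_j+\beta)$, show each bracket is at least $1$ with equality only when the corresponding taxon is a child of $[i,j]_{T'}$, and settle the single rigid boundary case (both $i$ and $j$ children of $[i,j]_{T'}$) by exhibiting a fourth discrepant coordinate $(i,k)$ directly from the depth comparison $\fTp(i,k)\leq \fTp(i,j)<\fT(i,k)$, rather than by budget exhaustion plus Lemma~\ref{lem:mtt1}. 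The ingredients coincide (the side-leaf count from Lemma~\ref{lem:mtt1}/Corollary~\ref{cor:mtt1}, the diagonal coordinates, the leaf below the parent of $[i,j]_T$), but your counting on the $T'$ side is independent of $m$ — you use $m\geq 2$ only to force $\delta_T(i),\delta_T(j)\geq \fTp(i,j)+3$ and hence the diagonal discrepancies — which makes the case analysis more symmetric and localizes the hard work in one explicit configuration; the price is the bookkeeping that all counted coordinates are pairwise distinct, which you carry out. One point to state explicitly in a write-up: you need slightly more than the numerical bound in the statement of Corollary~\ref{cor:mtt1}, namely that its $N+1$ discrepant coordinates are off-diagonal pairs of the form $(i,j)$, $(i,k)$, $(j,k)$; this is immediate from Lemma~\ref{lem:mtt1}, as you note, but it is the lemma, not the corollary's statement, that licenses adding $\alpha+\beta$.
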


\begin{proof}
If $\delta_{T'}(i)=\delta_T(i)$, then $\delta_{T'}(i)=\delta_T(i)>\varphi_T(i,j)=\fTp(i,j)+m$ which implies that there are at least $m$ leaves $k$ such that $[i,k]_{T'}\prec [i,j]_{T'}$. Then, by the last corollary,
$D_0(T,T')\geq m+1$. Now, if $\delta_{T'}(j)=\delta_T(j)$, then for the same reason 
there are at least $m$ leaves $k$ such that $[j,k]_{T'}\prec [i,j]_{T'}$ and they increase $D_0(T,T')$ to at least $2m+1$, while if $\delta_{T'}(j)\neq \delta_T(j)$, then $D_0(T,T')\geq m+2$. 
We conclude then that if $\delta_{T'}(i)=\delta_T(i)$, then $m=1$.
By symmetry, if  $\delta_{T'}(j)=\delta_T(j)$, then $m=1$, either. 

Finally, if  $\delta_{T'}(i)\neq \delta_T(i)$ and  $\delta_{T'}(j)\neq \delta_T(j)$, and since 
$\varphi_T(i,j)\neq \varphi_{T'}(i,j)$, we have that 
$\fT(x,y)=\fTp(x,y)$ for every $(x,y)\neq (i,i),(j,j),(i,j)$. Let now  $k\neq i,j$ be a taxon such that $[i,k]_T=[j,k]_T$ is the parent of $[i,j]_T$ in $T$. Then
$$
\fTp(i,k)=\fT(i,k)=\fT(i,j)-1=\fTp(i,j)+(m-1)
$$
and therefore, if $m\geq 2$, $\fTp(i,k)>\fTp(i,j)$ and then, by
Lemma \ref{lem:mtt1}, either 
$\fT(i,k)\neq \fTp(i,k)$ or $\fT(j,k)\neq \fTp(j,k)$, which, as we have seen, is impossible. Thus, $m=1$ in all cases.
\end{proof}

\begin{lemma}\label{lem:mtt3}\label{lem:mtt4}
Let $T,T'\in \TT_n$ be such that $D_0(T,T')\leq 3$. If $\varphi_T(i,j)=\varphi_{T'}(i,j)+1$, for some $1\leq i< j\leq n$, then $(\delta_{T'}(i)- \varphi_{T'}(i,j))+(\delta_{T'}(j)- \varphi_{T'}(i,j))\leq 3$.
\end{lemma}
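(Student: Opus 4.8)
The plan is to read the two summands as the lengths (numbers of arcs) of the paths in $T'$ from the common ancestor $[i,j]_{T'}$ down to the leaves $i$ and $j$. Write $a=\delta_{T'}(i)-\varphi_{T'}(i,j)$ and $b=\delta_{T'}(j)-\varphi_{T'}(i,j)$; since $i\neq j$ are leaves, $[i,j]_{T'}$ is a proper ancestor of both, so $a,b\geq 1$, and the claim becomes $a+b\leq 3$.

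First I would obtain the lower bound $D_0(T,T')\geq a+b-1$ by counting off-path leaves and invoking Corollary~\ref{cor:mtt1}. Along the path from $[i,j]_{T'}$ to $i$ there are $a-1$ internal nodes strictly between them; as trees in $\TT_n$ have no elementary nodes other than the root, each of these nodes has a child off the path whose subtree contains a leaf $k$ with $[i,k]_{T'}\prec[i,j]_{T'}$, and the $a-1$ leaves so obtained are pairwise distinct and distinct from $i,j$. Symmetrically the $j$-side contributes $b-1$ further such leaves, disjoint from the previous ones. Hence the quantity $N$ of Corollary~\ref{cor:mtt1} satisfies $N\geq a+b-2$, so $D_0(T,T')\geq N+1\geq a+b-1$. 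Combined with $D_0(T,T')\leq 3$ this already gives $a+b\leq 4$.

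It remains to rule out $a+b=4$, which is the crux. If $a+b=4$ then the chain of inequalities above is tight: $D_0(T,T')=3$ and the three differing entries are exactly $(i,j)$ together with one pair per off-path leaf; in particular every diagonal entry agrees, so $\delta_T=\delta_{T'}$. If $(a,b)\in\{(3,1),(1,3)\}$, then the depth in $T$ of the leaf on the short side equals $\varphi_{T'}(i,j)+1=\varphi_T(i,j)$, which would make that leaf an ancestor of the other taxon --- impossible. If $(a,b)=(2,2)$, then $\delta_T(i)-\varphi_T(i,j)=\delta_T(j)-\varphi_T(i,j)=1$, so $\{i,j\}$ is a cherry in $T$ with parent $w$; writing $p,q$ for the unique off-path leaves on the $i$- and $j$-sides, tightness (together with $\delta_T(p)=\delta_{T'}(p)=\varphi_{T'}(i,j)+2$) forces $[i,p]_T=[j,p]_T$ to be the parent $c$ of $w$, and likewise for $q$. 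Thus $p$ lies at distance $2$ below $c$ through an intermediate node $e$, and $e$, being non-elementary, carries a further descendant leaf $z$. If $q$ descends from $e$ as well, then $\varphi_T(p,q)>\varphi_{T'}(i,j)=\varphi_{T'}(p,q)$; otherwise $z\notin\{i,j,p,q\}$ and $\varphi_T(p,z)\geq\varphi_{T'}(i,j)+1>\varphi_{T'}(i,j)=\varphi_{T'}(p,z)$. Either way a fourth entry differs, contradicting $D_0(T,T')=3$.

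The main obstacle is precisely this last case: the off-path counting only delivers $a+b-1$, and squeezing out the final unit needs the structural observation that in the tight $(2,2)$ configuration the two cherries of $T'$ are forced to become a cherry-plus-grandparent arrangement in $T$, which cannot occur without creating an extra discrepancy. The $(3,1)$ and $(1,3)$ subcases, by contrast, collapse immediately because a leaf cannot be an ancestor of another taxon.
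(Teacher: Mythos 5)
Your overall strategy mirrors the paper's: bound $D_0(T,T')$ from below by counting off-path leaves via Corollary~\ref{cor:mtt1}, conclude $a+b\leq 4$, and then eliminate the tight case $a+b=4$ by a structural analysis. Your step 1, the tightness bookkeeping (exactly three differing pairs, hence $\delta_T=\delta_{T'}$ on all taxa), and the subcases $(3,1)$, $(1,3)$ are all correct. But in the crucial subcase $(2,2)$ there is a genuine gap: the assertion that tightness, together with $\delta_T(p)=\varphi_{T'}(i,j)+2$, \emph{forces} $[i,p]_T=[j,p]_T$ to be the parent $c$ of $w$. Tightness only tells you that exactly one of the pairs $(i,p)$, $(j,p)$ differs, and there are two ways to realize that. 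If $(j,p)$ agrees, you indeed get $[j,p]_T=c$ as you claim. But if instead $(i,p)$ agrees, then $\varphi_T(i,p)=\varphi_{T'}(i,p)=\varphi_{T'}(i,j)+1=\varphi_T(i,j)$, so $[i,p]_T=w$, and the depth constraint makes $p$ a \emph{third child of $w$}, i.e.\ a sibling of $i$ and $j$ in $T$; then $(j,p)$ is the differing pair (since $\varphi_T(j,p)=\varphi_T(i,j)\neq\varphi_{T'}(i,j)=\varphi_{T'}(j,p)$), and every constraint you have established is satisfied. Nothing you have said excludes this configuration, and your subsequent argument (intermediate node $e$, auxiliary leaf $z$) never applies to it, because it presupposes $p\not\preceq w$.

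This missing configuration is exactly what the paper's case (1) is devoted to (the subtree of Fig.~\ref{fig:mtt42}): assuming $\varphi_T(i,k)=\varphi_{T'}(i,k)$, one deduces that $i,j,k$ become siblings in $T$ while $l$ hangs at distance $2$ below their grandparent, and one then needs the same auxiliary-leaf trick you use elsewhere --- a further leaf $h$ under the intermediate node --- to produce a fourth differing pair, via $\varphi_{T'}(l,h)=\varphi_T(l,h)=\varphi_T(i,j)=\varphi_{T'}(l,j)$, contradicting the fact that $j$ and $l$ are the only leaves below $[j,l]_{T'}$. The case can also be closed in your own notation: if both $p$ and $q$ end up siblings of $i,j$ in $T$, then $(p,q)$ differs; if $p$ is a sibling of $i,j$ but $q$ hangs below a sibling node $e$ of $w$, apply your leaf-$z$ argument to $e$ and the pair $(q,z)$. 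But some such argument must be supplied: the ``forcing'' step, as stated, is false, and it is not a minor omission --- it is a substantial half of the case analysis in the paper's proof. (A smaller imprecision: in your final display, $\varphi_{T'}(p,z)$ need not equal $\varphi_{T'}(i,j)$; one only has $\varphi_{T'}(p,z)\leq\varphi_{T'}(i,j)$, because $[p,z]_{T'}$ may be a strict ancestor of $[i,j]_{T'}$. The inequality still yields the desired discrepancy, so this one is harmless.)
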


\begin{proof}
Let us assume that $(\delta_{T'}(i)- \varphi_{T'}(i,j))+(\delta_{T'}(j)- \varphi_{T'}(i,j))\geq 4$ and let us reach a contradiction.

Assume first that $\delta_{T'}(i)\geq \varphi_{T'}(i,j)+3$. Then, there are at least two leaves $k_1,k_2$ such that 
$[i,k_1]_{T'},[i,k_2]_{T'}\prec [i,j]_{T'}$. Since each such leaf contributes at least 1 to $D_0(T,T')\leq 3$, we conclude that there must be exactly two such leaves and, moreover, $\varphi_T(x,y)=\fTp(x,y)$ for every $(x,y)\neq (i,j),(i,k_1),(j,k_1),(i,k_2),(j,k_2)$. But then, on the one hand, $\delta_T(j)=\delta_{T'}(j)$ and, on the other hand,
$\delta_{T'}(j)=\fTp(i,j)+1$ (otherwise, there would be some other leaf $k$ such that $[j,k]_{T'}\prec [i,j]_{T'}$, which, by Lemma \ref{lem:mtt1} would satisfy that $\varphi_T(i,k)\neq \fTp(i,k)$ or $\varphi_T(j,k)\neq \fTp(j,k)$). Combining these two equalities we obtain 
$\delta_T(j)=\fT(i,j)$,
 which is impossible in a tree without nested taxa. 
  This proves that
$\delta_{T'}(i) \leq \varphi_{T'}(i,j)+2$ and, by symmetry, that 
$\delta_{T'}(j) \leq \varphi_{T'}(i,j)+2$, as we claimed. 

Thus, it remains to prove that the case $\delta_{T'}(i)=\delta_{T'}(j)= \varphi_{T'}(i,j)+2$ is impossible. 
So, assume this case holds, and let's reach a contradiction. 
By Corollary \ref{cor:mtt1},  if $D_0(T,T')\leq 3$ and $\delta_{T'}(i)=\delta_{T'}(j)= \varphi_{T'}(i,j)+2$, then there can exist only one extra leaf $k$ pending from the parent of $i$ and one extra leaf $l$ pending from the parent of $j$: see Fig. \ref{fig:mtt41}, where the grey triangle stands for the (possibly empty) subtree consisting of all other descendants of $[i,j]_{T'}$.  Moreover, since $\varphi_T(i,j)=\varphi_{T'}(i,j)+1$ and since both $k$ and $l$ contribute at least 1 to $D_0(T,T')\leq 3$, we conclude that $\varphi_T(x,y)=\fTp(x,y)$ for every $(x,y)\neq (i,j),(i,k),(j,k),(i,l),(j,l)$. In particular
$$
\begin{array}{l}
\fT(k,l)=\fTp(k,l)=\fTp(i,j)=\fT(i,j)-1\\
\delta_T(i)=\delta_{T'}(i)=\fTp(i,j)+2=\fT(i,j)+1\\
\delta_T(j)=\delta_T(k)=\delta_T(l)=\fT(i,j)+1\mbox{ for the same reason}
\end{array}
$$

\begin{center}
\begin{figure}[htb]
\begin{center}
\begin{tikzpicture}[thick,>=stealth,scale=0.3]
\draw(0,0) node [tre] (i) {};  \etq i
\draw(2,0) node [tre] (k) {};  \etq k
\draw(4,0) node [tre] (l) {};  \etq l
\draw(6,0) node [tre] (j) {};  \etq j  
\draw(1,2) node [tre] (a) {};   
\draw(5,2) node [tre] (b) {};   
\draw(3,5) node [tre] (r) {};   
\draw(3.5,6.5) node {\footnotesize $[i,j]_{T'}$};
\draw[fill=black!15] (r)--(8,1)--(14,1)--(r);
\draw (r)--(b);
\draw (r)--(a);
\draw (a)--(i);
\draw (a)--(k);
\draw (b)--(l);
\draw (b)--(j);
\end{tikzpicture}
\end{center}
\caption{\label{fig:mtt41} 
The subtree of $T'$ rooted at $[i,j]_{T'}$ in the proof of Lemma \ref{lem:mtt4}.}
\end{figure}
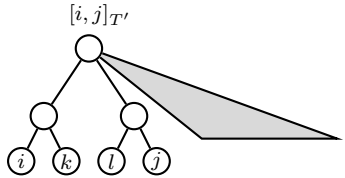
\end{center}

Now we shall prove that, in this situation, each one of $k,l$ contributes actually at least 2 to $D_0(T,T')$, and therefore $D_0(T,T')\geq 5$, which contradicts the assumption that $D_0(T,T')\leq 3$ .

\begin{enumerate}[(1)]
\item Assume that $\fT(i,k)=\fTp(i,k)$. Then, by Lemmas \ref{lem:mtt1}.(a) and \ref{lem:mtt2}, $\fT(j,k)=\fTp(j,k)+1$, and hence
$$
\begin{array}{l}
\fT(i,k)=\fTp(i,k)=\fTp(j,k)+1=\fT(j,k)\\
\fT(i,k)=\fTp(i,k)=\fTp(i,j)+1=\fT(i,j)\\ \delta_T(i)=\delta_T(j)=\delta_T(k)=\delta_T(l)=\fT(i,j)+1\\
 \fT(k,l)=\fT(i,j)-1
\end{array}
$$
Thus, the subtree of $T$ rooted at  $[k,l]_{T}$ contains a subtree of the form described in Fig. \ref{fig:mtt42}, for at least one leaf $h$. But then
$$
\fTp(l,h)=\fT(l,h)=\fT(i,j)=\fTp(i,j)+1=\varphi_{T'}(l,j)
$$
which is impossible, since it would imply that $h$ is another descendant of $[l,j]_{T'}$. Therefore, $\fT(i,k)\neq \fTp(i,k)$ and, by symmetry,  $\fT(j,l)\neq \fTp(j,l)$.

\begin{center}
\begin{figure}[htb]
\begin{center}
\begin{tikzpicture}[thick,>=stealth,scale=0.3]
\draw(0,0) node [tre] (i) {};  \etq i
\draw(2,0) node [tre] (k) {};  \etq k
\draw(4,0) node [tre] (j) {};  \etq j
\draw(6,0) node [tre] (h) {};  \etq h  
\draw(8,0) node [tre] (l) {};  \etq l  
\draw(2,3) node [tre] (a) {};   
\draw(7,3) node [tre] (b) {};   
\draw(5,6) node [tre] (r) {};   
\draw(5,7.5) node {\footnotesize $[k,l]_{T}$};
\draw (r)--(b);
\draw (r)--(a);
\draw (a)--(i);
\draw (a)--(k);
\draw (a)--(j);
\draw[snake=coil,segment aspect=0] (b)--(h);
\draw (b)--(l);
\end{tikzpicture}
\end{center}
\caption{\label{fig:mtt42} 
A subtree of the subtree of $T$ rooted at $[k,l]_{T}$ in case (1) in the proof of Lemma \ref{lem:mtt4}.}
\end{figure}
\end{center}

\item  Assume now that $\fT(i,l)=\fTp(i,l)$. Then, by Lemma \ref{lem:mtt1}.(b), $\fT(j,l)=\fTp(j,l)-1$, and then
$$
\begin{array}{l}
\fT(i,l)=\fTp(i,l)=\fTp(i,j)=\fT(i,j)-1\\
\fT(j,l)=\fTp(j,l)-1=\fTp(i,j)=\fT(i,j)-1\\
\fT(k,l)=\fT(i,j)-1\\
\delta_T(i)=\delta_T(j)=\delta_T(k)=\delta_T(l)=\fT(i,j)+1
\end{array}
$$
Therefore, the subtree of $T$ rooted at  $[k,l]_{T}$ contains a subtree of the form  described  
in Fig. \ref{fig:mtt43}, for at least one leaf $h$. Moreover, $h\neq k$ because $\fT(h,l)>\fT(j,l)=\fT(k,l)$.
 But then, again,
$$
\fTp(l,h)=\fT(l,h)=\fT(i,j)=\fTp(i,j)+1=\varphi_{T'}(l,j)
$$
which is again impossible by the same reason as in (1). Therefore, $\fT(i,l)\neq \fTp(i,l)$ and, by symmetry, 
$\fT(j,k)\neq \fTp(j,k)$.

\begin{center}
\begin{figure}[htb]
\begin{center}
\begin{tikzpicture}[thick,>=stealth,scale=0.3]
\draw(0,0) node [tre] (i) {};  \etq i
\draw(2,0) node [tre] (j) {};  \etq j
\draw(4,0) node [tre] (h) {};  \etq h  
\draw(6,0) node [tre] (l) {};  \etq l  
\draw(-6,0) node [tre] (k) {};  \etq k  
\draw(1,3) node [tre] (a) {};   
\draw(5,3) node [tre] (b) {};   
\draw(3,6) node [tre] (r) {};   
\draw[snake=coil,segment aspect=0] (r)--(k);
\draw(3,7.5) node {\footnotesize $[k,l]_{T}$};
\draw (r)--(b);
\draw (r)--(a);
\draw (a)--(i);
\draw (a)--(j);
\draw[snake=coil,segment aspect=0] (b)--(h);
\draw (b)--(l);
\end{tikzpicture}
\end{center}
\caption{\label{fig:mtt43} 
A subtree of the subtree of $T$ rooted at $[k,l]_{T}$ in case (2) in the proof of Lemma \ref{lem:mtt4}.}
\end{figure}
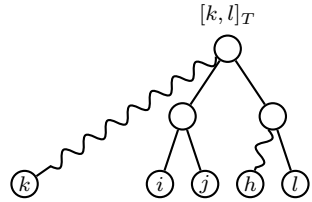
\end{center}
\end{enumerate}
So, 
$$
\fT(i,k)\neq \fTp(i,k), \fT(i,l)\neq \fTp(i,l), \fT(j,k)\neq \fTp(j,k), \fT(j,l)\neq \fTp(j,l)
$$
 and thus $D_0(T,T')\geq 5$.
\end{proof}

Summarizing the last lemmas, we have proved so far that if $D_0(T,T')\leq 3$ and $\varphi_T(i,j)\neq \varphi_{T'}(i,j)$, then, up to interchanging $T$ and $T'$, $\varphi_T(i,j)= \varphi_{T'}(i,j)+1$ and 
either  $i$ and $j$ are sibling in $T'$ or one of these  leaves is a sibling of the parent of the other one in $T'$. Next two lemmas cover these two remaining cases.

\begin{lemma}\label{lem:mtt6}
Let $T,T'\in \TT_n$ be such that $D_0(T,T')\leq 3$, and assume that $\varphi_T(i,j)=\varphi_{T'}(i,j)+1$, for some $1\leq i< j\leq n$. If $i$ and $j$ are sibling in $T'$, then they are also sibling in $T$,  they have no other sibling in $T$, and $T'$ is obtained from $T$ by contracting the arc ending in $[i,j]_T$. And then, $D_0(T,T')= 3$.
\end{lemma}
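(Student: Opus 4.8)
The plan is to first show that, under the hypotheses, the three coordinates $(i,i)$, $(j,j)$, $(i,j)$ are \emph{exactly} the places where $\varphi(T)$ and $\varphi(T')$ differ, and then to read off the claimed structure of $T$ from the fact that no other coordinate moves. Write $c=\fTp(i,j)$, so that $\fT(i,j)=c+1$. Since $i,j$ are siblings in $T'$, their common parent $[i,j]_{T'}$ is unlabeled (no nested taxa) and has depth $c$, whence $\delta_{T'}(i)=\delta_{T'}(j)=c+1$.

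First I would pin down the depths of $i$ and $j$ in $T$. In a tree without nested taxa the LCA of a leaf with any other leaf is a \emph{strict} ancestor of that leaf, so $\fT(i,j)<\delta_T(i)$, giving $\delta_T(i)\geq c+2>c+1=\delta_{T'}(i)$; thus $(i,i)$ is a differing coordinate, and symmetrically so is $(j,j)$. Together with $(i,j)$ this already exhibits three differing coordinates, so the hypothesis $D_0(T,T')\leq 3$ forces $D_0(T,T')=3$ and $\fT(x,y)=\fTp(x,y)$ for every pair $(x,y)\notin\{(i,i),(j,j),(i,j)\}$.

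Next I would upgrade $\delta_T(i)\geq c+2$ to an equality. If $\delta_T(i)\geq c+3$, the parent $p$ of $i$ would satisfy $\delta_T(p)\geq c+2$, so $p\prec [i,j]_T$; being an internal non-root node of a tree without nested taxa, $p$ has a second child, below which sits a leaf $k\neq i,j$ with $[i,k]_T=p$. Then $\fT(i,k)=\delta_T(p)\geq c+2$, while $(i,k)$ is a non-differing coordinate and $\fTp(i,k)<\delta_{T'}(i)=c+1$ --- a contradiction. Hence $\delta_T(i)=c+2$, and symmetrically $\delta_T(j)=c+2$; since the unique depth-$(c+1)$ ancestors of $i$ and of $j$ must both be $[i,j]_T$, the parents of $i$ and $j$ coincide and equal $[i,j]_T$, so $i,j$ are siblings in $T$. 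The same contradiction shows $[i,j]_T$ can have no third child $k$ (it would give $\fT(i,k)=c+1$ on a non-differing coordinate, again impossible), so $i,j$ form a cherry with no other sibling.

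Finally I would verify the contraction claim by comparing cophenetic vectors. Let $T_0$ be obtained from $T$ by contracting the arc ending in $v=[i,j]_T$. By Example \ref{ex:contract}, $\varphi(T_0)$ agrees with $\varphi(T)$ except that the entries indexed by pairs of descendants of $v$ --- namely $(i,i)$, $(j,j)$, $(i,j)$ --- each drop by $1$; a direct check shows these become $c+1$, $c+1$, $c$, matching $\fTp$, while all remaining entries already agreed. Thus $\varphi(T_0)=\varphi(T')$, and Theorem \ref{th:char} gives $T_0=T'$, with $D_0(T,T')=3$ as noted above. The only delicate point is the depth argument in the third step: everything rests on converting ``no other coordinate differs'' into the non-existence of stray leaves $k$ hanging below $[i,j]_T$ or along the path from it to $i$, which is precisely where the no-nested-taxa hypothesis does the essential work.
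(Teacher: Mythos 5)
Your proof is correct. Its first two thirds follow the same route as the paper's: exhibit $(i,i)$, $(j,j)$, $(i,j)$ as differing coordinates (for $(i,i)$ and $(j,j)$ this uses $\delta_T(i)>\varphi_T(i,j)$, valid because there are no nested taxa), conclude from $D_0(T,T')\leq 3$ that these are the \emph{only} differing coordinates and hence $D_0(T,T')=3$, then rule out $\delta_T(i)\geq \varphi_{T'}(i,j)+3$ by producing a leaf $k$ whose pair $(i,k)$ would be forced to differ as well --- the paper's witness has $[i,k]_T$ equal to the child of $[i,j]_T$ on the path towards $i$, yours has $[i,k]_T$ equal to the parent of $i$; same idea --- and deduce that $i,j$ form a cherry in $T$ (when excluding a further sibling of $i,j$, note it could be an internal node, in which case $k$ must be taken as a leaf below it; the paper is equally terse on this point). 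The genuine difference is the endgame. The paper identifies $T'$ with the contraction by a two-stage decomposition: it compares the subtrees $\widehat{T}$ and $\widehat{T'}$ hanging beside the cherry and applies Theorem \ref{th:char} to them (which is one reason that theorem must tolerate unlabeled elementary roots), and then compares the trees $T_1,T_1'$ obtained by collapsing the subtrees $T_0,T_0'$ to a single leaf and applies Theorem \ref{th:char} a second time. You instead construct the contracted tree $T_0$ explicitly, compute $\varphi(T_0)$ from Example \ref{ex:contract} --- the only taxa below $[i,j]_T$ are $i$ and $j$, so exactly the coordinates $(i,i),(j,j),(i,j)$ drop by $1$, landing precisely on the values of $\varphi(T')$ --- and apply Theorem \ref{th:char} once, globally, to get $T_0=T'$. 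Your finish is shorter, needs no subtree bookkeeping, and sidesteps the elementary-root subtlety; the paper's finish is heavier, but its decomposition template is reused essentially verbatim in Lemma \ref{lem:mtt5}, where the move relating $T$ and $T'$ is not a contraction and your one-shot trick would instead require constructing the regrafted tree and recomputing its cophenetic vector by hand.
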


\begin{proof}
If $\delta_{T'}(i)=\delta_{T'}(j)=\varphi_{T'}(i,j)+1$, then it must happen that $\delta_T(i)=\delta_{T'}(i)+1$ and 
$\delta_T(j)=\delta_{T'}(j)+1$. Indeed, if $\delta_T(i)\leq \delta_{T'}(i)$, then
$\delta_T(i)\leq \varphi_{T'}(i,j)+1=\varphi_T(i,j)$, which is impossible. Therefore, $\delta_T(i)> \delta_{T'}(i)$ and by symmetry $\delta_T(j)> \delta_{T'}(j)$. Since $\varphi_T(i,j)=\varphi_{T'}(i,j)+1$, $D_0(T,T')\leq 3$ implies that $\fT(x,y)=\fTp(x,y)$, for every $(x,y)\neq (i,j), (i,i), (j,j)$. Now,
if, say $\delta_T(i)\geq \delta_{T'}(i)+2$, then
 $$
 \delta_T(i)\geq \delta_{T'}(i)+2=\fTp(i,j)+3=\fT(i,j)+2
 $$
 and there would exist some leaf $k$ such that $[i,k]_T$ is a child of $[i,j]_T$. But then
 $$
 \fTp(i,k)=\fT(i,k)=\fT(i,j)+1=\fTp(i,j)+2=\delta_{T'}(i)+1,
  $$
which is impossible. This proves that $\delta_T(i)=\delta_{T'}(i)+1$ and, by symmetry, $\delta_T(j)=\delta_{T'}(j)+1$.

So, in summary, $\varphi_T(i,j)=\varphi_{T'}(i,j)+1$, $\delta_T(i)=\delta_{T'}(i)+1$, $\delta_T(j)=\delta_{T'}(j)+1$ and $\fT(x,y)=\fTp(x,y)$, for every $(x,y)\neq (i,j), (i,i), (j,j)$, and in particular $d_{\varphi,p}(T,T')=3$.

Now, $\delta_T(i)=\delta_{T'}(i)+1=\fTp(i,j)+2=\fT(i,j)+1$,
and by symmetry, $\delta_T(j)=\fT(i,j)+1$, either. Therefore, $i$ and $j$ are sibling in $T$. Let us see that they have no other sibling in this tree. Indeed, if $k$ is a sibling of $i$ and $j$ in $T$, then
$$
\fTp(i,k)=\fT(i,k)=\fT(i,j)=\fTp(i,j)+1=\delta_{T'}(i)
$$
which is impossible.

Let $x$ be the parent of $[i,j]_T$, and assume that the subtree $T_0$ of $T$ rooted at $x$ is as described in Fig. \ref{figcasg:1}.(a), for some (possibly empty) subtree $\widehat{T}$. Moreover, let $T_0'$ be the subtree of $T'$ rooted at $[i,j]_{T'}$, which is as described in Fig. \ref{figcasg:1}.(b) for some subtree $\widehat{T'}$.
We shall prove that $\widehat{T}=\widehat{T'}$.
\begin{center}
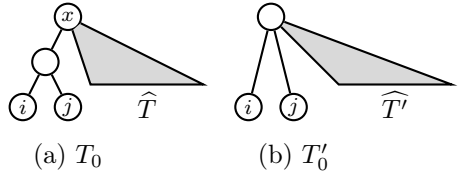
\begin{figure}[htb]
\begin{center}
\begin{tikzpicture}[thick,>=stealth,scale=0.3]
  \begin{scope}
\draw(0,0) node [tre] (i) {};  \etq i
\draw(2,0) node [tre] (j) {};  \etq j 
\draw(1,2) node [tre] (a) {};   
\draw(2,4) node [tre] (x) {};  \etq x 
\draw (x)--(a);
\draw (a)--(j);
\draw (a)--(i);
\draw[fill=black!15] (x)--(3,1)--(8,1)--(x);
\draw(5.5,0.1) node   {$\widehat{T}$};  
\draw(2,-2.2) node   {(a) $T_0$};
\end{scope}
%\end{tikzpicture}
%\
%\begin{tikzpicture}[thick,>=stealth,scale=0.3]
\begin{scope}[xshift=10cm]
\draw(0,0) node [tre] (i) {};  \etq i
\draw(2,0) node [tre] (j) {};  \etq j 
\draw(1,4) node [tre] (a) {};   
\draw (a)--(j);
\draw (a)--(i);
\draw[fill=black!15] (a)--(4,1)--(9,1)--(a);
\draw(6.5,0.1) node   {$\widehat{T'}$};  
\draw(2,-2.2) node   {(b) $T'_0$};
\end{scope}
\end{tikzpicture}
\end{center}
\caption{\label{figcasg:1}
(a) The subtree $T_0$ of $T$ rooted at the parent of $[i,j]_{T}$ in the proof of Lemma \ref{lem:mtt6}.
(b) The subtree $T_0'$ of $T'$ rooted at $[i,j]_{T'}$ in the proof of the same Lemma.}
\end{figure}
\end{center}

For every $k\in L(\widehat{T})$,
$$
\fTp(i,k)=\fT(i,k)=\fT(i,j)-1=\fTp(i,j), 
$$
which entails that $k\in L(\widehat{T'})$. Conversely, if 
$k\in L(\widehat{T'})$, then
$$
\fT(i,k)=\fTp(i,k)=\fTp(i,j)=\fT(i,j)-1, 
$$
which entails that  $k\in L(\widehat{T})$. Thus,  $L(\widehat{T})=L(\widehat{T'})$. And finally,  for every (not necessarily different) $k,l\in L(\widehat{T})$, 
$$
\varphi_{\widehat{T}}(k,l)=\fT(k,l)-\delta_T(x)=
\fT(k,l)-\fT(i,j)+1=\fTp(k,l)-\fTp(i,j)=\varphi_{\widehat{T'}}(k,l),
$$
which implies by Theorem \ref{th:char} that $\widehat{T}= \widehat{T'}$ (notice that $\widehat{T}$ and  $\widehat{T'}$ can have elementary roots).

Finally, let us prove now that $T$ and $T'$ are exactly the same except for $T_0$ and $T_0'$. 
More specifically, let $T_1$ and $T_1'$ be obtained by replacing in $T$ and $T'$ the subtrees $T_0$ and $T_0'$ by a single leaf $x$. Since for every $p,q\notin L(T_0)=L(T_0')$,
$$
\begin{array}{l}
\varphi_{T_1'}(p,q)=\fTp(p,q)=\fT(p,q)=\varphi_{T_1}(p,q),\\
\varphi_{T_1'}(x,p)=\fTp(i,p)=\fT(i,p)=\varphi_{T_1}(p,x),
\end{array}
$$
we deduce, again  by Theorem \ref{th:char}, that $T_1=T_1'$. 

This completes the proof that  $T'$ is obtained from $T$ by replacing in it the subtree $T_0$ rooted at the parent $x$ of $[i,j]_T$ by the subtree $T_0'$ obtained from $T_0$ by contracting the arc $(x,[i,j]_T)$.
\end{proof}

\begin{lemma}\label{lem:mtt5}
Let $T,T'\in \TT_n$ be such that $D_0(T,T')\leq 3$. Assume that $\varphi_T(i,j)=\varphi_{T'}(i,j)+1$, for some $1\leq i< j\leq n$, and that $j$ is a sibling of the parent of $i$ in $T'$. Then, the subtree of $T'$ rooted at $[i,j]_{T'}$ is the tree $T_0'$ depicted in Fig. \ref{fig:mtt5}.(a), for some taxon $k\neq i,j$ and some (possibly empty) subtree $\widehat{T'}$, and $T$ is obtained from $T'$ by replacing $T_0'$ by the tree $T_0$ depicted in Fig. \ref{fig:mtt5}.(b). And then, $D_0(T,T')= 3$.
\end{lemma}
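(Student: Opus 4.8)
The plan is to read off the local shape of $T'$ around $x=[i,j]_{T'}$ from the hypotheses and then let the budget $D_0(T,T')\le 3$ collapse everything into the asserted configuration. Since $T'\in\TT_n$ has no nested taxa, $i$ and $j$ are leaves, and as $j$ is a sibling of the parent $a$ of $i$, the node $x$ is the common parent of $a$ and $j$; writing $u=\fTp(i,j)=\delta_{T'}(x)$ this gives $\delta_{T'}(a)=u+1$, $\delta_{T'}(i)=u+2$ and $\delta_{T'}(j)=u+1$. Two changed coordinates are then immediate: $(i,j)$ by hypothesis, and $(j,j)$, because $[i,j]_T$ is a proper ancestor of the leaf $j$, so $\delta_T(j)>\fT(i,j)=u+1=\delta_{T'}(j)$ and hence $\delta_T(j)\neq\delta_{T'}(j)$. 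This already uses two of the three available units.

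Next I would determine the children of $a$. Every leaf $k\neq i$ descending from $a$ has $[i,k]_{T'}=a$, so the path from $x$ to $[i,k]_{T'}$ has length $l=1$, and Lemma~\ref{lem:mtt1} (with $m=1$) forces at least one of $(i,k),(j,k)$ to be a changed coordinate. Distinct such leaves give distinct extra changed coordinates, so with only one unit of budget left $a$ has at most one leaf-descendant besides $i$; as $a$ is internal and not the root, and $T'$ has no elementary non-root nodes, this other child must be a single leaf $k$, whence $a$ has children exactly $\{i,k\}$. Thus the subtree of $T'$ at $x$ is the tree $T_0'$ of Fig.~\ref{fig:mtt5}.(a): the cherry $\{i,k\}$ under $a$, the leaf $j$ as a sibling of $a$, and a (possibly empty) remainder $\widehat{T'}$ hanging from $x$. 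It also follows that $D_0(T,T')=3$ exactly, the three changed coordinates being $(i,j)$, $(j,j)$ and precisely one of $(i,k),(j,k)$, so that $\fT$ and $\fTp$ coincide on all other pairs.

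The crux is to show that this third change sits in $(j,k)$, not in $(i,k)$, i.e.\ to rule out a rival tree in which $i,j$ form a cherry and $k$ is displaced. For this I would reconstruct $T$ near $i$. From $\delta_T(i)=u+2$ and $\fT(i,j)=u+1$ the node $[i,j]_T$ is the parent $a_T$ of $i$, at depth $u+1$, and since $\fT(i,h)=\fTp(i,h)\le u$ for every leaf $h\neq i,j,k$, no leaf other than $i,j,k$ descends from $a_T$. Lemma~\ref{lem:mtt1} leaves two cases. In case (i), $(i,k)$ is unchanged; then $\fT(i,k)=u+1=\delta_T(a_T)$ gives $[i,k]_T=a_T$, and together with $j\prec a_T$ this forces $a_T$ to have $i,j,k$ as its three leaf-children, the regrafted shape $T_0$ of Fig.~\ref{fig:mtt5}.(b). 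In case (ii), $(j,k)$ is unchanged and $\fT(i,k)=u$; I would eliminate it using that $i,k$ are siblings in $T'$, which yields $\fT(k,h)=\fTp(k,h)=\fTp(i,h)=\fT(i,h)$ for all $h\neq i,j,k$. Indeed, in case (ii) the equality $\delta_T(k)=u+2$ together with $[i,k]_T$ having depth $u$ forces $k$ to lie two levels below $[i,k]_T$, through an internal child $c'$ of $[i,k]_T$; as $c'$ is not elementary it carries a further leaf $h\neq i,j,k$, and then $[i,h]_T=[i,k]_T$ is a proper ancestor of $c'$ while $[k,h]_T\preceq c'$, so $\fT(k,h)>\fT(i,h)$, a contradiction.

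Having reached case (i), I would finish as in Lemma~\ref{lem:mtt6}. The three changed coordinates now have known values, $\fT(i,j)=\fTp(i,j)+1$, $\fT(j,j)=\fTp(j,j)+1$ and $\fT(j,k)=\fTp(j,k)+1$, with all other coordinates equal; these are exactly the changes produced by pruning $j$ and regrafting it as a child of $a$. Hence $\varphi(T)$ equals the cophenetic vector of the tree obtained from $T'$ by replacing $T_0'$ with $T_0$, and Theorem~\ref{th:char} identifies the two. Since the three differing coordinates each differ by $1$, we conclude $D_0(T,T')=3$ (and in fact $D_p(T,T')=3$ for every $p\ge 1$). The single genuinely delicate point is the elimination of case (ii); everything else is bookkeeping with depths and LCAs.
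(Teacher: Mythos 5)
Your proposal is correct and follows essentially the same route as the paper's proof: the same budget accounting via Lemma \ref{lem:mtt1} to pin down the changed coordinates $(i,j)$, $(j,j)$ and exactly one of $(i,k)$, $(j,k)$, the same case split on which of these last two changes, the same elimination of the bad case by producing an extra leaf $h$ under a non-elementary intermediate node and contradicting the cherry structure of $i,k$ in $T'$, and the same final identification of $T$ via Theorem \ref{th:char}. The only differences are cosmetic --- the paper establishes $\delta_T(j)=\delta_{T'}(j)+1$ before the case analysis and phrases the contradiction as $\fTp(k,h)=\fTp(i,k)$ forcing $h\prec[i,k]_{T'}$, while your compressed claim that case (i) ``forces'' $i,j,k$ to be the three children of $a_T$ is justified by the same non-elementarity argument you spell out in case (ii) together with $\fT(j,k)=u+1$.
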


\begin{center}
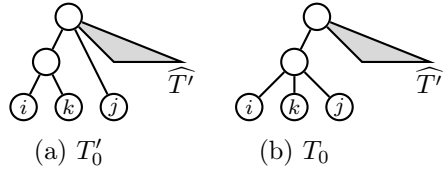
\begin{figure}[htb]
\begin{center}
\begin{tikzpicture}[thick,>=stealth,scale=0.3]
  \begin{scope}
\draw(0,0) node [tre] (i) {};  \etq i
\draw(2,0) node [tre] (k) {};  \etq k
\draw(4,0) node [tre] (j) {};  \etq j 
\draw(1,2) node [tre] (a) {};   
\draw(2,4) node [tre] (x) {};  %\etq x 
\draw (x)--(a);
\draw (x)--(j);
\draw (a)--(i);
\draw (a)--(k);
\draw[fill=black!15] (x)--(4,2)--(7,2)--(x);
\draw(7,1.2) node   {$\widehat{T'}$};  
\draw(2,-2) node   {(a) $T_0'$};
\end{scope}
%\end{tikzpicture}
%\begin{tikzpicture}[thick,>=stealth,scale=0.3]
\begin{scope}[xshift=10cm]
\draw(0,0) node [tre] (i) {};  \etq i
\draw(2,0) node [tre] (k) {};  \etq k
\draw(4,0) node [tre] (j) {};  \etq j 
\draw(2,2) node [tre] (a) {};   
\draw(3,4) node [tre] (x) {};  %\etq x 
\draw (x)--(a);
\draw (a)--(j);
\draw (a)--(i);
\draw (a)--(k);
\draw[fill=black!15] (x)--(5,2)--(8,2)--(x);
\draw(8,1.2) node   {$\widehat{T'}$};  
\draw(2,-2) node   {(b) $T_0$};
\end{scope}
\end{tikzpicture}
\end{center}
\caption{\label{fig:mtt5} 
(a) The subtree $T_0'$ of $T'$ rooted at $[i,j]_{T'}$ in the statement of Lemma \ref{lem:mtt5}.
(b) The subtree $T_0$ which replaces $T_0'$ in $T$ in the same statement.}
\end{figure}
\end{center}

\begin{proof}
We assume that $\delta_{T'}(i)=\varphi_{T'}(i,j)+2$ and $\delta_{T'}(j)=\varphi_{T'}(i,j)+1$. This implies that there exists at least one leaf $k$ such that $[i,k]_{T'}\prec [i,j]_{T'}$.
Since $\varphi_T(i,j)=\varphi_{T'}(i,j)+1$, $|\fT(i,k)-\fTp(i,k)|+|\fT(j,k)-\fTp(j,k)|\geq 1$ and $\delta_T(j)>\delta_{T'}(j)$ (because, otherwise,
$\delta_T(j)\leq \delta_{T'}(j)=\varphi_{T'}(i,j)+1=\varphi_T(i,j)$, which is impossible), $D_0(T,T')\leq 3$ 
entails that  $\fT(i,k)=\fTp(i,k)$ or $\fT(j,k)=\fT(j,k)$, and that  $\fT(x,y)=\fTp(x,y)$ for every $(x,y)\neq (i,j), (i,k), (j,k), (j,j)$ (and, in particular, $k$ is the only leaf different from $i$ such that $[i,k]_{T'}\prec [i,j]_{T'}$). Moreover, we have that $D_0(T,T')= 3$. 

Let us see now that $\delta_T(j)=\delta_{T'}(j)+1$. Indeed, if 
$\delta_T(j)\geq \delta_{T'}(j)+2$, then
 $$
 \delta_T(j)\geq \delta_{T'}(j)+2=\fTp(i,j)+3=\fT(i,j)+2
 $$
 and there would exist some leaf $l$ such that $[j,l]_T$ is a child of $[i,j]_T$. But then
 $$
 \fTp(j,l)=\fT(j,l)=\fT(i,j)+1=\fTp(i,j)+2=\delta_{T'}(j)+1
  $$
 and we reach a contradiction.

So, in summary, the subtree $T_0'$ of $T'$ rooted a $[i,j]_{T'}$ is as described in Fig. \ref{fig:mtt5}.(a), and
$\fT(i,j)=\fTp(i,j)+1$, $\delta_T(j)=\delta_{T'}(j)+1$,   $\fT(x,y)=\fTp(x,y)$ for every $(x,y)\neq (i,j), (i,k), (j,k), (j,j)$, and either $\fT(i,k)=\fTp(i,k)$ or $\fT(j,k)=\fT(j,k)$. Now, we discuss these two possibilities.

\begin{enumerate}[(a)]
\item If $\fT(j,k)=\fTp(j,k)$, then $\fT(i,k)=\fTp(i,k)-1$ by Lemma \ref{lem:mtt1}.(b).
In this case
$$
\begin{array}{l}
\fT(i,k)=\fTp(i,k)-1=\fTp(i,j)=\fT(i,j)-1\\
\fT(j,k)=\fTp(j,k)=\fTp(i,j)=\fT(i,j)-1\\
\delta_T(i)=\delta_{T'}(i)=\fTp(i,j)+2=\fT(i,j)+1\\
\delta_T(j)=\delta_{T'}(j)+1=\fTp(i,j)+2=\fT(i,j)+1\\
\delta_T(k)=\delta_{T'}(k)=\fTp(i,j)+2=\fT(i,j)+1
\end{array}
$$
This means that the subtree of $T$ rooted at $[i,k]_T=[j,k]_T$ contains a subtree of the  form described in Fig. \ref{fig:mtt5.1}, for at least some new leaf $h$.
But then
$$
\fTp(k,h)=\fT(k,h)=\fT(i,j)=\fTp(i,j)+1=\fTp(i,k)
$$
which is impossible in $T'$, because $i$ and $k$ are the only descendants of $[i,k]_{T'}$ in $T'$. So, this case is impossible.
\begin{center}
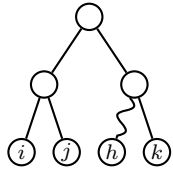
\begin{figure}[htb]
\begin{center}
\begin{tikzpicture}[thick,>=stealth,scale=0.3]
\draw(0,0) node [tre] (i) {};  \etq i
\draw(2,0) node [tre] (j) {};  \etq j
\draw(4,0) node [tre] (h) {};  \etq h  
\draw(6,0) node [tre] (k) {};  \etq k  
\draw(1,3) node [tre] (a) {};   
\draw(5,3) node [tre] (b) {};   
\draw(3,6) node [tre] (r) {};   
\draw (r)--(b);
\draw (r)--(a);
\draw (a)--(i);
\draw (a)--(j);
\draw[snake=coil,segment aspect=0] (b)--(h);
\draw (b)--(k);
\end{tikzpicture}
\end{center}
\caption{\label{fig:mtt5.1} 
A subtree contained in the subtree of $T$ rooted at $[i,j]_T$ in case (a) in the proof of Lemma \ref{lem:mtt5}.}
\end{figure}
\end{center}

\item If $\fT(i,k)=\fTp(i,k)$, then $\fT(j,k)=\fTp(j,k)+ 1$ Lemmas \ref{lem:mtt1}.(a) and \ref{lem:mtt2}. In this case
$$
\begin{array}{l}
\fT(i,k)=\fTp(i,k)=\fTp(i,j)+1=\fT(i,j)\\
\fT(j,k)=\fTp(j,k)+1=\fTp(i,j)+1=\fT(i,j)\\
\delta_T(i)=\delta_T(j)=\delta_T(k)=\fT(i,j)+1\mbox{ as in (a)}
\end{array}
$$
This implies that $i,j,k$ are sibling in $T$. If $l$ is any other sibling of them in $T$, then
$$
\fTp(i,l)=\fT(i,l)=\fT(i,k)=\fTp(i,k)
$$
which entails that $l$ is another descendant  of $[i,k]_{T'}$ in $T'$, which is impossible. Therefore, 
the subtree $T_0$ of $T$ rooted at the parent of $[i,j]_T$ has the form depicted in Fig. \ref{fig:mtt5.b}, for some subtree $\widehat{T}$. 

Finally, the same argument as in the last part of the proof of the last lemma shows that  $\widehat{T}=\widehat{T'}$, and that if  $T_1$ and $T_1'$ are obtained by replacing in $T$ and $T'$ the subtrees $T_0$ and $T_0'$ by a single leaf $x$, then $T_1= T_1'$. We leave the details to the reader.
\end{enumerate}
\begin{center}
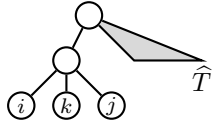
\begin{figure}[htb]
\begin{center}
\begin{tikzpicture}[thick,>=stealth,scale=0.3]
\draw(0,0) node [tre] (i) {};  \etq i
\draw(2,0) node [tre] (k) {};  \etq k
\draw(4,0) node [tre] (j) {};  \etq j 
\draw(2,2) node [tre] (a) {};   
\draw(3,4) node [tre] (x) {};  
%\draw (x) node {\footnotesize $x$};
\draw (x)--(a);
\draw (a)--(j);
\draw (a)--(i);
\draw (a)--(k);
\draw[fill=black!15] (x)--(5,2)--(8,2)--(x);
\draw(8,1.2) node   {$\widehat{T}$};  
\end{tikzpicture}
\end{center}
\caption{\label{fig:mtt5.b} 
The subtree $T_0$ rooted at the parent of $[i,j]_T$ in case (b) in the proof of Lemma \ref{lem:mtt5}.}
\end{figure}
\end{center}

This completes the proof that  $T$ and $T'$ are as described in the statement.
\end{proof}

We have proved so far that the minimum value of $D_0$ on $\TT_n$ is 3, and we have characterized those pairs  of trees $T,T'\in \TT_n$ such that $D_0(T,T')=3$. To extend this result to every $D_p$, $p\geq 1$, it is enough to check that every pair  of trees in $\TT_n$ such that $D_0(T,T')=3$ also satisfies that $D_p(T,T')=3$ for every $p\geq 1$, which is straightforward.
This completes the proof of Proposition  \ref{prop:mintt}.  
\medskip

%%%%%%%%%%%%%%%%%%%%%%%%%%%%%%%%%%%%%%%%%%%
%%%%%%%%%%%%%%%%%%%%%%%%%%%%%%%%%%%%%%%%%%%
\subsection*{Proof of Proposition \ref{prop:minbtt}}
As in Proposition \ref{prop:mintt}, 
we also split this proof  into several lemmas. 
First of all, notice that there are pairs of trees $T,T'\in \BT_n$ such that $D_p(T,T') =4$ for every $p\in \{0\}\cup[1,\infty[$: see, for instance, Fig.~\ref{fig:dist4}.  Therefore, the minimum value of $D_p$ on $\BT_n$ is at most 4.

\begin{center}
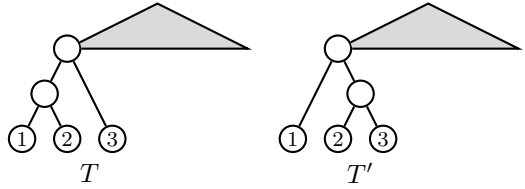
\begin{figure}[htb]
\begin{center}
\begin{tikzpicture}[thick,>=stealth,scale=0.3]
  \begin{scope}
\draw(0,0) node [tre] (1) {};  \etq 1
\draw(2,0) node [tre] (2) {};  \etq 2
\draw(4,0) node [tre] (3) {};  \etq 3 
\draw(1,2) node [tre] (a) {};   
\draw(2,4) node [tre] (x) {};  
\draw (x)--(a);
\draw (a)--(1);
\draw (a)--(2);
\draw (x)--(3);
\draw[fill=black!15] (x)--(10,4)--(6,6)--(x);
\draw(3,-1.5) node   {$T$};
\end{scope}
%\end{tikzpicture}
%\qquad
%\begin{tikzpicture}[thick,>=stealth,scale=0.3]
\begin{scope}[xshift=12cm]
\draw(0,0) node [tre] (1) {};  \etq 1
\draw(2,0) node [tre] (2) {};  \etq 2
\draw(4,0) node [tre] (3) {};  \etq 3 
\draw(3,2) node [tre] (a) {};   
\draw(2,4) node [tre] (x) {};  
\draw (x)--(a);
\draw (a)--(3);
\draw (a)--(2);
\draw (x)--(1);
\draw[fill=black!15] (x)--(10,4)--(6,6)--(x);
\draw(3,-1.5) node   {$T'$};
\end{scope}
\end{tikzpicture}
\end{center}
\caption{\label{fig:dist4} 
A pair of binary trees such that $D_p(T,T') =4$. The grey triangles represent the same tree.}
\end{figure}
\end{center}

Notice also that Lemma \ref{lem:mtt0} also applies in $\BT_n$, and therefore, 
if $T,T'\in \BT_n$ are such that $D_0(T,T')>0$, then there exist two taxa $i\neq j$ such that $\varphi_T(i,j)\neq \varphi_{T'}(i,j)$. And, of course, Lemma \ref{lem:mtt1} also applies in  $\BT_n$.

\begin{lemma}\label{lem:mtt2bin}
Let $T,T'\in \BT_n$ be such that $D_0(T,T')\leq 4$. If $\varphi_T(i,j)=\varphi_{T'}(i,j)+m$, for some $1\leq i< j\leq n$ and some $m\geq 1$, then $m=1$.
\end{lemma}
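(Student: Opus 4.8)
The plan is to argue by contradiction: assuming $m\geq 2$, I will show that $\varphi(T)$ and $\varphi(T')$ differ in at least five coordinates, i.e. $D_0(T,T')\geq 5$, contradicting $D_0(T,T')\leq 4$. Write $a'=[i,j]_{T'}$ and let $l_i=\delta_{T'}(i)-\varphi_{T'}(i,j)$, $l_j=\delta_{T'}(j)-\varphi_{T'}(i,j)$ be the lengths of the paths in $T'$ from $a'$ down to $i$ and to $j$. The basic counting device is Corollary~\ref{cor:mtt1}, which gives $D_0(T,T')\geq N+1$, where $N$ is the number of leaves $k\neq i,j$ with $[i,k]_{T'}\prec a'$ or $[j,k]_{T'}\prec a'$; so $N\leq 3$. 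Since $T'$ is binary, every internal node strictly between $a'$ and $i$ (resp.\ $j$) carries exactly one sibling subtree, and each such subtree contains at least one leaf, whence $N\geq (l_i-1)+(l_j-1)$, the two families of leaves being disjoint. Note also that these two families contribute only off-diagonal coordinates, so the diagonal coordinates $(i,i)$ and $(j,j)$ count on top of the $N+1$ given by the corollary.

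I will then split the argument according to whether the depths of $i$ and $j$ are preserved, mirroring the proof of Lemma~\ref{lem:mtt2}. The clean case is $\delta_T(i)=\delta_{T'}(i)$ and $\delta_T(j)=\delta_{T'}(j)$: since $\delta_{T'}(i)=\delta_T(i)\geq \varphi_T(i,j)+1=\varphi_{T'}(i,j)+m+1$ we get $l_i\geq m+1$, and likewise $l_j\geq m+1$, so $N\geq 2m\geq 4$ and $D_0(T,T')\geq 2m+1\geq 5$, a contradiction.

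The remaining cases are when at least one depth changes. If exactly one changes, say $\delta_T(j)=\delta_{T'}(j)$ while $\delta_T(i)\neq\delta_{T'}(i)$, then $(i,i)$ and $(i,j)$ differ and the $j$-side contributes at least $l_j-1\geq m$ further differing coordinates, so $D_0(T,T')\geq m+2$. If both depths change, then $(i,i)$, $(j,j)$ and $(i,j)$ already differ; taking the leaf $k$ whose branch leaves the $i$-to-$a$ path of $T$ just below $a=[i,j]_T$ gives $\varphi_T(i,k)=\varphi_{T'}(i,j)+m-1$, and feeding this into Lemma~\ref{lem:mtt1} (exactly as at the end of the proof of Lemma~\ref{lem:mtt2}) produces a fourth differing coordinate, and, for $m\geq 3$, enough hanging leaves on the $i$-side in $T'$ to reach a fifth. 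In this way the counting settles every case with $m\geq 3$, and the entire difficulty collapses onto the boundary value $m=2$ with exactly four differing coordinates.

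I expect this boundary analysis to be the main obstacle. There the inequalities become equalities and pin down the shape of $T'$ below $a'$ completely: a caterpillar quartet on $\{i,j,k_1,k_2\}$ in the one-changed case and a triplet on $\{i,j,k\}$ in the both-changed case, with every hanging subtree a single leaf and with the depths of those leaves prescribed so that their diagonal coordinates must be preserved. The plan is then to reconstruct the forced shape of $T$ from the requirement that all remaining coordinates of $\varphi(T)$ equal those of $\varphi(T')$: because $\varphi_T(i,j)=\varphi_{T'}(i,j)+2$, the node $[i,j]_T$ lies two levels below $a'$, and since $T$ is binary (every internal node has exactly two children) there is no room to realize the prescribed depth of the \emph{deepest} hanging leaf of $T'$ nor to reattach the part of the tree above $a'$ consistently. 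Concretely, that deepest hanging leaf cannot keep its depth in $T$, so its diagonal coordinate changes, or, when $a'$ is not the root, some coordinate of an external leaf with $i$ must change; either way a fifth differing coordinate appears, contradicting $D_0(T,T')\leq 4$ and forcing $m=1$. The genuinely small instances ($n=3$, or $a'$ the root) I would dispatch separately, since there the required restructuring of $T$ would demand an elementary non-root node, which is impossible in $\BT_n$.
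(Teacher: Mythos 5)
Your counting framework (Corollary~\ref{cor:mtt1} plus the path-length bounds $l_i,l_j$) is sound, and your first case (both depths preserved, giving $D_0\geq 2m+1\geq 5$) coincides with the paper's opening move. But the proposal has a genuine gap: the boundary case $m=2$ with at least one depth changed, which you yourself identify as ``the main obstacle,'' is never actually proved. What you offer for it is a plan --- pin down the shape of $T'$ below $[i,j]_{T'}$, then argue that $T$ cannot be reconstructed consistently --- whose decisive assertions (``that deepest hanging leaf cannot keep its depth in $T$\ldots either way a fifth differing coordinate appears'') are stated without derivation; even the claimed forced shapes are not quite right (in your both-changed subcase the subtree below $[i,j]_{T'}$ can also be a cherry, not only a triplet). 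In addition, your justification for the both-changed case with $m\geq 3$ is misplaced: you invoke ``hanging leaves on the $i$-side in $T'$,'' but when $\delta_T(i)\neq\delta_{T'}(i)$ there is no lower bound on $\delta_{T'}(i)-\varphi_{T'}(i,j)$ beyond $1$, so such leaves need not exist; the extra differing coordinates must instead come from the leaves branching off the path \emph{above} $[i,j]_T$ in $T$.

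The paper closes exactly this hole, and does so without splitting into ``exactly one'' versus ``both'' depths changed. After reducing (WLOG) to $\delta_T(j)\neq\delta_{T'}(j)$, it takes the $m$ leaves $z_1,\ldots,z_m$ with $[i,z_l]_T=[j,z_l]_T$ strictly above $[i,j]_T$ in $T$; each contributes at least one differing coordinate (for the topmost one, $\varphi_{T'}(i,z_m)=\varphi_{T'}(i,j)$ forces $z_m$ onto the $j$-side of $[i,j]_{T'}$ because $T'$ is binary, so $(j,z_m)$ differs), giving $D_0\geq 2+m$. Hence $m=2$ and this bound is \emph{tight}, so every coordinate outside the list $(i,j),(j,j),(i,z_1),(i,z_2),(j,z_1),(j,z_2)$ is preserved. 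Two preserved coordinates then do the work your reconstruction plan was meant to do: $(i,i)$ preserved gives, by the first-case argument, two leaves $x_1,x_2$ with $i\prec[i,x_2]_{T'}\prec[i,x_1]_{T'}\prec[i,j]_{T'}$, each contributing a differing coordinate; and $(z_1,z_2)$ preserved gives $\varphi_{T'}(z_1,z_2)=\varphi_{T'}(i,j)$, which forces $z_1$ and $z_2$ onto \emph{opposite} sides of $[i,j]_{T'}$. So $\{x_1,x_2,z_1,z_2\}$ contains at least three distinct leaves, each contributing at least $1$ beyond $(i,j)$ and $(j,j)$, whence $D_0\geq 5$ --- the contradiction. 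If you want to salvage your own case split, note that your both-changed subcase with $m=2$ can be killed directly by the same binarity observation ($z_2$ also contributes, giving $3+2=5$); but the one-changed subcase still needs the tightness-plus-preserved-coordinates argument above, so you cannot avoid supplying it.
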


\begin{proof}
Assume that  $\varphi_T(i,j)=\varphi_{T'}(i,j)+m$ with $m\geq 2$, and let us reach a contradiction.

If $\delta_{T'}(i)=\delta_T(i)$, then $\delta_{T'}(i)>\varphi_T(i,j)=\fTp(i,j)+m$, and therefore there exist  leaves $x_1,\ldots,x_m$ such that $\fT(i,x_l)=\fTp(i,j)+l$, for $l=1,\ldots,m$. By Lemma \ref{lem:mtt1}, each such leaf $x_l$ adds at least $1$ to $D_0(T,T')$. Therefore $D_0(T,T')\geq 1+m$. Now, if moreover
$\delta_{T'}(j)=\delta_T(j)$, then there also exist leaves $y_1,\ldots,y_m$ such that $\fT(j,y_l)=\fTp(i,j)+l$, for $l=1,\ldots,m$, and each such leaf $y_l$ also adds  at least $1$ to $D_0(T,T')$, which entails $D_0(T,T')\geq 1+2m\geq 5$. So,  if $D_0(T,T')\leq 4$, it must happen that $\delta_{T'}(i)\neq \delta_T(i)$ or $\delta_{T'}(j)\neq \delta_T(j)$ (or both). Let assume that  $\delta_{T'}(j)\neq \delta_T(j)$. 

Now, $\varphi_T(i,j)=\varphi_{T'}(i,j)+m\geq m$, and therefore there exist leaves $z_1,\ldots,z_m$ such that
$\fT(i,z_l)=\fT(j,z_l)=\fT(i,j)-l$, for $l=1,\ldots,m$. If $\fT(i,k_l)=\fTp(i,k_l)$, then
$$
\fTp(i,k_l)=\fT(i,k_l)=\fT(i,j)-l=\fTp(i,j)+(m-l)\geq \fTp(i,j)
$$
and therefore, by Lemma \ref{lem:mtt1}, $\fTp(j,k_l)\neq \fT(j,k_l)$, and thus, 
 each such leaf $z_l$ adds  at least $1$ to $D_0(T,T')$, which entails $D_0(T,T')\geq 2+m$. Therefore, if 
$D_0(T,T')\leq 4$ and $m\geq 2$, it must happen $m=2$ and, moreover,
$\fT(a,b)=\fTp(a,b)$ for every $(a,b)\neq (i,j),(j,j), (i,z_1), (i,z_2),  (j,z_1), (j,z_2)$.

In particular, $\delta_T(i)=\delta_{T'}(i)$, which as we have seen implies that there are at least two leaves $x_1,x_2$ such that $i\prec [i,x_2]_{T'}\prec [i,x_1]_{T'}\prec [i,j]_{T'}$. Since
$$
\fTp(z_1,z_2)=\fT(z_1,z_2)=\fT(i,j)-2=\fTp(i,j)
$$
implies that (up to interchanging $z_1$ and $z_2$)
$i\prec [i,z_1]_{T'}\prec  [i,j]_{T'}$ and $j\prec [j,z_2]_{T'}\prec  [i,j]_{T'}$, we conclude that $\{x_1,x_2,z_1,z_2\}$  are at least 3 different leaves and hence they contribute at least 3 to $D_0(T,T')$, making $D_0(T,T')\geq 5$.
\end{proof}

\begin{lemma}\label{lem:mtt4-bis}
Let $T,T'\in \BT_n$ be such that $D_0(T,T')\leq 4$. If $\varphi_T(i,j)=\varphi_{T'}(i,j)+1$, for some $1\leq i< j\leq n$, then $\delta_{T'}(i), \delta_{T'}(j)\leq  \varphi_{T'}(i,j)+2$.
\end{lemma}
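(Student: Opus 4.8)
The hypothesis $\fT(i,j)=\fTp(i,j)+1$ is symmetric in $i$ and $j$, so it suffices to prove $\delta_{T'}(i)\le \fTp(i,j)+2$; the bound on $\delta_{T'}(j)$ then follows by interchanging the roles of $i$ and $j$. Writing $c=\fTp(i,j)$, I would argue by contradiction and assume $\delta_{T'}(i)\ge c+3$. Since $T'$ is binary, the path in $T'$ from $[i,j]_{T'}$ down to $i$ has length at least $3$, and therefore passes through two internal nodes $v_1,v_2$, of depths $c+1$ and $c+2$, lying strictly between $[i,j]_{T'}$ and $i$. Each $v_\ell$ has a second (off-path) child, whose subtree contains a leaf $k_\ell\ne i,j$; as $k_1$ and $k_2$ sit in distinct off-path subtrees they are distinct. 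By construction $[i,k_1]_{T'}=v_1$ and $[i,k_2]_{T'}=v_2$, while $[j,k_1]_{T'}=[j,k_2]_{T'}=[i,j]_{T'}$ and $[k_1,k_2]_{T'}=v_1$, so that $\fTp(i,k_1)=c+1$, $\fTp(i,k_2)=c+2$, $\fTp(j,k_1)=\fTp(j,k_2)=c$ and $\fTp(k_1,k_2)=c+1$.

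By Lemma \ref{lem:mtt2bin} the only admissible increment is $m=1$, so Lemma \ref{lem:mtt1} applies to $k_1$ (with $l=1$) and to $k_2$ (with $l=2$): in each case at least one of the two entries indexed by $(i,k_\ell)$ and $(j,k_\ell)$ must differ between $T'$ and $T$. Together with the entry $(i,j)$ this already exhibits three differing entries, in accordance with the bound $D_0\ge N+1$ of Corollary \ref{cor:mtt1}. The plan is then to show that the assumption $\delta_{T'}(i)\ge c+3$ forces \emph{at least two more} changes, so that $D_0(T,T')\ge 5$, contradicting $D_0(T,T')\le 4$.

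To produce these extra changes I would run a short case analysis on the behaviour of $k_1$ and of $k_2$ (case (a) versus case (b) of Lemma \ref{lem:mtt1} for each), combined with the three-point condition satisfied by $-\fT$ that is recalled in the Background section, namely that the least of the three cophenetic values of any triple in $T$ is attained at least twice. Applied to $\{i,k_1,k_2\}$ this condition pins down $\fT(k_1,k_2)$ from $\fT(i,k_1)$ and $\fT(i,k_2)$: in the patterns where the $k_1$- and $k_2$-entries move in ``opposite directions'' it forces $\fT(k_1,k_2)=c\ne c+1=\fTp(k_1,k_2)$, a fourth differing entry; in the remaining patterns it instead forces a change in one of the depths $\delta(i),\delta(j),\delta(k_1),\delta(k_2)$ (each recovered as $\delta(x)=1+\max_{y\ne x}\varphi(x,y)$ in a binary tree without nested taxa), again a new differing entry. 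Pushing each pattern one step further — either exhibiting a fifth altered entry, or reaching a configuration of lowest common ancestors in $T$ that is impossible (as in the ``$\delta_T(\cdot)=\fT(i,j)$ in a tree without nested taxa'' contradictions used in Lemmas \ref{lem:mtt4} and \ref{lem:mtt5}) — should yield $D_0(T,T')\ge 5$ in every case.

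Finally I would check the robustness of the count against the shape of $T'$: if some off-path subtree at $v_1$ or $v_2$ carries more than one leaf, or if $\delta_{T'}(j)>c+1$, then there are additional taxa strictly below $[i,j]_{T'}$, and each of them contributes yet another differing entry by Corollary \ref{cor:mtt1}, which only makes $D_0\ge 5$ easier to reach. The main obstacle is exactly the bookkeeping of the previous step: one must verify, uniformly over the handful of change-patterns for $(k_1,k_2)$ and over the possible local shapes of $T'$, that two genuinely new differing entries beyond the three basic ones are always forced, taking care that the three-point deductions and the depth deductions always point to pairs that have not already been counted.
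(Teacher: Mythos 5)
Your setup is sound: in a binary tree the assumption $\delta_{T'}(i)\geq \varphi_{T'}(i,j)+3$ does produce two distinct leaves $k_1,k_2\neq i,j$ with (writing $c=\varphi_{T'}(i,j)$) $\varphi_{T'}(i,k_1)=c+1$, $\varphi_{T'}(i,k_2)=c+2$, $\varphi_{T'}(j,k_1)=\varphi_{T'}(j,k_2)=c$ and $\varphi_{T'}(k_1,k_2)=c+1$, and Lemma \ref{lem:mtt1} correctly yields at least one changed entry per $k_\ell$, hence three changed entries counting $(i,j)$. (This is close to, though organized differently from, the paper's proof, which pivots instead on a leaf $k_0$ with $[i,k_0]_T$ equal to the parent of $[i,j]_T$ in $T$ and runs cases on how the entries $(i,k_0)$, $(j,k_0)$ behave.) But from this point on your text is a plan, not a proof: the entire content of the lemma is the verification that every admissible pattern of changes for $(i,k_1),(j,k_1),(i,k_2),(j,k_2)$ forces either two further changed entries or an impossible configuration, and this verification is never carried out --- you yourself flag it as ``the main obstacle''. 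In the paper this step is not a ``short case analysis'': it occupies cases (a), (b.1)--(b.3) and (c), each requiring a reconstruction of the LCA configuration in $T$ and a separate structural contradiction. Leaving it as a promise is a genuine gap.

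Moreover, the roadmap you give for closing the gap is partly inaccurate, which shows it cannot be waved through. Take the pattern in which $\varphi_T(i,k_1)=\varphi_{T'}(i,k_1)$ and $\varphi_T(i,k_2)=\varphi_{T'}(i,k_2)$ (case (a) of Lemma \ref{lem:mtt1} for both witnesses). By Lemmas \ref{lem:mtt1}.(a) and \ref{lem:mtt2bin} we get $\varphi_T(j,k_1)=\varphi_T(j,k_2)=c+1$, and the three-point condition on $\{i,k_1,k_2\}$ forces $\varphi_T(k_1,k_2)=c+1=\varphi_{T'}(k_1,k_2)$: no fourth entry changes there, and contrary to your claim no depth entry is forced to change by this condition either. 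What actually kills this pattern is binarity of the LCA configuration: $[i,j]_T$ and $[i,k_1]_T$ are ancestors of $i$ of the same depth $c+1$, hence equal, so in a binary tree $j$ and $k_1$ must both lie in the sibling subtree of the one containing $i$, whence $\varphi_T(j,k_1)\geq c+2$, contradicting $\varphi_T(j,k_1)=c+1$. This type of contradiction is neither an extra changed entry nor one of the nested-taxa contradictions (``$\delta_T(\cdot)=\varphi_T(i,j)$'') you cite from Lemmas \ref{lem:mtt4} and \ref{lem:mtt5}. Identifying the correct closing argument for each of the remaining patterns, and checking that any newly exhibited changed pairs are distinct from those already counted (also when $\delta_{T'}(j)>c+1$ brings additional leaves into play), is exactly the unfinished mathematical work of the lemma.
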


\begin{proof}
Let us assume that $\delta_{T'}(i)\geq  \varphi_{T'}(i,j)+3$, and let us reach a contradiction. The case
when $\delta_{T'}(j)\geq  \varphi_{T'}(i,j)+3$ is symmetrical.

Since $\varphi_T(i,j)=\varphi_{T'}(i,j)+1>0$, there exists some taxon $k_0$ such that $[i,k_0]_T$ is the parent of $[i,j]_T$. Let us distinguish several cases.

\begin{enumerate}[(a)]
\item Assume that $\fT(i,k_0)=\fTp(i,k_0)$. Then, $\fTp(i,k_0)=\fT(i,k_0)=\fT(i,j)-1=\fTp(i,j)$ implies that $[j,k_0]_{T'}\prec [i,j]_{T'}$ and thus $\fTp(j,k_0)>\fTp(i,j)=\fT(i,j)-1=\fT(j,k_0)$ and in particular, by the previous lemma  $\fTp(j,k_0)=\fT(j,k_0)+1=\fT(i,j)=\fTp(i,j)+1$.
Now, since $D_0(T,T')\leq 4$, by Lemma \ref{lem:mtt2} the number of leaves $a\neq i,j,k_0$ such that  $a\prec [i,j]_{T'}$ is at most 2. 

  If $\delta_{T'}(i)\geq  \varphi_{T'}(i,j)+3$, then there exist leaves $k_1,k_2$ such that
$\fTp(i,k_1)=\fTp(i,j)-1$ and $\fTp(i,k_2)=\fTp(i,j)-2$ and then
$\fT(x,y)=\fTp(x,y)$ for every $(x,y)\neq (i,j), (i,k_0),(j,k_0),(k_1,i),(k_1,j),(k_2,i),(k_2,j)$.
In particular, no  leaf other than $i,j,k_0,k_1,k_2$ descends from $[i,j]_{T'}$.
But then
$$
\begin{array}{l}
\fT(k_1,k_0)=\fTp(k_1,k_0)=\fTp(i,j)=\fT(i,j)-1,\quad \fT(k_2,k_0)=\fT(i,j)-1\\
\fT(k_1,k_2)=\fTp(k_1,k_2)=\fTp(i,j)+1=\fT(i,j)
\end{array}
$$ 
imply that, up to interchanging $k_1$ and $k_2$, $i\prec [i,k_1]_{T}\prec [i,j]_{T}$ and 
$j\prec [j,k_2]_{T}\prec [i,j]_{T}$, and then
$$
\delta_{T'}(j)=\delta_T(j)>\fT(i,j)+1=\fTp(i,j)+2
$$
implies the existence of at least another leaf $h$ such that $j\prec [j,h]_{T'}\prec [j,k_0]_{T'}\prec [i,j]_{T'}$, which, as we have mentioned, is impossible.
So, this case cannot happen.

\item Assume now that $\fT(j,k_0)=\fTp(j,k_0)$. By symmetry with the previous case, this implies that 
$\fTp(i,k_0)=\fTp(i,j)+1$,  $\fTp(i,k_0)=\fT(i,k_0)+1$ and that  the number of leaves $a\neq i,j,k_0$ such that  $a\prec [i,j]_{T'}$ is at most 2. Now we have three new subcases to discuss.

\begin{enumerate}
\item[(b.1)] If $\delta_{T'}(i)= \varphi_{T'}(i,j)+4$, 
so that there exist leaves $k_1,k_2\neq i$ such that $\fTp(i,k_0),\fTp(i,k_1),\fTp(i,k_2)> \fTp(i,j)$, 
and no leaf other that $i,j,k_0,k_1,k_2$ descends from $[i,j]_{T'}$.
Then
$\fT(x,y)=\fTp(x,y)$ for every $(x,y)\neq (i,j), (i,k_0),(j,k_0),(k_1,i),(k_1,j),(k_2,i),(k_2,j)$. But in this case
it must happen that $\delta_T(j)=\delta_{T'}(j)=\fTp(i,j)+1=\fT(i,j)$, which is impossible. So, this case cannot happen.

\item[(b.2)] If $\delta_{T'}(i)= \varphi_{T'}(i,j)+3$ and $\delta_{T'}(j)= \varphi_{T'}(i,j)+2$, so that there exist leaves $k_1,k_2$ such that $\fTp(j,k_1)=\fTp(i,j)+1$,
$ \fTp(i,k_2)=\fTp(i,j)+2$ and, recall, $\fTp(i,k_0)=\fTp(i,j)+1$, then
$\fT(x,y)=\fTp(x,y)$ for every $(x,y)\neq (i,j), (i,k_0),(j,k_0),(k_1,i),(k_1,j),(k_2,i),(k_2,j)$. 
But then
$$
\fT(k_1,k_0)=\fTp(k_1,k_0)=\fTp(i,j)=\fT(i,j)-1
$$
implies that $k_1\prec [i,j]_{T}$, and then
$$
\begin{array}{l}
\delta_T(j)=\delta_{T'}(j)=\fTp(i,j)+2=\fT(i,j)+1,\\
\delta_T(k_1)=\delta_{T'}(k_1)=\fTp(i,j)+2=\fT(i,j)+1
\end{array}
$$
imply that $j$ and $k_1$ are the only children of $[i,j]_{T}$, which is, of course, impossible.
So, this case cannot happen, either.

\item[(b.3)] If $\delta_{T'}(i)= \varphi_{T'}(i,j)+3$ and $\delta_{T'}(j)= \varphi_{T'}(i,j)+1$, then on the one hand there exists a leaf $k_1$ such that $\fTp(i,k_1)=\fTp(j,k_0)-1=\fTp(i,j)-2$ and, on the other hand, as we have seen in (b.1), $\delta_T(j)> \delta_{T'}(j)$. Then,
$\fT(x,y)=\fTp(x,y)$ for every $(x,y)\neq (i,j), (j,j), (i,k_0),(j,k_0),(k_1,i),(k_1,j)$, and in particular no leaf other than $i,j,k_0,k_1$ descends from $[i,j]_{T'}$.

Now,
$$
\fT(k_1,k_0)=\fTp(k_1,k_0)=\fTp(i,j)+1=\fT(i,j)
$$
implies that $k_1\not\prec  [i,j]_{T}$, and 
$$
\delta_T(i)=\delta_{T'}(i)=\fTp(i,j)+3=\fT(i,j)+2
$$
implies that there exists a leaf $h\neq k_0,k_1$ such that $i\prec [i,h]_T\prec [i,j]_T$ and hence
$$
\fTp(i,h)=\fT(i,h)>\fT(i,j)+1=\fTp(i,j)
$$
would entail that $h\prec [i,j]_{T'}$,  which is impossible. Thus, this case cannot happen, either.
\end{enumerate}

\item Assume finally that $\fT(i,k_0)\neq \fTp(i,k_0)$ and  $\fT(j,k_0)\neq \fTp(j,k_0)$. The contribution to $D_0$ of the pairs $(i,j),(i,k_0),(j,k_0)$ is at least 3, and therefore there can  only exist at most one other pair of leaves with different cophenetic value in $T$ and in $T'$. Since every $x\neq i,j$ such that $x\prec [i,j]_{T'}$ defines at least one such pair,  we conclude that 
if  $\delta_{T'}(i)\geq \varphi_{T'}(i,j)+3$, then, it must happen that $[i,k_0]_{T'}\prec [i,j]_{T'}$ and that there can only exist one leaf $k_1\neq k_0,i$  such that $[i,k_1]_{T'}\prec [i,j]_{T'}$, and then, moreover $[i,k_0]_{T'}\neq [i,k_1]_{T'}$. In this case, 
$\fT(x,y)=\fTp(x,y)$ for every $(x,y)\neq (i,j),(i,k_0),(j,k_0),(k_1,i),(k_1,j)$. But then, in particular,
$\delta_{T'}(j)=\fTp(i,j)+1$ and $\delta_T(j)=\delta_{T'}(j)$, which implies $\delta_T(i)=\fT(i,j)$, which is impossible
\end{enumerate}
This finishes the proof that, if $D_0(T,T')\leq 4$, then
$\delta_{T'}(i)\leq  \varphi_{T'}(i,j)+2$ and $ \delta_{T'}(j)\leq  \varphi_{T'}(i,j)+2$.
\end{proof}

\begin{lemma}\label{lem:sibl}
Let $T,T'\in \BT_n$ be such that $D_0(T,T')\leq 4$. If $\varphi_T(i,j)=\varphi_{T'}(i,j)+1$, for some $1\leq i< j\leq n$, then $i,j$ are sibling in $T$.
\end{lemma}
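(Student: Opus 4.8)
The plan is to argue by contradiction, showing that if $i,j$ are \emph{not} siblings in $T$ then $D_0(T,T')\geq 5$. Write $h=\varphi_T(i,j)$, so that $\varphi_{T'}(i,j)=h-1$, and recall from Lemma \ref{lem:mtt4-bis} that $\delta_{T'}(i),\delta_{T'}(j)\leq h+1$ (both are trivially $\geq h$, since $[i,j]_{T'}$ is a proper ancestor of each leaf). Since $T$ is binary, $[i,j]_T$ has exactly two children $c_i$ (ancestral to $i$) and $c_j$ (ancestral to $j$), both at depth $h+1$, and $i,j$ are siblings precisely when $\delta_T(i)=\delta_T(j)=h+1$, i.e. when $c_i=i$ and $c_j=j$. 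So I assume, without loss of generality (the hypothesis is symmetric in $i,j$), that $\delta_T(i)\geq h+2$, and I track the set $U$ of index pairs on which $\varphi_T$ and $\varphi_{T'}$ differ, aiming to exhibit five of them.

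First I collect the cheap contributions. The pair $(i,j)$ lies in $U$ by hypothesis, and $(i,i)\in U$ because $\delta_T(i)\geq h+2>h+1\geq\delta_{T'}(i)$. More usefully, for \emph{every} leaf $a\neq i$ lying below $c_i$ in $T$ we have $\varphi_T(i,a)\geq\delta_T(c_i)=h+1$, whereas $\varphi_{T'}(i,a)<\delta_{T'}(i)\leq h+1$ (strict, since $[i,a]_{T'}$ is a proper ancestor of the leaf $i$); hence $\varphi_{T'}(i,a)\leq h<\varphi_T(i,a)$ and $(i,a)\in U$. Writing $A'$ for the set of such leaves ($|A'|\geq 1$), the distinct pairs $(i,j),(i,i)$ and $(i,a)$, $a\in A'$, give $D_0\geq |A'|+2$, so we are done once $|A'|\geq 3$. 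The symmetric argument on the $j$-side disposes of $\delta_T(j)\geq h+2$ at once: it adds the further distinct pairs $(j,j)$ and $(j,b)$, with $b\neq j$ below $c_j$, for a total of at least $3+|A'|+|B'|\geq 5$.

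It remains to treat $\delta_T(j)=h+1$ (so $j=c_j$ is a child of $[i,j]_T$) together with $|A'|\in\{1,2\}$. If $|A'|=2$, assume $D_0=4$, so $U=\{(i,j),(i,i),(i,a_1),(i,a_2)\}$ exactly; then $(j,a_k)\notin U$ forces each $a_k$ onto the $j$-side of $[i,j]_{T'}$ with $\varphi_{T'}(j,a_k)=h$, while $(a_k,a_k)\notin U$ forces $\delta_{T'}(a_k)=h+2$, which is incompatible with $a_1,a_2$ both lying in the three-leaf subtree rooted at $c_i$ in $T$ (a triplet has at most one leaf at depth $h+2$); and if some extra leaf occupied the $i$-side of $[i,j]_{T'}$, then $N\geq 3$ and Corollary \ref{cor:mtt1} with the diagonal pair $(i,i)$ would give $D_0\geq N+2\geq 5$. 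When $|A'|=1$, so $c_i=\{i,k_0\}$ is a cherry, I split according to the position of $k_0$ in $T'$ (comparing $\varphi_{T'}(i,k_0),\varphi_{T'}(j,k_0)$ with $h-1$): if $k_0$ stays on $i$'s side then both $(i,k_0),(j,k_0)\in U$ and a fifth pair comes from $(k_0,k_0)$ or, via Lemma \ref{lem:mtt1}, from a second leaf forced below the sibling of $i$; if $k_0$ rises above $[i,j]_{T'}$, then $(j,k_0)\in U$ and the new neighbour $r$ of $k_0$ in $T'$ gives $(k_0,r)\in U$, since $\varphi_T(k_0,r)\leq h<\varphi_{T'}(k_0,r)$.

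The main obstacle is the remaining configuration, where the witness leaf $k_0$ crosses to $j$'s side in $T'$: there $(j,k_0)$ does \emph{not} change, so the cheap pairs stall the count at three (or four, when $\delta_{T'}(k_0)\neq h+2$ yields $(k_0,k_0)\in U$). I would close it with two ingredients. First, Corollary \ref{cor:mtt1} gives $D_0\geq N+1$, and the diagonal pairs $(i,i)$ and (when present) $(k_0,k_0)$ are not among those $N+1$, so in fact $D_0\geq N+3$, which already forces $D_0\geq 5$ as soon as two leaves lie below $[i,j]_{T'}$. Second, when $N$ is too small the subtree of $T$ rooted at $[i,j]_T$ must have been lifted one level in $T'$, and then the sibling subtree of $[i,j]_T$ in $T$ (non-empty because $T$ is binary and $h\geq 1$) supplies the missing pair: any leaf $q$ of it satisfies $\varphi_T(i,q)=h-1$, and if $q$ is not below $[i,j]_{T'}$ then $\varphi_{T'}(i,q)\leq h-2$, giving $(i,q)\in U$, while otherwise $q$ enlarges $N$. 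The delicate part is the bookkeeping that matches up five genuinely distinct pairs in each subcase.
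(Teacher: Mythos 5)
Your overall plan (exhibit five differing pairs whenever $i,j$ are not siblings in $T$) is viable, and your early cases go through, but the case you yourself flag as ``the main obstacle'' is genuinely not closed by the two ingredients you propose: there is a concrete configuration in which they stall at four pairs. Take any $h\geq 1$ and let the subtree of $T$ rooted at the parent of $[i,j]_T$ (depth $h-1$) consist of a leaf $q$ at depth $h$ on one side and, on the other, $[i,j]_T$ at depth $h$ with children the cherry $\{i,k_0\}$ (leaves at depth $h+2$) and the leaf $j$ (depth $h+1$); let $T'$ replace this subtree by one with children the leaf $i$ (depth $h$) and a node at depth $h$ whose children are $j$ (depth $h+1$) and the cherry $\{k_0,q\}$ (leaves at depth $h+2$), everything above being identical. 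Then $\fT(i,j)=h=\fTp(i,j)+1$, the partner $k_0$ has crossed to $j$'s side, $\delta_{T'}(k_0)=h+2=\delta_T(k_0)$ so the diagonal $(k_0,k_0)$ is unavailable, and exactly $N=2$ leaves ($k_0$ and $q$) lie below $[i,j]_{T'}$. Your first ingredient yields only $D_0\geq N+2=4$, the counted pairs being $(i,j)$, $(i,k_0)$, $(j,q)$, $(i,i)$; your second ingredient is vacuous here, since the sibling subtree of $[i,j]_T$ is the single leaf $q$, which \emph{is} below $[i,j]_{T'}$ and so merely ``enlarges $N$'' to the value it already has. The pairs that actually complete the count in this example are $(q,q)$ and $(k_0,q)$, and neither is visible to your tools: Corollary \ref{cor:mtt1} only ever produces pairs containing $i$ or $j$, and $(q,q)$ is a diagonal you never consider. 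So the bookkeeping you defer is not routine --- the stated ingredients cannot produce a fifth pair.

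Ironically, the subcase you are stuck on dies in one line, by a lemma you invoke elsewhere: if $k_0$ lies on $j$'s side of $[i,j]_{T'}$ then, $T'$ being binary, $[i,k_0]_{T'}=[i,j]_{T'}$, whence $\fT(i,k_0)-\fTp(i,k_0)=(h+1)-(h-1)=2$, contradicting Lemma \ref{lem:mtt2bin}; if $k_0$ is not below $[i,j]_{T'}$ at all, the gap is $\geq 3$, so your subcases (ii) and (iii) are simply vacuous, and the same observation also collapses your $|A'|=2$ analysis. It is worth comparing with the paper's proof, which avoids the case split on the position of the cherry partner altogether. It fixes one witness in the sibling subtree of $[i,j]_T$ (your $q$; the paper calls it $k_0$) and shows by binariness that at least one of the pairs it forms with $i$ and with $j$ must differ; it fixes a second witness below the internal child of $[i,j]_T$ (your $k_0$; the paper calls it $h$), for which Lemma \ref{lem:mtt4-bis} forces both $(i,h)$ and $(i,i)$ to differ, with $\fT(i,h)=\fTp(i,h)+1$. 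That is already four units, so $D_0\leq 4$ forces \emph{every} remaining pair to agree; then the agreement of $(j,h)$ gives $\fTp(j,h)=\fTp(i,j)+1$, which together with $\fTp(i,h)=\fTp(i,j)+1$ places $h$ simultaneously on $i$'s side and on $j$'s side of $[i,j]_{T'}$ --- a contradiction. Turning ``all other pairs are equal'' into a positional contradiction, instead of hunting for a fifth unequal pair, is the idea your draft is missing.
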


\begin{proof}
Let $k_0$ be any leaf such that $[i,k_0]_T=[j,k_0]_T$ is the parent of $[i,j]_T$ in $T$.  If $\fT(i,k_0)=\fTp(i,k_0)$, then
$\fTp(i,k_0)=\fT(i,k_0)=\fT(i,j)-1=\fTp(i,j)$
implies that $[j,k_0]_{T'}\prec [i,j]_{T'}$ and thus $\fTp(j,k_0)>\fTp(i,j)=\fT(i,j)-1=\fT(j,k_0)$. Therefore,
$|\fT(i,k_0)-\fTp(i,k_0)|+|\fT(j,k_0)-\fTp(j,k_0)|\geq 1$.

Assume now that $i,j$ are not sibling in $T$, and let $h$ be a leaf such that $[i,h]_T$ is a child of $[i,j]_T$.
If $\fT(i,h)\leq \fTp(i,h)$, then
$$
\delta_{T'}(i)\geq \fTp(i,h)+1\geq \fT(i,h)+1= \fT(i,j)+2=\fTp(i,j)+3
$$
which is impossible by the previous lemma. Therefore, $\fT(i,h)> \fTp(i,h)$, and by Lemma \ref{lem:mtt2bin},
$\fT(i,h)= \fTp(i,h)+1$.

In a similar way,
if $\delta_T(i)=\delta_{T'}(i)$, then
$$
\delta_{T'}(i)=\delta_T(i)\geq \fT(i,h)+1= \fT(i,j)+2=\fTp(i,j)+3
$$
which is again impossible by the previous lemma.  Therefore, $\delta_T(i)\neq \delta_{T'}(i)$, too. So,
$(i,j)$, $(i,k_0)$, $(j,k_0)$, $(i,i)$, and $(i,h)$ contribute at least 4 to $D_0(T,T')\leq 4$, which implies that $\fT(x,y)=\fTp(x,y)$ for every other pair of leaves $(x,y)$.
But then, 
$$
\begin{array}{l}
\fTp(j,h)=\fT(j,h)=\fT(i,j)=\fTp(i,j)+1\\
\fTp(i,h)=\fT(i,h)-1=\fT(i,j)=\fTp(i,j)+1
\end{array}
$$
which is impossible. Therefore, $i$ and $j$ are sibling in $T$.
\end{proof}

\begin{lemma}\label{lem:nosibl}
Let $T,T'\in \BT_n$ be such that $D_0(T,T')\leq 4$. If $\varphi_T(i,j)=\varphi_{T'}(i,j)+1$, for some $1\leq i< j\leq n$, then $i,j$ are not sibling in $T'$.
\end{lemma}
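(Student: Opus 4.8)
The plan is to argue by contradiction: assume $i,j$ are siblings in $T'$ and contradict $D_0(T,T')\le 4$. The conceptual point — and the reason this lemma is true in $\BT_n$ but fails in $\TT_n$ (cf.\ Lemma \ref{lem:mtt6}) — is that in a binary tree there is no room to suppress the node above a cherry without creating a ternary vertex, so having $i,j$ as a cherry in both trees at different LCA-depths is impossible. First I would invoke Lemma \ref{lem:sibl}, which under the present hypotheses already yields that $i,j$ are siblings in $T$ as well. Writing $d=\varphi_{T'}(i,j)$, so that $\varphi_T(i,j)=d+1$, binarity places the cherry $\{i,j\}$ directly below its LCA in both trees, whence $\delta_T(i)=\delta_T(j)=d+2$ while $\delta_{T'}(i)=\delta_{T'}(j)=d+1$. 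In particular the entries $(i,i)$, $(j,j)$, $(i,j)$ of the cophenetic vectors all differ, contributing exactly $3$ to $D_0(T,T')$.

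The key combinatorial step is to note that, since $\{i,j\}$ is a cherry in both $T$ and $T'$, for every leaf $k\neq i,j$ the LCA of $k$ with $i$ coincides with the LCA of $k$ with $j$, so $\varphi_T(i,k)=\varphi_T(j,k)$ and $\varphi_{T'}(i,k)=\varphi_{T'}(j,k)$. Consequently any off-cherry leaf $k$ whose cophenetic value changes contributes $2$ to $D_0(T,T')$, through both $(i,k)$ and $(j,k)$. Since only $1$ unit of the budget remains after the three forced entries, no such $k$ can exist; hence $\varphi_T(i,k)=\varphi_{T'}(i,k)$ for every $k\neq i,j$ (and in fact $D_0(T,T')=3$).

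I would then extract the contradiction geometrically. Let $x$ be the parent of $[i,j]_T$ in $T$, which lies at depth $d$; by binarity $x$ has a second child $y$, from which hangs some leaf $m\neq i,j$. Then $[i,m]_T=x$, so $\varphi_T(i,m)=d$, and by the previous step $\varphi_{T'}(i,m)=d$ as well. But in $T'$ the node $[i,j]_{T'}$ sits at depth $d$ and, being a binary cherry, has only $i$ and $j$ among its leaf descendants; thus $m\not\preceq [i,j]_{T'}$, so $[i,m]_{T'}$ is a proper ancestor of $[i,j]_{T'}$ and $\varphi_{T'}(i,m)\le d-1<d$. This contradiction shows that $i,j$ are not siblings in $T'$.

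The step I expect to require the most care is the counting argument of the second paragraph: one must pin down that the cherry structure forces off-cherry discrepancies to come in pairs, so that the three forced entries exhaust the $D_0\le 4$ budget and every pair $(i,k)$ with $k\neq i,j$ is frozen. Once that is established, the depth bookkeeping that produces the witness leaf $m$ is routine.
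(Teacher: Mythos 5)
Your proposal is correct and is essentially the paper's own argument in contrapositive form: the paper picks a leaf $k_0$ under the sibling of $[i,j]_T$ and shows directly that the five entries $(i,j)$, $(i,i)$, $(j,j)$, $(i,k_0)$, $(j,k_0)$ must all differ, forcing $D_0(T,T')\geq 5$, whereas you first spend the budget on $(i,j)$, $(i,i)$, $(j,j)$, use the cherry-pairing observation to freeze every entry $(i,k)$, and then contradict the frozen entry $(i,m)$ structurally — but the witnesses (Lemma \ref{lem:sibl}, the changed depths of $i$ and $j$, and the leaf adjacent to the cherry in $T$ that cannot sit below the cherry in $T'$) are identical. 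Both proofs are sound; no gap.
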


\begin{proof}
Assume that $i,j$ are sibling in $T'$, and recall that we already know that they are sibling in $T$.
Let $k_0$ be any leaf such that $[i,k_0]_T=[j,k_0]_T$ is the parent of $[i,j]_T$ in $T$. If $\fT(i,k_0)=\fTp(i,k_0)$, then
$$
\fTp(i,k_0)=\fT(i,k_0)=\fT(i,j)-1=\fTp(i,j)
$$
which is impossible if $i,j$ are sibling in $T'$. Thus, $\fT(i,k_0)\neq \fTp(i,k_0)$ and, by symmetry, 
$\fT(j,k_0)\neq \fTp(j,k_0)$. On the other hand, if $\delta_T(i)=\delta_{T'}(i)$, then
$$
\delta_T(i)=\delta_{T'}(i)=\fTp(i,j)+1=\fT(i,j)
$$
which is also impossible. Therefore, $\delta_T(i)\neq \delta_{T'}(i)$ and, by symmetry, $\delta_T(j)\neq \delta_{T'}(j)$. But, then, $D_0(T,T')\geq 5$.
\end{proof}

Summarizing what we know so far, we have proved that if $D_0(T,T')\leq 4$ and $\varphi_T(i,j)\neq \varphi_{T'}(i,j)$, then, up to interchanging $T$ and $T'$, $\varphi_T(i,j)= \varphi_{T'}(i,j)+1$,
$i,j$ are sibling in $T$, and then the subtree of $T'$ rooted at $[i,j]_{T'}$ is a triplet or a totally balanced quartet; cf. Fig. \ref{fig:both}. Next  two lemmas cover  these  two possibilities.

\begin{center}
\begin{figure}[htb]
\begin{center}
\begin{tikzpicture}[thick,>=stealth,scale=0.3]
  \begin{scope}
\draw(0,0) node [tre] (i) {};  \etq i
\draw(2,0) node [tre] (k) {};  \etq k
\draw(4,0) node [tre] (j) {};  \etq j
\draw(1,2) node [tre] (a) {};   
\draw(2,4) node [tre] (x) {};  
\draw (x)--(a);
\draw (a)--(i);
\draw (a)--(k);
\draw (x)--(j);
\end{scope}
%\end{tikzpicture}
%\qquad
%\begin{tikzpicture}[thick,>=stealth,scale=0.3]
\begin{scope}[xshift=7cm]
\draw(0,0) node [tre] (i) {};  \etq i
\draw(2,0) node [tre] (k) {};  \etq k
\draw(4,0) node [tre] (l) {};  \etq l
\draw(6,0) node [tre] (j) {};  \etq j
\draw(1,2) node [tre] (a) {};   
\draw(5,2) node [tre] (b) {};   
\draw(3,4) node [tre] (x) {};  
\draw (x)--(a);
\draw (x)--(b);
\draw (a)--(i);
\draw (a)--(k);
\draw (b)--(l);
\draw (b)--(j);
\end{scope}
\end{tikzpicture}
\end{center}
\caption{\label{fig:both} 
The only possibilities for the subtree of $T'$ rooted at $[i,j]_{T'}$ if $D_0(T,T')\leq 4$ and $\varphi_T(i,j)= \varphi_{T'}(i,j)+1$.}
\end{figure}
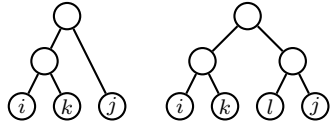
\end{center}

\begin{lemma}\label{lem:tri}
Let $T,T'\in \BT_n$ be such that $D_0(T,T')\leq 4$. If $\varphi_T(i,j)=\varphi_{T'}(i,j)+1$, for some $1\leq i< j\leq n$, and the subtree of $T'$ rooted at $[i,j]_{T'}$ is the triplet depicted in the left hand side of  Fig. \ref{fig:both}, then $T$ is obtained from $T'$ by interchanging $j$ and $k$: cf.\ Fig. \ref{fig:dist4tri}. And, then $D_0(T,T')= 4$.
\end{lemma}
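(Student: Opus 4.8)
The plan is to use the facts already established before the lemma — that $i,j$ are siblings in $T$ and that the subtree of $T'$ rooted at $x':=[i,j]_{T'}$ is the triplet with cherry $(i,k)$ (parent $a$) and uncle $j$ — to reconstruct the neighbourhood of $i,j,k$ in $T$, and then to glue the two trees together by Theorem \ref{th:char}. Writing $d=\varphi_{T'}(i,j)$, the triplet structure of $T'$ gives $\delta_{T'}(i)=\delta_{T'}(k)=d+2$, $\delta_{T'}(j)=d+1$, $\varphi_{T'}(i,k)=d+1$, $\varphi_{T'}(j,k)=d$, and, crucially, $\varphi_{T'}(j,m)=\varphi_{T'}(k,m)$ together with $\varphi_{T'}(i,m)<d$ for every leaf $m\neq i,j,k$, since all such $m$ lie outside $x'$. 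Since $i,j$ are siblings in $T$, I immediately get $\delta_T(i)=\delta_T(j)=\varphi_T(i,j)+1=d+2$, so that $(i,i)$ agrees while $(i,j)$ and $(j,j)$ already account for two of the at most four admissible differences.

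First I would locate $k$ relative to the cherry in $T$. Because $i,j$ are siblings, $\varphi_T(i,k)=\varphi_T(j,k)=:c$ and $[i,k]_T$ is a proper ancestor of the cherry, whence $c\leq d$; as $\varphi_{T'}(i,k)=d+1>c$, the pair $(i,k)$ is a third forced difference. The key claim is $c=d$. If instead $c<d$, then $(j,k)$ would differ as well, exhausting the budget $D_0(T,T')\leq 4$ and forcing every other pair to agree; in particular $\delta_T(k)=\delta_{T'}(k)=d+2$, so $k$ has a sibling in $T$ and some leaf $m\neq i,j,k$ in that sibling subtree satisfies $\varphi_T(k,m)=d+1$, hence $\varphi_{T'}(k,m)=d+1=\varphi_{T'}(i,k)$; but then $m$ would descend from $a$, contradicting $m\neq i,k$. (One can also read $c=d$ off Lemma \ref{lem:mtt1}.) Thus $\varphi_T(i,k)=\varphi_T(j,k)=d$ and $[i,k]_T$ is exactly the parent $x_T$ of the cherry.

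Next I would pin down the depth of $k$, which I expect to be the main obstacle: I must rule out that $k$ sits strictly below the child $s\neq [i,j]_T$ of $x_T$. Assuming $\delta_T(k)\geq d+2$, a sibling leaf $m$ of $k$ in $T$ gives $\varphi_T(k,m)=\delta_T(s)=d+1$ (a fourth difference, since $\varphi_{T'}(k,m)<d$), which exhausts the budget; but then $\varphi_T(i,m)=\delta_T(x_T)=d$ would have to agree with $\varphi_{T'}(i,m)<d$, a contradiction (and $\delta_T(k)>d+2$ produces five differences outright). Hence $\delta_T(k)=d+1$: $k$ is a leaf and the second child of $x_T$, so the subtree of $T$ rooted at $x_T$ is the triplet with cherry $(i,j)$ and uncle $k$. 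At this point the differing pairs are exactly $(i,j),(j,j),(i,k),(k,k)$, so $D_0(T,T')=4$ and $\varphi_T$ agrees with $\varphi_{T'}$ on every other pair.

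Finally I would conclude by letting $\widehat T$ be $T'$ with the labels $j$ and $k$ interchanged and checking $\varphi(T)=\varphi(\widehat T)$: for pairs inside $\{i,j,k\}$ this is the direct computation above, while for a pair involving some $m\neq i,j,k$ it follows from $\varphi_{T'}(j,m)=\varphi_{T'}(k,m)$, which makes the swap invisible outside the triplet. Theorem \ref{th:char} then yields $T=\widehat T$, i.e.\ $T$ is obtained from $T'$ by interchanging $j$ and $k$, as claimed, with $D_0(T,T')=4$.
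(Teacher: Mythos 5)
Your proof is correct and follows essentially the same route as the paper's: both exploit the previously established facts (that $i,j$ are siblings in $T$ and that the subtree of $T'$ at $[i,j]_{T'}$ is the triplet) to show that $[i,k]_T$ is the parent of $[i,j]_T$ and that $\delta_T(k)=\varphi_T(i,j)$ by exhibiting extra leaves that would overflow the budget of four differences, so that the differing coordinates are exactly $(i,j),(j,j),(i,k),(k,k)$, and both conclude with Theorem~\ref{th:char}. The only tactical differences are that the paper obtains $\varphi_T(i,k)=\varphi_{T'}(i,k)-1$ from Lemma~\ref{lem:mtt2bin} --- which is the lemma your parenthetical aside should cite rather than Lemma~\ref{lem:mtt1}, although your self-contained budget argument makes that aside unnecessary --- and that the paper applies Theorem~\ref{th:char} to the trees obtained by pruning the two triplets down to a single leaf, whereas you apply it more directly to $T$ and the label-swapped copy of $T'$, a slightly cleaner way to finish.
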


\begin{center}
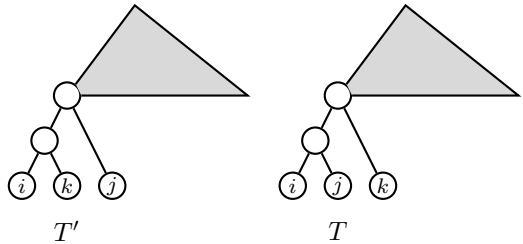
\begin{figure}[htb]
\begin{center}
\begin{tikzpicture}[thick,>=stealth,scale=0.3]
  \begin{scope}
\draw(0,0) node [tre] (i) {};  \etq i
\draw(2,0) node [tre] (k) {};  \etq k
\draw(4,0) node [tre] (j) {};  \etq j
\draw(1,2) node [tre] (a) {};   
\draw(2,4) node [tre] (x) {};  
\draw[fill=black!15] (x)--(10,4)--(5,8)--(x);
\draw (x)--(a);
\draw (a)--(i);
\draw (a)--(k);
\draw (x)--(j);
\draw (2,-2) node {$T'$};
\end{scope}
%\end{tikzpicture}
%\qquad
%\begin{tikzpicture}[thick,>=stealth,scale=0.3]
\begin{scope}[xshift=12cm]
\draw(0,0) node [tre] (i) {};  \etq i
\draw(2,0) node [tre] (j) {};  \etq j
\draw(4,0) node [tre] (k) {};  \etq k
\draw(1,2) node [tre] (a) {};   
\draw(2,4) node [tre] (x) {};  
\draw[fill=black!15] (x)--(10,4)--(5,8)--(x);
\draw (x)--(a);
\draw (a)--(i);
\draw (a)--(j);
\draw (x)--(k);
\draw (2,-2) node {$T$};
\end{scope}
\end{tikzpicture}
\end{center}
\caption{\label{fig:dist4tri} 
The only pairs of trees $T,T'$ such that $D_0(T,T')\leq 4$ and $\varphi_T(i,j)= \varphi_{T'}(i,j)+1$, when the subtree of $T'$ rooted at $[i,j]_{T'}$ is a triplet.}
\end{figure}
\end{center}

\begin{proof}
Assume that the subtree of $T'$ rooted at $[i,j]_{T'}$ has the form depicted in the left hand side of  Fig. \ref{fig:both}, and that $\varphi_T(i,j)=\varphi_{T'}(i,j)+1$. Then, since $i$ and $j$ are sibling in $T$,
$$
\delta_T(j)=\fT(i,j)+1=\fTp(i,j)+2=\delta_{T'}(j)+1.
$$
Now, if $\fT(i,k)\geq \fTp(i,k)$, then
$$
\fT(i,k)\geq \fTp(i,k)=\fTp(i,j)+1=\fT(i,j)
$$
which is impossible, because $i$ and $j$ are sibling in $T$. Therefore, $\fT(i,k)< \fTp(i,k)$ and, by Lemma \ref{lem:mtt2bin}, $\fT(i,k)=\fTp(i,k)-1$, and in particular
$\fT(i,k)=\fT(j,k)=\fT(i,j)-1$. Therefore, $[i,k]_T$ is the parent of $[i,j]_T$ in $T$.

Finally, if $\delta_T(k)\geq \fT(i,j)+1$, then there exists at least some other leaf $l\prec [i,k]_T=[j,k]_T$. But then $\fT(i,l)\neq \fTp(i,l)$, because otherwise
$$
\fTp(i,l)= \fT(i,l)=\fT(i,j)-1=\fTp(i,j),
$$
which is impossible because the only leaves descending from $[i,j]_{T'}$ are $i,j,k$. And, by symmetry $\fT(j,l)\neq \fTp(j,l)$, and we reach $D_0(T,T')\geq 5$.
Therefore,
$$
\delta_T(k)=\fT(i,j)=\fTp(i,j)+1=\delta_{T'}(k)-1.
$$
So, in summary, $\varphi_T(i,j)=\varphi_{T'}(i,j)+1$, $\delta_T(j)=\delta_{T'}(j)+1$, $\fT(i,k)=\fTp(i,k)-1$, 
and $\delta_T(k)=\delta_{T'}(k)-1$, and  
$\fT(x,y)=\fTp(x,y)$ for every $(x,y)$ other than $(i,j),(j,j),(i,k),(k,k)$. Moreover,  in $T$, $k$ is the other child of the parent of $[i,j]_T$.

So, the subtree $T_0$ of $T$ rooted at the parent of $[i,j]_T$ is obtained by interchanging $j$ and $k$  in the subtree $T_0'$ of $T'$ rooted at $[i,j]_{T'}$. Finally, let us prove now that $T$ and $T'$ are exactly the same except for $T_0$ and $T_0'$. 
More specifically, let $T_1$ and $T_1'$ be obtained by replacing in $T$ and $T'$ the subtrees $T_0$ and $T_0'$ by a single leaf $x$. Since for every $p,q\notin \{i,j,k\}$,
$$
\begin{array}{l}
\varphi_{T_1'}(p,q)=\fTp(p,q)=\fT(p,q)=\varphi_{T_1}(p,q),\\
\varphi_{T_1'}(x,p)=\fTp(i,p)=\fT(i,p)=\varphi_{T_1}(x,p),
\end{array}
$$
we deduce,  by Theorem \ref{th:char}, that $T_1=T_1'$. 

This completes the proof that  $T$ is obtained from $T'$ by interchanging the leaf $j$ and its nephew $k$. 
\end{proof}

\begin{lemma}\label{lem:four}
Let $T,T'\in \BT_n$ be such that $D_0(T,T')\leq 4$. If $\varphi_T(i,j)=\varphi_{T'}(i,j)+1$, for some $1\leq i< j\leq n$, and the subtree of $T'$ rooted at $[i,j]_{T'}$ is the quartet depicted in the right hand side of  Fig. \ref{fig:both}, then $T$ is obtained from $T'$ by interchanging $j$ and $k$: cf.\ Fig. \ref{fig:dist4tfour}. And, then $D_0(T,T')= 4$.
\end{lemma}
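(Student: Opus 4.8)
The plan is to mirror the structure of the proof of Lemma~\ref{lem:tri}, locating the leaves $k$ and $l$ inside $T$ and then counting cophenetic differences. Write $c=\fTp(i,j)$, so that in the quartet of $T'$ we have $\delta_{T'}(i)=\delta_{T'}(j)=\delta_{T'}(k)=\delta_{T'}(l)=c+2$, $\fTp(i,k)=\fTp(j,l)=c+1$, and $\fTp(i,l)=\fTp(j,k)=\fTp(k,l)=c$. By Lemma~\ref{lem:sibl}, $i$ and $j$ are siblings in $T$, so $\delta_T(i)=\delta_T(j)=\fT(i,j)+1=c+2$; in particular the diagonal entries for $i$ and $j$ already coincide in $T$ and $T'$.

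First I would pin down $k$. Since $j$ is the sibling of $i$ in $T$, every leaf $m\neq i,j$ satisfies $\fT(i,m)<\fT(i,j)=c+1$; hence $\fT(i,k)\leq c<c+1=\fTp(i,k)$, and Lemma~\ref{lem:mtt2bin} (applied with $T,T'$ interchanged, which is legitimate since $D_0$ is symmetric) forces $\fTp(i,k)-\fT(i,k)=1$, i.e.\ $\fT(i,k)=c$. Thus $[i,k]_T$ is the unique ancestor of $i$ of depth $c$, namely the parent $w$ of $[i,j]_T$. The symmetric argument read from $j$ gives $\fT(j,l)=c$, so $[j,l]_T=w$ as well. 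Consequently both $k$ and $l$ descend from $w$ but not from $[i,j]_T$ (whose only descendant leaves are $i,j$), so both descend from the other child $b'$ of $w$.

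Next I would exhibit the fourth difference and close the count. Because $k$ and $l$ both descend from $b'$, which has depth $c+1$, we get $\fT(k,l)\geq c+1>c=\fTp(k,l)$, so $(k,l)$ is a pair with distinct cophenetic value. Together with $(i,j)$, $(i,k)$ and $(j,l)$ this already accounts for four pairs, so $D_0(T,T')=4$ and every remaining entry of the two cophenetic vectors must agree; in particular $\delta_T(k)=\delta_{T'}(k)=c+2$ and $\delta_T(l)=\delta_{T'}(l)=c+2$. As $k$ and $l$ descend from $b'$ and have depth $c+2=\delta_T(b')+1$, they are the two children of the binary node $b'$, so the subtree of $T$ rooted at $w$ is precisely the quartet of $T'$ with $j$ and $k$ interchanged.

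Finally, the routine gluing argument finishes the proof exactly as in Lemma~\ref{lem:tri}: replacing in $T$ and $T'$ the quartet subtrees (rooted at $w$ and at $[i,j]_{T'}$, both of depth $c$) by a single leaf $z$ yields trees $T_1,T_1'$ with $\varphi(T_1)=\varphi(T_1')$, since all unchanged pairs agree, $\fT(i,p)=\fTp(i,p)$ for $p\notin\{i,j,k,l\}$ gives $\varphi_{T_1}(z,p)=\varphi_{T_1'}(z,p)$, and $\delta_{T_1}(z)=\delta_{T_1'}(z)=c$; Theorem~\ref{th:char} then gives $T_1=T_1'$, i.e.\ $T$ and $T'$ coincide outside the quartet. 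I expect the main obstacle to be the middle step: unlike the triplet case, where $k$ simply climbs one level, here one must show that $k$ and $l$ are forced into a single cherry under $b'$ and that this cherry is exactly what consumes the last of the four admissible differences; once the $(k,l)$ difference is identified, the remaining argument is bookkeeping.
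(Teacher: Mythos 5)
Your proof is correct and follows essentially the same route as the paper's: use Lemma \ref{lem:sibl} to make $i,j$ siblings in $T$, pin down $\fT(i,k)=\fT(j,l)=\fT(i,j)-1$ via Lemma \ref{lem:mtt2bin}, identify $(k,l)$ as a fourth differing pair, exhaust the budget $D_0(T,T')\leq 4$, and glue the unchanged parts with Theorem \ref{th:char}. The only cosmetic difference is that you recover the cherry $\{k,l\}$ from the now-forced agreement of the diagonal entries $\delta_T(k)$, $\delta_T(l)$, whereas the paper applies Lemma \ref{lem:mtt2bin} once more to $(k,l)$ and then separately rules out extra leaves below the parent of $[i,j]_T$; both are routine bookkeeping within the same argument.
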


\begin{center}
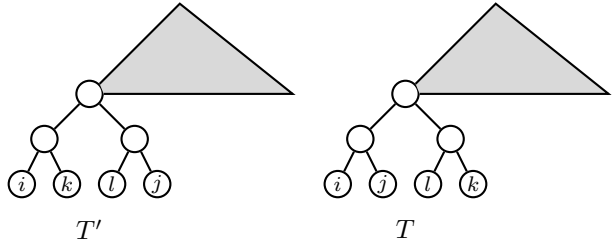
\begin{figure}[htb]
\begin{center}
\begin{tikzpicture}[thick,>=stealth,scale=0.3]
  \begin{scope}
\draw(0,0) node [tre] (i) {};  \etq i
\draw(2,0) node [tre] (k) {};  \etq k
\draw(4,0) node [tre] (l) {};  \etq l
\draw(6,0) node [tre] (j) {};  \etq j
\draw(1,2) node [tre] (a) {};   
\draw(5,2) node [tre] (b) {};   
\draw(3,4) node [tre] (x) {};  
\draw[fill=black!15] (x)--(12,4)--(7,8)--(x);
\draw (x)--(a);
\draw (x)--(b);
\draw (a)--(i);
\draw (a)--(k);
\draw (b)--(l);
\draw (b)--(j);
\draw (3,-2) node {$T'$};
\end{scope}
%\end{tikzpicture}
%\qquad
%\begin{tikzpicture}[thick,>=stealth,scale=0.3]
\begin{scope}[xshift=14cm]
\draw(0,0) node [tre] (i) {};  \etq i
\draw(2,0) node [tre] (j) {};  \etq j
\draw(4,0) node [tre] (l) {};  \etq l
\draw(6,0) node [tre] (k) {};  \etq k
\draw(1,2) node [tre] (a) {};   
\draw(5,2) node [tre] (b) {};   
\draw(3,4) node [tre] (x) {};  
\draw[fill=black!15] (x)--(12,4)--(7,8)--(x);
\draw (x)--(a);
\draw (x)--(b);
\draw (a)--(i);
\draw (a)--(j);
\draw (b)--(l);
\draw (b)--(k);
\draw (3,-2) node {$T$};
\end{scope}
\end{tikzpicture}
\end{center}
\caption{\label{fig:dist4tfour} 
The only pairs of trees $T,T'$ such that $D_0(T,T')\leq 4$ and $\varphi_T(i,j)= \varphi_{T'}(i,j)+1$, when the subtree of $T'$ rooted at $[i,j]_{T'}$ is a quartet.}
\end{figure}
\end{center}

\begin{proof}
Assume that the subtree of $T'$ rooted at $[i,j]_{T'}$ has the form depicted in the right hand side of  Fig. \ref{fig:both}, and that $\varphi_T(i,j)=\varphi_{T'}(i,j)+1$. 

If  $\fT(i,k)\geq \fTp(i,k)$, then
$$
\fT(i,k)\geq  \fTp(i,k)=\fTp(i,j)+1=\fT(i,j)
$$
which is impossible if $i,j$ are sibling in $T$. Therefore, $\fT(i,k)<  \fTp(i,k)$ and, by  Lemma \ref{lem:mtt2bin}, $\fT(i,k)=\fTp(i,k)-1$, and in particular $\fT(i,k)=\fT(i,j)-1$. By symmetry,
$\fT(j,l)=\fTp(j,l)-1$ and hence $\fT(j,l)=\fT(i,j)-1$, too. Therefore, both $k$ and $l$ are descendants of the parent of $[i,j]_T$. But then, 
$$
\fTp(k,l)=\fTp(i,j)=\fT(i,j)-1<\fT(k,l)
$$
and therefore, by  Lemma \ref{lem:mtt2bin}, $\fT(k,l)=\fTp(k,l)+1=\fT(i,j)$.

At this point,  $D_0(T,T')\leq 4$ entails that
$\fT(x,y)=\fTp(x,y)$ for every $(x,y)$ other than $(i,j),(i,k),(j,l),(k,l)$. Moreover, $i,k,j,l$ are the only descendant leaves of the parent of $[i,j]_T$ in $T$. Indeed, if $h$ is another descendant leaf of the parent of $[i,j]_{T'}$, then
$$
\fTp(i,h)=\fT(i,h)=\fT(i,j)-1=\fTp(i,j)
$$
and therefore $h$ would be another descendant of $[i,j]_{T'}$.
And, as we have seen,
the subtree $T_0$ of $T$ rooted at this node is obtained from the subtree $T_0'$ of $T'$ rooted at $[i,j]_{T'}$ by interchanging $j$ and $k$. Finally, arguing as in the last part of the proof of the previous lemma, we deduce that $T$ and $T'$ are exactly the same except for $T_0$ and $T_0'$.
\end{proof}

We have proved so far that the minimum value of $D_0$ on $\BTT_n$ is 4, and we have characterized the pairs  of trees $T,T'\in \BTT_n$ such that $D_0(T,T')=4$. To extend this result to every $D_p$, $p\geq 1$, it is enough to check that every pair of binary trees such that $D_0(T,T')=4$ also satisfies that $D_p(T,T')=4$ for every $p\geq 1$, which is straightforward.
This completes the proof of Proposition  \ref{prop:minbtt}.

%%%%%Diametre2

\subsection*{Proof of Proposition \ref{prop:diam}}

Let  $X_n$ denote any space $\UTT_n$, $\TT_n$ or $\BTT_n$, and let $\Delta_p(X_n)$, $p\in \{0\}\cup[1,\infty[$, denote the diameter of $d_{\varphi,p}$ on $X_n$. 

\begin{figure}[htb]
\begin{center}
\begin{tikzpicture}[thick,>=stealth,scale=0.4]
  \begin{scope}
\draw(0,0) node [tre] (1) {};  \etq 1
\draw(2,0) node [tre] (2) {};  \etq 2
\draw(4,0) node [tre] (3) {};  \etq 3
\draw(6,0) node  {$\ldots$};  
\draw(8,0) node [tre] (n) {};  \etq n
\draw(4,3) node[tre] (r) {};
\draw  (r)--(1);
\draw  (r)--(2);
\draw  (r)--(3);
\draw  (r)--(n);
\draw(4,-2) node {(a)};
\end{scope}
%\end{tikzpicture}
%\quad
%\begin{tikzpicture}[thick,>=stealth,scale=0.4]
\begin{scope}[xshift=10cm]
\draw(0,0) node [tre] (1) {};  \etq 1
\draw(2,0) node [tre] (2) {};  \etq 2
\draw(4,0) node [tre] (3) {};  \etq 3
\draw(6,0) node  [tre] (4) {};  \etq 4
\draw(8,0) node [tre] (5) {};  \etq 5
\draw(1,2) node[tre] (a) {};
\draw(7,2) node[tre] (b) {};
\draw(5.5,3.5) node[tre] (c) {};
\draw(4,5) node[tre] (r) {};
\draw (a)--(1);
\draw (a)--(2);
\draw (b)--(4);
\draw (b)--(5);
\draw (c)--(b);
\draw (c)--(3);
\draw (r)-- (a);
\draw (r)-- (c);
\draw(4,-2) node {(b)};
\end{scope}
%\end{tikzpicture}
%\quad
%\begin{tikzpicture}[thick,>=stealth,scale=0.4]
\begin{scope}[xshift=20cm]
\draw(0,0) node [tre] (1) {};  \etq 1
\draw(2,0) node [tre] (2) {};  \etq 2
\draw(4,0) node [tre] (3) {};  \etq 3
\draw(6,0) node  {$\ldots$};  
\draw(8,0) node [tre] (n) {};  \etq n
\draw(1,2) node[tre] (a) {};
\draw(3,3) node[tre] (b) {};
\draw(7,5) node[tre] (r) {};
\draw(5,4) node  {.};
\draw(5.3,4.15) node  {.};
\draw(5.6,4.3) node  {.};
\draw (a)--(1);
\draw (a)--(2);
\draw (b)--(3);
\draw (b)--(a);
\draw (b)-- (4.5,3.75);
\draw (r)-- (6,4.5);
\draw (r)-- (n);
\draw(4,-2) node {(c)};
\end{scope}
\end{tikzpicture}
\quad
\end{center}
\caption{\label{fig:exs} 
(a) The rooted star  with $n$ leaves.  (b) The only maximally balanced tree with 5 leaves, up to relabelings. (c) A rooted caterpillar with $n$ leaves. }
\end{figure}
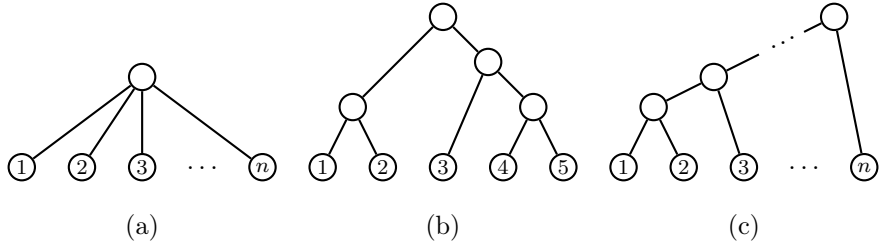

We consider first the case $p=1$, which will  be used later to prove the case $p>1$.
For every $T\in \UTT_n$, let
$$
S(T)=\sum_{i=1}^n \delta_T(i),\qquad \Phi(T)=\sum_{1\leq i<j\leq n} \varphi_T(i,j).
$$ 
$S$ and $\Phi$ are the extensions to $\UTT_n$ of the \emph{Sackin index} \cite{Sackin:72} and the \emph{total cophenetic index} \cite{MRR:12} for phylogenetic trees without nested taxa, respectively. Notice that
$\|\varphi(T)\|_1=S(T)+\Phi(T)$.
We have the following results on these indices:
\begin{itemize}
\item It is straightforward to check that the minimum values of $S(T)$ and $\Phi(T)$ on $\TT_n$ are both reached at the \emph{rooted star tree} with $n$ leaves (the phylogenetic tree with  all its leaves of depth 1; see Fig. \ref{fig:exs}.(a)), and these minimum values are, respectively,
$$
\min S(\TT_n)=n,\quad \min \Phi(\TT_n)=0.
$$ 
 
\item It is also straightforward  to check that the minimum values of $S(T)$ and $\Phi(T)$ on $\UTT_n$ are both reached at the rooted star tree with $n-1$ leaves and with the root labeled with  $n$, and  these minimum values are, respectively,
$$
\min S(\UTT_n)=n-1,\quad \min \Phi(\UTT_n)=0.
$$

\item The minimum values of $S(T)$ and $\Phi(T)$ on $\BTT_n$ are both reached at the \emph{maximally balanced trees}  with $n$ leaves (those binary trees such that, for every internal node, the numbers of descendant leaves of its two children differ at most in 1; see, for instance, Fig. \ref{fig:exs}.(b)).  And then, these minimum values are, respectively,
$$
\begin{array}{l}
\min S(\BTT_n)=    n\lfloor\log_2(4n)\rfloor-2^{\lfloor \log_2(2n)\rfloor}\\
 \min \Phi(\BTT_n)= \displaystyle\sum_{k=0}^{n-1} a(k),\mbox{ where $a(k)$ is the highest power of 2 that divides $n!$}
 \end{array}
$$ 
For the proofs, see \cite{Shao:90}  combined with \cite{Morris} for $S$, and \cite{MRR:12} for $\Phi$.  
From the first formula it is clear that $\min S(\BTT_n)$ is in $\Theta(n\log(n))$. As far as   
$ \min \Phi(\BTT_n)$ goes, it is shown in  \cite{MRR:12} that it satisfies the recurrence
$$
 \min \Phi(\BTT_n)= \min \Phi(\BTT_{\lceil n/2\rceil})+ \min \Phi(\BTT_{\lfloor n/2\rfloor})+\binom{\lceil n/2\rceil}{2}+\binom{\lfloor n/2\rfloor}{2},\quad \mbox{ for }n\geq 3
$$
from where it is obvious that its order is  in $\Theta(n^2)$.
 
\item The maximum values of $S(T)$ and $\Phi(T)$ on both $\TT_n$ and $\BTT_n$ are reached at the \emph{rooted caterpillar trees}  with $n$ leaves (binary phylogenetic trees  such that all their internal nodes have a leaf child; see Fig. \ref{fig:exs}.(c)). And then, these maximum values are, respectively,
$$
\max S(\TT_n)=\max S(\BTT_n)=\binom{n+1}{2}-1,\quad \max \Phi(\TT_n)=\max \Phi(\BTT_n)=\binom{n}{3},
$$ 
which are thus in $\Theta(n^2)$ and $\Theta(n^3)$, respectively. For the proofs, see again \cite{Shao:90}  for $S$ and \cite{MRR:12} for $\Phi$.

\item Given any tree in $\UTT_n$ with a nested taxon, if we replace this nested taxon by a new leaf labeled with it pending from the node previously labeled with it (cf.\ Fig.\ \ref{fig:unnesting}),  we obtain a new tree in $\UTT_n$ with strictly larger value of $S$ and the same value of $\Phi$. This shows that the maximum values of $S(T)$ and $\Phi(T)$ on $\UTT_n$ are reached at trees in $\TT_n$, and hence  at the {rooted caterpillar trees}  with $n$ leaves. Therefore, they are also 
in $\Theta(n^2)$ and $\Theta(n^3)$, respectively. 
\end{itemize}

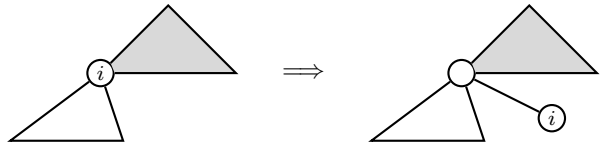
\begin{figure}[htb]
\begin{center}
\begin{tikzpicture}[thick,>=stealth,scale=0.3]
  \begin{scope}
\draw(7,3) node[tre] (r) {};  
\draw(r) node  {\footnotesize $i$};
\draw[fill=black!15] (r)--(13,3)--(10,6)--(r);
\draw (r)--(3,0)--(8,0)--(r);
\end{scope}
%\end{tikzpicture}
%\quad
%\begin{tikzpicture}[thick,>=stealth,scale=0.3]
\begin{scope}[xshift=16cm]
\draw(0,0) node  {\ };  
\draw(0,6) node  {\ };  
\draw(0,3) node  {$\Longrightarrow$};
\end{scope}
%\end{tikzpicture}
%\quad
%\begin{tikzpicture}[thick,>=stealth,scale=0.3]
\begin{scope}[xshift=16cm]
\draw(7,3) node[tre] (r) {};  
\draw(11,1) node[tre] (i) {}; \etq i
\draw[fill=black!15] (r)--(13,3)--(10,6)--(r);
\draw (r)--(3,0)--(8,0)--(r);
\draw (r)--(i);
\end{scope}
\end{tikzpicture}
\end{center}
\caption{\label{fig:unnesting} This operation increases the value of $S$ and does not modify the value of $\Phi$.}
\end{figure}

From these properties we deduce the following result.  
  
\begin{lemma}
The minimum value of  $\|\varphi(T)\|_1$ on $\UTT_n$ and $\TT_n$ is in $\Theta(n)$.
The minimum value of  $\|\varphi(T)\|_1$ on $\BTT_n$ is at most in $\Theta(n^2)$.
The maximum value of $\|\varphi(T)\|_1$ on $\UTT_n$, $\TT_n$ and $\BTT_n$ is in $\Theta(n^3)$. \qed
\end{lemma}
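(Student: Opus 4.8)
The plan is to exploit the decomposition $\|\varphi(T)\|_1=S(T)+\Phi(T)$ recorded just above, which reduces the entire lemma to the minimum and maximum facts for the Sackin index $S$ and the total cophenetic index $\Phi$ already collected in the preceding itemized list. The only general principle I need is elementary: for any two real-valued functions, the minimum of their sum is bounded below by the sum of their minima, and this bound is \emph{attained} whenever a single tree simultaneously minimizes both summands; the dual statement holds for maxima. Consequently the whole argument amounts to reading off the relevant extremizers from the list and observing that the extremizers for $S$ coincide with those for $\Phi$.

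For the minima I would argue as follows. On $\TT_n$ the rooted star with $n$ leaves simultaneously realizes $\min S(\TT_n)=n$ and $\min\Phi(\TT_n)=0$, so $\min\|\varphi(T)\|_1=n+0=n$; likewise on $\UTT_n$ the star with its root labeled $n$ gives $\min\|\varphi(T)\|_1=(n-1)+0=n-1$. Both are in $\Theta(n)$. On $\BTT_n$ the maximally balanced trees realize both $\min S(\BTT_n)\in\Theta(n\log n)$ and $\min\Phi(\BTT_n)\in\Theta(n^2)$, so evaluating $S+\Phi$ at such a tree yields a value in $\Theta(n^2)$; in particular this furnishes the claimed upper bound $O(n^2)$ for the minimum of $\|\varphi(T)\|_1$ over $\BTT_n$.

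For the maxima I would note that in each of $\UTT_n$, $\TT_n$, and $\BTT_n$ the rooted caterpillar simultaneously achieves $\max S=\binom{n+1}{2}-1\in\Theta(n^2)$ and $\max\Phi=\binom{n}{3}\in\Theta(n^3)$. Evaluating $S+\Phi$ at the caterpillar therefore gives a value in $\Theta(n^3)$, supplying the lower bound, while the inequality $\max(S+\Phi)\leq\max S+\max\Phi=\Theta(n^2)+\Theta(n^3)=\Theta(n^3)$ supplies the matching upper bound. Hence the maximum of $\|\varphi(T)\|_1$ is in $\Theta(n^3)$ on all three spaces.

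The single point at which the argument could go wrong — and thus the only thing to verify with care — is that the trees listed as extremizers of $S$ are literally the same trees that extremize $\Phi$. If they differed, the sum of the two separate extreme values would be an unattained bound rather than the true minimum or maximum, and only the one-sided inequality coming from $\min(S+\Phi)\geq\min S+\min\Phi$ (resp.\ its dual) would survive. The itemized facts above assert exactly this coincidence: the star for both minima on $\TT_n$ and $\UTT_n$, the maximally balanced trees for both minima on $\BTT_n$, and the caterpillar for both maxima everywhere. Once these coincidences are cited, no genuine obstacle remains, so I expect the proof to be short.
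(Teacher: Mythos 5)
Your proposal is correct and follows essentially the same route as the paper: the lemma is deduced directly from the identity $\|\varphi(T)\|_1=S(T)+\Phi(T)$ and the itemized extremal facts about $S$ and $\Phi$, with the key observation being exactly the one you isolate, namely that the same trees (star, maximally balanced tree, caterpillar) extremize both summands simultaneously, so the bounds obtained by summing extreme values are attained. The paper leaves this deduction implicit ("From these properties we deduce the following result"), and your write-up simply makes it explicit, including the correct hedge that for $\BTT_n$ only the upper bound $O(n^2)$ on the minimum is claimed.
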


Now, we can apply this lemma to find the order of the diameter of $d_{\varphi,1}$ on the spaces $X_n$ of unweighted phylogenetic trees.

\begin{lemma} 
The diameter of $d_{\varphi,1}$  on $\UTT_n$, $\TT_n$ and $\BTT_n$ is in $\Theta(n^3)$.
\end{lemma}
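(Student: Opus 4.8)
The plan is to sandwich the diameter $\Delta_1(X_n)$ between two quantities both of order $n^3$, using the preceding lemma on the extremal values of $\|\varphi(T)\|_1$.

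For the upper bound, I would invoke the triangle inequality for the $L^1$ norm: for any $T,T'\in X_n$,
$$
d_{\varphi,1}(T,T')=\|\varphi(T)-\varphi(T')\|_1\leq \|\varphi(T)\|_1+\|\varphi(T')\|_1.
$$
Since the preceding lemma gives that the maximum of $\|\varphi(T)\|_1$ on each of $\UTT_n$, $\TT_n$, and $\BTT_n$ is in $\Theta(n^3)$, this bounds $\Delta_1(X_n)$ from above by $O(n^3)$.

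For the lower bound, the key observation is that, because all entries of the cophenetic vectors are non-negative, the off-diagonal part of $d_{\varphi,1}$ controls the difference of total cophenetic indices: for any $T,T'$,
$$
d_{\varphi,1}(T,T')\geq\sum_{1\leq i<j\leq n}|\varphi_T(i,j)-\varphi_{T'}(i,j)|\geq\Big|\sum_{1\leq i<j\leq n}\big(\varphi_T(i,j)-\varphi_{T'}(i,j)\big)\Big|=|\Phi(T)-\Phi(T')|,
$$
where the middle step is the ordinary inequality $\sum_k|a_k|\geq|\sum_k a_k|$. I would then exhibit an explicit pair realising a large value: take $T$ to be the rooted caterpillar with $n$ leaves, for which $\Phi(T)=\binom{n}{3}$, and take $T'$ to be a tree of minimum total cophenetic index in the relevant space, namely the rooted star for $\UTT_n$ and $\TT_n$ (where $\Phi(T')=0$) and a maximally balanced tree for $\BTT_n$ (where $\Phi(T')\in\Theta(n^2)$). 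In each case $|\Phi(T)-\Phi(T')|=\binom{n}{3}-O(n^2)\in\Theta(n^3)$, so $\Delta_1(X_n)$ is at least of order $n^3$.

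Combining the two bounds yields $\Delta_1(X_n)\in\Theta(n^3)$ for all three spaces. The only point requiring care, and hence the main obstacle, is the lower bound: one cannot simply subtract the extremal values of $\|\varphi\|_1$, since the $L^1$ distance between two vectors is not controlled below by the difference of their norms. The reverse-triangle-inequality step through $\Phi$ is precisely what converts the known gap between the maximum and the minimum of the total cophenetic index into a genuine lower bound on the distance between two concrete trees.
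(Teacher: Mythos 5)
Your proof is correct, and its skeleton is the same as the paper's: sandwich the diameter using the extremal values from the preceding lemma, with the triangle inequality giving the $O(n^3)$ upper bound and a reverse-triangle-type step giving the $\Theta(n^3)$ lower bound. Your lower bound — dropping the diagonal entries, applying $\sum_k|a_k|\geq\bigl|\sum_k a_k\bigr|$ to reduce to $|\Phi(T)-\Phi(T')|$, and evaluating at the caterpillar ($\Phi=\binom{n}{3}$) against the star, respectively a maximally balanced tree ($\Phi\in\Theta(n^2)$) — is valid. However, your closing remark is mathematically wrong, and it is worth correcting because the paper's proof does exactly what you claim cannot be done: the $L^1$ distance between two vectors \emph{is} bounded below by the difference of their norms, since $\|x-y\|\geq\bigl|\,\|x\|-\|y\|\,\bigr|$ holds for every norm (it is an immediate consequence of the triangle inequality applied to $x=(x-y)+y$). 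The paper uses precisely this: $d_{\varphi,1}(T_1,T_2)\geq \|\varphi(T_1)\|_1-\|\varphi(T_2)\|_1$, hence $\Delta_1(X_n)\geq \max\|\varphi(X_n)\|_1-\min\|\varphi(X_n)\|_1$, which is $\Theta(n^3)$ because the maximum is $\Theta(n^3)$ and the minimum is $O(n^2)$ by the preceding lemma. Your detour through $\Phi$ alone is an essentially equivalent instance of the same idea restricted to the off-diagonal entries; it costs nothing but also buys nothing over the paper's one-line argument. A small additional note: discarding the diagonal entries does not require non-negativity of the cophenetic values, only that the summands $|\varphi_T(i,i)-\varphi_{T'}(i,i)|$ being dropped are non-negative, which is automatic.
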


\begin{proof}
Let 
$T_1,T_2\in X_n$. Then, on the one hand,
$$
d_{\varphi,1}(T_1,T_2) =\|\varphi(T_1)-\varphi(T_2)\|_1\leq \|\varphi(T_1)\|_1+\|\varphi(T_2)\|_1 \leq
2\cdot \max\|\varphi(X_n)\|_1 =\Theta(n^3)
$$
which shows that $\Delta_1(X_n)\leq O(n^3)$.
On the other hand, if $\|\varphi(T_1)\|_1\geq \|\varphi(T_2)\|_1$, then
$$
d_{\varphi,1}(T_1,T_2)=\|\varphi(T_1)-\varphi(T_2)\|_1\geq   \|\varphi(T_1)\|_1-\|\varphi(T_2)\|_1
$$
and therefore $\Delta_1(X_n)\geq \max\|\varphi(X_n)\|_1 - \min\|\varphi(X_n)\|_1$, which is again in $O(n^3)$.
This shows that $\Delta_1(X_n)$ is in $\Theta(n^3)$, as we claimed.
\end{proof}

Let us consider now the case $p>1$. Since, for every $x\in \RR^{m}$, $
\|x\|_1 \leq m^{1-\frac{1}{p}}\|x\|_p$,
we have that, for every pair of trees $T_1,T_2\in X_n$,
$$
d_{\varphi,1}(T_1,T_2) \leq  \binom{n+1}{2}^{1-\frac{1}{p}} d_{\varphi,p}(T_1,T_2).
$$
and therefore
$$
\Delta_1(X_n)\leq \binom{n+1}{2}^{1-\frac{1}{p}} \Delta_p (X_n),
$$
from where we deduce that 
$$
\Delta_p(X_n)\geq \Delta_1(X_n)\cdot \binom{n+1}{2}^{-1+\frac{1}{p}}=O(n^{(p+2)/p}).
$$

To prove the converse inequality, let
$$
\varphi^{(p)}(T)=\sum_{1\leq i\leq j \leq n} \varphi_T(i,j)^p.
$$
We have that, for every $T_1,T_2\in X_n$,
$$
\begin{array}{rl}
d_{\varphi,p}(T_1,T_2) &=\|\varphi(T_1)-\varphi(T_2)\|_p\leq \|\varphi(T_1)\|_p+\|\varphi(T_2)\|_p=
\sqrt[p]{\varphi^{(p)}(T_1)}+\sqrt[p]{\varphi^{(p)}(T_2)}\\ & \leq
2\sqrt[p]{\max \varphi^{(p)}(X_n)},
\end{array}
$$
which implies that $\Delta_p(X_n)\leq 2\sqrt[p]{\max \varphi^{(p)}(X_n)}$. 
Therefore, to prove that the diameter of $d_{\varphi,p}$ on each $X_n$ is bounded from above by
$O(n^{(p+2)/p})$, it is enough to prove that $\max \varphi^{(p)}(X_n)\leq O(n^{p+2})$. We do it in the next lemma.

\begin{lemma}
The maximum value of $\varphi^{(p)}(T)$ on $\UTT_n$, $\TT_n$ or $\BTT_n$ is reached at the rooted caterpillars, and its value is in $\Theta(n^{p+2})$.
\end{lemma}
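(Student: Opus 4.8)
The plan is to split the statement into two independent halves: that the common order is $\Theta(n^{p+2})$, and that the maximiser is the rooted caterpillar. The order is almost immediate. Since a phylogenetic tree on $n$ taxa without nested taxa has all nodes at depth at most $n-1$, every entry of $\varphi(T)$ is at most $n-1$, so $\varphi^{(p)}(T)\le \binom{n+1}{2}(n-1)^p=O(n^{p+2})$ on all three spaces. Evaluating at the caterpillar, whose internal node at depth $k-1$ is the LCA of exactly $n-k$ pairs of leaves, gives $\varphi^{(p)}(\mathrm{cat})=\sum_{k=1}^{n-1}(n-k)(k-1)^p+\sum_i \delta_T(i)^p$, whose off-diagonal part is already $\Omega(n^{p+2})$. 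Thus once maximality at the caterpillar is established, the value is forced to be $\Theta(n^{p+2})$, and the separate $O(n^{p+2})$ upper bound (all that Proposition~\ref{prop:diam} actually consumes) is free.

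For maximality I would first reduce all three spaces to $\BTT_n$ by two operations, each strictly increasing $\varphi^{(p)}$ for $p\ge 1$. Unnesting a nested taxon $i$ (cf.\ Fig.~\ref{fig:unnesting}) leaves every off-diagonal cophenetic value unchanged and only raises the single depth $\delta_T(i)$ by $1$, hence strictly raises $\varphi^{(p)}$; this pushes the maximiser of $\UTT_n$ into $\TT_n$. Resolving a multifurcation at $v$, by inserting a new unlabeled node below $v$ collecting two of its children, weakly increases every affected LCA-depth and strictly increases the depths of the leaves pushed down, hence again raises $\varphi^{(p)}$; this pushes the maximiser of $\TT_n$ into $\BTT_n$. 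As the caterpillar lies in $\BTT_n\subseteq\TT_n\subseteq\UTT_n$, it then suffices to prove it maximises $\varphi^{(p)}$ over $\BTT_n$.

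Over $\BTT_n$ I would use a single, $p$-independent dominance statement. For $t\ge 0$ let $N_T(t)$ be the number of pairs $i\le j$ with $\delta_T([i,j]_T)\ge t$; Abel summation gives $\varphi^{(p)}(T)=\sum_{t\ge 1}\bigl(t^p-(t-1)^p\bigr)N_T(t)$, and since $x\mapsto x^p$ is nondecreasing it is enough to prove $N_T(t)\le N_{\mathrm{cat}}(t)$ for every $t$ and every binary $T$. Splitting the root into its two maximal subtrees $T_1,T_2$ yields the clean recursion $N_T(t)=N_{T_1}(t-1)+N_{T_2}(t-1)$ for $t\ge 1$, so an induction on $n$ reduces everything to the elementary ``combine'' estimate $g_a(s)+g_b(s)\le g_{a+b}(s+1)$ for the caterpillar counts $g_n=N_{\mathrm{cat}(n)}$, for all $a,b\ge 1$ and $s\ge 0$.

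The hard part will be this combine estimate, together with the reason it is subtle: writing $N_T(t)=\sum_{u}\binom{m_u+1}{2}$ over the depth-$t$ nodes ($m_u$ the number of descendant leaves), a naive convexity bound would suggest concentrating all leaves under one deep node, but the binary structure forbids this, since reaching depth $t$ forces at least $t$ leaves off the root-to-$u$ path, so $m_u\le n-t$. The recursion is exactly what encodes this constraint, and the combine estimate is provable by a short case analysis on the closed form $g_n(t)=\binom{n-t+1}{2}+1$ (for $1\le t\le n-1$); its base case $s=0$ is $\binom{a+1}{2}+\binom{b+1}{2}\le\binom{a+b}{2}+1$, i.e.\ $(a-1)(b-1)\ge 0$, and the inductive step follows from the same inequality plus $\binom{m+s}{2}-\binom{m}{2}\ge 2$. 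Assembling the induction, the dominance, and the two reductions gives maximality at the caterpillar, and the value computation above then delivers the $\Theta(n^{p+2})$ claim.
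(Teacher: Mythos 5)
Your proposal is correct, and while your two opening reductions coincide with the paper's, the heart of the argument --- maximality over $\BTT_n$ --- follows a genuinely different route. The paper proceeds by a local exchange: in a non-caterpillar binary tree it picks a deepest internal node $z$ with no leaf child, observes that the two subtrees hanging from $z$ are caterpillar chains, splices them into a single chain, and verifies $\varphi^{(p)}(T')-\varphi^{(p)}(T)>0$ by a long explicit computation (the decomposition into $S_1+S_2+S_3$), finally evaluating the caterpillar via power sums. You instead prove the $p$-independent dominance $N_T(t)\le N_{K}(t)$ of layer counts, $N_T(t)=\#\{(i,j):i\le j,\ \varphi_T(i,j)\ge t\}$ with $K$ the caterpillar, using the root-split recursion $N_T(t)=N_{T_1}(t-1)+N_{T_2}(t-1)$ and the combine estimate $g_a(s)+g_b(s)\le g_{a+b}(s+1)$, and then transfer it to $\varphi^{(p)}$ by Abel summation. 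I have checked the closed form $g_n(t)=\binom{n-t+1}{2}+1$, the recursion, and all cases of the combine estimate, including the degenerate ones $s\ge a$ or $s\ge b$ (which reduce to monotonicity of the binomial coefficient): the argument is complete. What each approach buys: yours is far shorter, avoids the paper's algebra entirely, and in fact proves the stronger statement that the caterpillar maximizes $\sum_{i\le j}f(\varphi_T(i,j))$ for \emph{every} nondecreasing $f:\NN\to\RR$, not just $f(x)=x^p$. The paper's exchange argument, in return, gives strictness --- every non-caterpillar binary tree has strictly smaller $\varphi^{(p)}$, so caterpillars are the only maximizers --- whereas your dominance as stated only shows the caterpillar attains the maximum; this is all the lemma asserts and all that Proposition \ref{prop:diam} consumes, and strictness is recoverable from your scheme anyway, since equality at $t=1$ forces $(a-1)(b-1)=0$ at the root split and the condition propagates by induction.

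Two minor caveats, neither fatal. First, the depth bound $n-1$ you invoke for the crude $O(n^{p+2})$ bound holds on $\TT_n$ and $\BTT_n$ but not literally on $\UTT_n$, where chains of nested taxa allow deeper nodes (the depth is still $O(n)$); as you yourself note, this bound is dispensable once maximality is established. Second, like the paper's own proof, your binarization step does not handle trees whose root is an unlabeled elementary node, which the paper's definition of $\TT_n$ permits and which neither unnesting nor refining a multifurcation removes; such a tree (a caterpillar with an extra elementary root) actually has a strictly larger $\varphi^{(p)}$ value than the caterpillar. This boundary defect is inherited from the paper, is not specific to your argument, and does not affect the $\Theta(n^{p+2})$ asymptotics that the diameter bound requires.
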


\begin{proof}
Arguing as in the case $p=1$, we have that the maximum value of $\varphi^{(p)}(T)$ on $\UTT_n$ is reached on trees in $\TT_n$, because if we replace each nested taxon in a tree by a new  leaf labeled with the same taxon as in Fig. \ref{fig:unnesting}, the value of  $\varphi^{(p)}$ increases. On the other hand, if a tree $T\in \TT_n$ contains a node with $k\geq 3$ children, as in the left hand side of Fig.\ \ref{fig:bininc}, and we replace its subtree rooted at this node as described in the right hand side  of Fig.\ \ref{fig:bininc}, we obtain a new tree $T'\in \TT_n$ with larger $\varphi^{(p)}$ value: the values of $\varphi(i,j)^p$ for $i,j\in L(T_1)\cup\cdots\cup L(T_{k-1})$ increase, and the other values of  $\varphi(i,j)^p$ do not change. This implies that for every non-binary phylogenetic tree $T\in \TT_n$, there always exists a binary phylogenetic tree $T'\in\BTT_n$ such that $\varphi^{(p)}(T')>\varphi^{(p)}(T)$
and in particular  that the maximum value of $\varphi^{(p)}(T)$ on $\UTT_n$ is actually reached on  $\BTT_n$.

\begin{figure}[htb]
\begin{center}
\begin{tikzpicture}[thick,>=stealth,scale=0.25]
  \begin{scope}
\draw(0,0) node[tre] (z1) {}; 
\draw (z1)--(-2,-3)--(2,-3)--(z1);
\draw(0,-2) node  {\footnotesize $T_1$};
\draw(5,0) node[tre] (z2) {}; 
\draw (z2)--(3,-3)--(7,-3)--(z2);
\draw(5,-2) node  {\footnotesize $T_2$};
\draw(8,-2.8) node {.}; 
\draw(8.4,-2.8) node {.}; 
\draw(8.8,-2.8) node {.}; 
\draw(11.5,0) node[tre] (z3) {}; 
\draw (z3)--(9.5,-3)--(14.5,-3)--(z3);
\draw(12,-2) node  {\footnotesize $T_{k-1}$};
\draw(17.5,0) node[tre] (z4) {}; 
\draw (z4)--(15.5,-3)--(20.5,-3)--(z4);
\draw(17.5,-2) node  {\footnotesize $T_{k}$};
\draw(8.75,4) node[tre] (r) {};  
\draw[fill=black!15] (r)--(23,4)--(16,9)--(r);
\draw (r)--(z1);
\draw (r)--(z2);
\draw (r)--(z3);
\draw (r)--(z4);
\draw(8,-4) node  {$T$};
\end{scope}
%\end{tikzpicture}
%\quad
%\begin{tikzpicture}[thick,>=stealth,scale=0.25]
\begin{scope}[xshift=26cm]
\draw(0,0) node[tre] (z1) {}; 
\draw (z1)--(-2,-3)--(2,-3)--(z1);
\draw(0,-2) node  {\footnotesize $T_1$};
\draw(5,0) node[tre] (z2) {}; 
\draw (z2)--(3,-3)--(7,-3)--(z2);
\draw(5,-2) node  {\footnotesize $T_2$};
\draw(8,-2.8) node {.}; 
\draw(8.4,-2.8) node {.}; 
\draw(8.8,-2.8) node {.}; 
\draw(11.5,0) node[tre] (z3) {}; 
\draw (z3)--(9.5,-3)--(14.5,-3)--(z3);
\draw(12,-2) node  {\footnotesize $T_{k-1}$};
\draw(17.5,0) node[tre] (z4) {}; 
\draw (z4)--(15.5,-3)--(20.5,-3)--(z4);
\draw(17.5,-2) node  {\footnotesize $T_{k}$};
\draw(6,2) node[tre] (x) {}; %\etq x
\draw(12,4) node[tre] (r) {};  
\draw[fill=black!15] (r)--(26,4)--(19,9)--(r);
\draw (x)--(z1);
\draw (x)--(z2);
\draw (x)--(z3);
\draw (r)--(x);
\draw (r)--(z4);
\draw(8,-3.9) node  {$T'$};
\end{scope}
\end{tikzpicture}
\end{center}
\caption{\label{fig:bininc} $\varphi^{(p)}(T')>\varphi^{(p)}(T)$.}
\end{figure}

\begin{figure}[htb]
\begin{center}
\begin{tikzpicture}[thick,>=stealth,scale=0.25]
  \begin{scope}
\draw(0,0) node[tre] (1) {};   \draw (1) node {\scriptsize $k$};  
\draw(2,2) node[tre] (a) {}; 
\draw(3,0) node [tre] (2) {}; \draw (2) node {\tiny $k\!\!\!-\!\!\!1$};  
\draw(4,4) node[tre] (b) {}; 
\draw(5,2) node [tre] (3) {}; \draw (3) node {\tiny $k\!\!\!-\!\!\!2$};  
\draw (b)--(5,5);
\draw (5.3,5.3) node {.};
\draw (5.6,5.6) node {.};
\draw (5.9,5.9) node {.};
\draw(7.5,7.5) node[tre] (c) {}; 
\draw (6.2,6.2)--(c);
\draw(8.5,4.5) node [tre] (k) {}; \draw (k) node {\scriptsize $1$};  
\draw(9.5,9.5) node[tre] (d) {}; 
\draw (d) node {\scriptsize $z$};  
\draw[fill=black!15] (d)--(20,9.5)--(15,14)--(d);
\draw(19,0) node[tre] (l) {};   \etq l
\draw(17,2) node[tre] (a1) {}; 
\draw(16,0) node [tre] (l-1) {}; 
\draw (l-1) node {\tiny $l\!\!\!-\!\!\!1$};  
\draw(15,4) node[tre] (b1) {}; 
\draw(14,2) node [tre] (l-2) {}; 
\draw (l-2) node {\tiny $l\!\!\!-\!\!\!2$};  
\draw (b1)--(14,5);
\draw (13.7,5.3) node {.};
\draw (13.4,5.6) node {.};
\draw (13.1,5.9) node {.};
\draw(11.5,7.5) node[tre] (c1) {}; 
\draw (12.8,6.2)--(c1);
\draw(10.5,4.5) node [tre] (k+1) {}; 
\draw (k+1) node {\tiny $k\!\!\!+\!\!\!1$};  
\draw  (a)--(1);
\draw (a)--(2);
\draw (b)--(a);
\draw  (b)--(3);
\draw (c)--(k);
\draw (d)--(c);
\draw  (a1)--(l);
\draw (a1)--(l-1);
\draw (b1)--(a1);
\draw  (b1)--(l-2);
\draw (c1)--(k+1);
\draw (d)--(c1);
\draw(9.5,-2) node  {$T$};
\end{scope}
%\end{tikzpicture}
%\qquad
%\begin{tikzpicture}[thick,>=stealth,scale=0.3]
\begin{scope}[xshift=26cm]
\draw(0,0) node[tre] (1) {};   \draw (1) node {\scriptsize $l$};  
\draw(2,2) node[tre] (a) {}; 
\draw(4,0) node[tre] (2) {};  \draw (2) node {\tiny $l\!\!\!-\!\!\!1$};  
\draw(4,3) node[tre] (b) {}; 
\draw(6,0) node[tre] (3) {}; \draw (3) node {\tiny $l\!\!\!-\!\!\!2$}; 
\draw (b)--(5,3.5);
\draw (5.3,3.65) node {.};
\draw (5.6,3.8) node {.};
\draw (5.9,3.95) node {.};
\draw(7.5,4.75) node[tre] (c) {}; 
\draw (6.2,4.1)--(c);
\draw(10,0) node[tre] (k) {}; \draw (k) node {\tiny $k\!\!\!+\!\!\!1$}; 
\draw(9.5,5.75) node[tre] (d) {}; 
\draw(12,0) node[tre] (k+1) {}; 
\draw (k+1) node {\scriptsize $k$};  
\draw(d)--(10,6); 
\draw (10.3,6.15) node {.};
\draw (10.6,6.3) node {.};
\draw (10.9,6.45) node {.};
\draw(12.5,7.25) node[tre] (e) {}; 
\draw (11.2,6.6)--(e);
\draw (e) node {\scriptsize $z$};  
\draw[fill=black!15] (e)--(22.5,7.25)--(17.5,11)--(e);
\draw(16,0) node[tre] (l) {};   
\draw (l) node {\tiny $1$};  
\draw  (a)--(1);
\draw (a)--(2);
\draw (b)--(a);
\draw  (b)--(3);
\draw (c)--(k);
\draw (d)--(c);
\draw (d)--(k+1);
\draw (e)--(l);
\draw(9.5,-2) node  {$T'$};
\end{scope}
\end{tikzpicture}
\end{center}
\caption{\label{fig:larg1} 
$\varphi^{(p)}(T')>\varphi^{(p)}(T)$.}
\end{figure}

Let now $T\in\BTT_n$ and assume that it is not a caterpillar. Therefore, it has an internal node $z$ of largest depth without any leaf child; in particular, all internal descendant nodes of $z$ have some leaf child. Thus, and up to a relabeling of its leaves, $T$ has the form represented in the left hand side  of Fig. \ref{fig:larg1}, for some $k\geq 2$ and some $l\geq k+2$.  Consider then the tree $T'$  depicted in  right hand side  of  Fig. \ref{fig:larg1},  where the grey triangle represents the same tree in both sides. It turns out that $\varphi^{(p)}(T')-\varphi^{(p)}(T)>0$. Indeed, if $q$ denotes the depth of the node $z$ in both trees, then
$$
\varphi_{T'}(i,j)^p-\varphi_{T}(i,j)^p=
\left\{
\begin{array}{ll}
(q+i)^p-(q+i+1)^p & \mbox{ if $1\leq i=j\leq k-1$}\\
0 & \mbox{ if $i=j=k$}\\
(q+i)^p-(q+i-k+1)^p & \mbox{ if $k+1\leq i=j\leq l-1$}\\
(q+l-1)^p-(q+l-k)^p & \mbox{ if $ i=j=l$}\\
(q+i-1)^p-(q+i)^p & \mbox{ if $1\leq i<j\leq k$}\\
(q+i-1)^p-(q+i-k)^p & \mbox{ if $k+1\leq i<j\leq l$}\\
(q+i-1)^p-q^p & \mbox{ if $1\leq i\leq k<j\leq l$}\\
0 & \mbox{ otherwise}
\end{array}
\right.
$$
Therefore,
$$
\begin{array}{rl}
\varphi^{(p)}(T')-\varphi^{(p)}(T) & \displaystyle= \sum_{i=1}^{k-1}\big((q+i)^p-(q+i+1)^p\big) + \sum_{i=k+1}^{l-1}\big((q+i)^p-(q+i-k+1)^p\big) \\
& \displaystyle\quad +(q+l-1)^p-(q+l-k)^p
 +\sum_{i=1}^{k-1}(k-i)\big((q+i-1)^p-(q+i)^p\big)\\
& \displaystyle\quad+
\sum_{i=k+1}^{l-1}(l-i)\big((q+i-1)^p-(q+i-k)^p\big)+
\sum_{i=1}^{k}(l-k)\big((q+i-1)^p-q^p\big)\\
 & \displaystyle= (q+1)^p-(q+k)^p + \sum_{i=1}^{l-k-1}\big((q+k+i)^p-(q+1+i)^p\big) \\
& \displaystyle\quad +(q+l-1)^p-(q+l-k)^p
 +\sum_{i=1}^{k-1}(k-i)\big((q+i-1)^p-(q+i)^p\big)\\
& \displaystyle\quad+
\sum_{i=1}^{l-k-1}(l-k-i)\big((q+k+i-1)^p-(q+i)^p\big)+
\sum_{i=1}^{k}(l-k)\big((q+i-1)^p-q^p\big)\\
\end{array}
$$
To prove that this sum is non-negative, let us write it as
$$
\varphi^{(p)}(T')-\varphi^{(p)}(T) = S_1 + S_2+S_3,
$$
where
$$
\begin{array}{rl}
S_1 & = \displaystyle\sum_{i=1}^{k-1}(k-i)\big((q+i-1)^p-(q+i)^p\big)+ \sum_{i=1}^{k}(l-k)\big((q+i-1)^p-q^p\big) \\
S_2 & =\displaystyle\sum_{i=1}^{l-k-1}\big((q+k+i)^p-(q+1+i)^p\big) +
\sum_{i=1}^{l-k-1}(l-k-i)\big((q+k+i-1)^p-(q+i)^p\big) \\
S_3 & = (q+1)^p-(q+k)^p+(q+l-1)^p-(q+l-k)^p
\end{array}
$$
Then
$$
\begin{array}{rl}
S_1 & = \displaystyle\sum_{i=1}^{k-1}(k-i)\big((q+i-1)^p-(q+i)^p\big)+ \sum_{i=1}^{k}(l-k)\big((q+i-1)^p-q^p\big), \\ 
& = \displaystyle\sum_{i=1}^{k-1} (k-i) (q+i-1)^p -\sum_{i=1}^{k-1} (k-i) (q+i)^p +\sum_{i=1}^{k}(l-k)\big((q+i-1)^p-q^p\big), \\
& =  \displaystyle\sum_{i=1}^{k-1} (k-i) (q+i-1)^p - \sum_{i=2}^{k} (k-i+1) (q+i-1)^p +(l-k)\sum_{i=1}^{k}(q+i-1)^p- k (l-k) q^p, \\
& = \displaystyle\sum_{i=1}^{k-1} (l-k-1) (q+i-1)^p + k q^p- (q+k-1)^p + (l-k) (q+k-1)^p-k (l-k) q^p, \\
& = \displaystyle  (l-k-1)\sum_{i=1}^{k}  \big((q+i-1)^p-q^p\big) >0\\[2ex]
S_2 & = \displaystyle\sum_{i=1}^{l-k-1}\big((q+k+i)^p-(q+1+i)^p\big) +
\sum_{i=1}^{l-k-1}(l-k-i)\big((q+k+i-1)^p-(q+i)^p\big) \\
& = \displaystyle \sum_{i=1}^{l-k-1} \big((q+k+i)^p-(q+1+i)^p\big) +
\sum_{i=0}^{l-k-1}(l-k-i-1)\big((q+k+i)^p-(q+i+1)^p\big) \\
& = \displaystyle \sum_{i=1}^{l-k-1} (l-k-i) \big((q+k+i)^p-(q+1+i)^p\big) +(l-k-1)\big( (q+k)^p - (q+1)^p\big)\\
& > \displaystyle (l-k-1)\big( (q+k)^p - (q+1)^p\big).

\end{array}
$$
and therefore
$$
\begin{array}{l}
\varphi^{(p)}(T')-\varphi^{(p)}(T)  = S_1+S_2+S_3\\
\qquad >  (l-k-1)\big( (q+k)^p - (q+1)^p\big) +(q+1)^p-(q+k)^p+(q+l-1)^p-(q+l-k)^p\\
\qquad = (l-k-2) \big( (q+k)^p -(q+1)^p\big) +(q+l-1)^p-(q+l-k)^p>0.
\end{array}
$$

This implies that no tree other than a rooted caterpillar can have the largest  $\varphi^{(p)}$ value in $\BTT_n$, and hence also in $\TT_n$ and $\UTT_n$.

Finally, if $K_n$ denotes the  rooted caterpillar with $n$ leaves in Fig. \ref{fig:exs}.(c),
$$
\varphi_{K_n}(i,j)^p=\left\{
\begin{array}{ll}
(n-1)^p & \mbox{ if $i= j=1$}\\
(n-i+1)^p & \mbox{ if $2\leq i= j\leq n$}\\
(n-j)^p & \mbox{ if $1\leq i< j\leq n$}\\
\end{array}\right.
$$
and thus
$$
\begin{array}{rl}
\varphi^{(p)}(K_n) & =
(n-2)\cdot 1^p+(n-3)\cdot 2^p+\cdots +2\cdot (n-3)^p+1\cdot (n-2)^p\\
& \qquad\quad+1^p+2^p+\cdots +(n-2)^p+(n-1)^p+(n-1)^p\\
& = (n-1)\cdot 1^p+(n-2)\cdot 2^p+\cdots +3\cdot (n-3)^p+2\cdot (n-2)^p+(n-1)^p+(n-1)^p\\
& = \displaystyle \sum_{k=1}^{n-1} (n-k)\cdot k^p+(n-1)^p
\end{array}
$$
Now, it turns out that
\begin{equation}\label{powersum}
\sum_{k=1}^{n-1} k^m=\frac{1}{m+1}n^{m+1}+O(n^m).
\end{equation}
This property is well known for natural numbers $m\in \NN$ \cite{MW}. For arbitrary real numbers $m>0$, it derives from the fact that
$$
\int_1^{n-1} (x-1)^m dx \leq  \sum_{k=1}^{n-1} k^m \leq \int_1^{n-1} x^m dx,
$$
and then
$$
\begin{array}{l}
\displaystyle \int_1^{n-1} (x-1)^mdx=\frac{1}{m+1}(n-2)^{m+1}=\frac{1}{m+1}n^{m+1}+O(n^m)
\\
\displaystyle \int_1^{n-1} x^mdx=\frac{1}{m+1}(n-1)^{m+1}=\frac{1}{m+1}n^{m+1}+O(n^m)
\end{array}
$$

So, by identity (\ref{powersum}), we have that
$$
 \sum_{k=1}^{n-1} (n-k)\cdot k^p+(n-1)^p= n\sum_{k=1}^{n-1}k^p- \sum_{k=1}^{n-1}k^{p+1}+O(n^p)
 =\Big(\frac{1}{p+1}-\frac{1}{p+2}\Big)n^{p+2}+O(n^{p+1})
 $$
and hence $\varphi^{(p)}(K_n) $ is in $\Theta(n^{p+2})$. 
\end{proof} 

Therefore, $O(n^{(p+2)/p})\leq \Delta_p(X_n)\leq O(n^{(p+2)/p})$, which shows that
the diameter of $d_{\varphi,p}$  on $\UTT_n$, $\TT_n$ and $\BTT_n$ is indeed in $\Theta(n^{(p+2)/p})$.

We finally prove the case $p=0$, which needs a completely different argument.
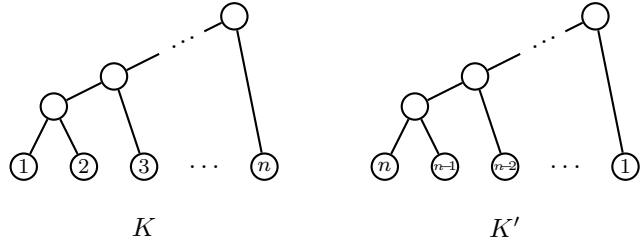
\begin{figure}[htb]
\begin{center}
\begin{tikzpicture}[thick,>=stealth,scale=0.4]
  \begin{scope}
\draw(0,0) node [tre] (1) {};  \etq 1
\draw(2,0) node [tre] (2) {};  \etq 2
\draw(4,0) node [tre] (3) {};  \etq 3
\draw(6,0) node  {$\ldots$};  
\draw(8,0) node [tre] (n) {};  \etq n
\draw(1,2) node[tre] (a) {};
\draw(3,3) node[tre] (b) {};
\draw(7,5) node[tre] (r) {};
\draw(5,4) node  {.};
\draw(5.3,4.15) node  {.};
\draw(5.6,4.3) node  {.};
\draw (a)--(1);
\draw (a)--(2);
\draw (b)--(3);
\draw (b)--(a);
\draw (b)-- (4.5,3.75);
\draw (r)-- (6,4.5);
\draw (r)-- (n);
\draw(4,-2) node {$K$};
\end{scope}
%\end{tikzpicture}
%\qquad
%\begin{tikzpicture}[thick,>=stealth,scale=0.4]
\begin{scope}[xshift=12cm]
\draw(0,0) node [tre] (1) {};   \draw (1) node {\footnotesize $n$}; 
\draw(2,0) node [tre] (2) {};  \draw (2) node {\tiny $n\!\!\!-\!\!\!1$}; 
\draw(4,0) node [tre] (3) {};  \draw (3) node {\tiny $n\!\!\!-\!\!\!2$}; 
\draw(6,0) node  {$\ldots$};  
\draw(8,0) node [tre] (n) {};   \draw (n) node {\footnotesize $1$}; 
\draw(1,2) node[tre] (a) {};
\draw(3,3) node[tre] (b) {};
\draw(7,5) node[tre] (r) {};
\draw(5,4) node  {.};
\draw(5.3,4.15) node  {.};
\draw(5.6,4.3) node  {.};
\draw (a)--(1);
\draw (a)--(2);
\draw (b)--(3);
\draw (b)--(a);
\draw (b)-- (4.5,3.75);
\draw (r)-- (6,4.5);
\draw (r)-- (n);
\draw(4,-2) node {$K'$};
\end{scope}
\end{tikzpicture}
\end{center}
\caption{\label{fig:twocat} 
The caterpillars used in the proof of Lemma \ref{lem:diam0}. }
\end{figure}

\begin{lemma}\label{lem:diam0}
The diameter of $d_{\varphi,0}$  on $\UTT_n$, $\TT_n$ and $\BTT_n$ is in $\Theta(n^2)$.
\end{lemma}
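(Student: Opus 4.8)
The plan is to prove the diameter is in $\Theta(n^2)$ by establishing matching upper and lower bounds. The upper bound is immediate: $d_{\varphi,0}(T,T')$ counts how many of the $\binom{n+1}{2}$ entries of the cophenetic vectors $\varphi(T)$ and $\varphi(T')$ differ, so $d_{\varphi,0}(T,T')\leq \binom{n+1}{2}$ for every $T,T'\in\UTT_n$. Since $\BTT_n\subseteq \TT_n\subseteq \UTT_n$, this bounds $\Delta_0(X_n)$ by $O(n^2)$ on all three spaces at once.

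For the lower bound I would exhibit a single pair of binary caterpillars whose cophenetic vectors disagree in $\Theta(n^2)$ positions, namely the two caterpillars $K$ and $K'$ of Fig.~\ref{fig:twocat}: $K$ is the caterpillar with leaves labelled $1,\dots,n$ from the deepest cherry upwards, and $K'$ is obtained by reversing the labelling, i.e.\ replacing each label $k$ by $n+1-k$. Both lie in $\BTT_n$ and hence in all three spaces, so $d_{\varphi,0}(K,K')$ is a valid lower bound for each diameter. Using the explicit cophenetic values of a caterpillar computed in the previous lemma, one has $\varphi_K(i,j)=n-j$ for $1\leq i<j\leq n$; for $K'$, the LCA of two leaves is the branch point of the one with the smaller label, which sits at depth $i-1$, so $\varphi_{K'}(i,j)=i-1$ for $1\leq i<j\leq n$.

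The key step is then a short counting argument on the off-diagonal entries. For $i<j$ the two values coincide precisely when $n-j=i-1$, that is when $i+j=n+1$, and there are exactly $\lfloor n/2\rfloor$ such pairs among the $\binom{n}{2}$ off-diagonal positions. Hence
$$
d_{\varphi,0}(K,K')\geq \binom{n}{2}-\Big\lfloor \tfrac{n}{2}\Big\rfloor,
$$
which is in $\Theta(n^2)$. This already gives $\Delta_0(X_n)\geq \Theta(n^2)$ for each of $\UTT_n$, $\TT_n$, $\BTT_n$, and together with the upper bound it yields $\Delta_0(X_n)\in\Theta(n^2)$; note that the diagonal entries need not even be examined, since the off-diagonal disagreements alone contribute the required $\Theta(n^2)$.

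I do not expect a genuine obstacle here, as the argument is essentially a counting exercise. The only point requiring care is the clean derivation of the cophenetic values of the reversed caterpillar $K'$ and the bookkeeping of which pairs $(i,j)$ coincide; in particular I would verify that the bottom-cherry pair $(n-1,n)$ of $K'$ still obeys the formula $\varphi_{K'}(i,j)=i-1$, so that the single expression is valid uniformly over all $i<j$ and the count of agreements is exactly $\lfloor n/2\rfloor$.
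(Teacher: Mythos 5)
Your proposal is correct and follows essentially the same route as the paper: the trivial upper bound $\binom{n+1}{2}$ from the length of the cophenetic vector, and a lower bound from the same pair of reversed caterpillars $K,K'$ with $\varphi_K(i,j)=n-j$ and $\varphi_{K'}(i,j)=i-1$ for $i<j$. The only (harmless) difference is that you count agreements among off-diagonal entries only, getting $d_{\varphi,0}(K,K')\geq\binom{n}{2}-\lfloor n/2\rfloor$, whereas the paper also inspects the diagonal entries to bound the total number of agreements by $(n+1)/2$; both bounds are in $\Theta(n^2)$, so the conclusion is the same.
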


\begin{proof}
Since the cophenetic vector of a tree $T\in \UTT_n$ lies in $\RR^{n(n+1)/2}$, it is clear that
$d_{\varphi,0}(T_1,T_2)\leq n(n+1)/2$, for every $T_1,T_2\in \UTT_n$.
Now, consider the pair of rooted caterpillars with $n$ leaves depicted in Fig. \ref{fig:twocat}. We have that
$$
\begin{array}{lll}
\varphi_K(i,j)= n-j& \varphi_{K'}(i,j)=i-1  & \mbox{ for every $1\leq i<j\leq n$}\\
\varphi_K(i,i)= n-i+1& \varphi_{K'}(i,i)=i  & \mbox{ for every $2\leq i\leq n-1$}\\
\varphi_K(1,1)= n-1& \varphi_{K'}(1,1)=1  & \\
\varphi_K(n,n)= 1& \varphi_{K'}(n,n)=n-1  & 
\end{array}
$$
This shows that the number of pairs $(i,j)$, $1\leq i\leq j\leq n$, such that $\varphi_K(i,j)=\varphi_{K'}(i,j)$ is at most $(n+1)/2$, and therefore that $d_{\varphi,0}(K,K')$ is at least  $(n^2-1)/2$. 
So, the diameter of $d_{\varphi,0}$  on $ \UTT_n$ is bounded from above by $O(n^2)$, and its diameter on $\BTT_n$ is bounded from below by  $O(n^2)$, which implies that the 
diameter of $d_{\varphi,0}$  on $\UTT_n$, $\TT_n$ and $\BTT_n$ is in $\Theta(n^2)$.
\end{proof}

\end{bmcformat}
\end{document}